\crefname{table}{Tab.}{Tabs.}
\Crefname{table}{Table}{Tables}
\crefname{figure}{Fig.}{Figs.}
\Crefname{figure}{Figure}{Figures}
\crefname{section}{Sec.}{Secs.}
\Crefname{section}{Section}{Sections}
\crefname{equation}{Eq.}{Eqs.}
\Crefname{equation}{Equation}{Equations}
\crefname{appendix}{App.}{Apps.}
\Crefname{appendix}{Appendix}{Appendices}
\newcommand{\beq}{\begin{equation}}
\newcommand{\eeq}{\end{equation}}
\newcommand{\beqnn}{\begin{equation*}}
\newcommand{\eeqnn}{\end{equation*}}
\newcommand{\bea}{\begin{eqnarray}}
\newcommand{\eea}{\end{eqnarray}}
\newcommand{\beann}{\begin{eqnarray*}}
\newcommand{\eeann}{\end{eqnarray*}}
\newcommand{\bes} {\begin{subequations}}
\newcommand{\ees} {\end{subequations}}
\newcommand{\e}{\text{e}}
\renewcommand{\vec}{\bm}
\newcommand{\dtau}{\mathrm{d}\tau}
\newcommand{\lap}[1]{\mathcal{L}\{ #1 \}}
\newcommand{\iq}{\vec{i}_q}
\newcommand{\Siq}{S_{\vec{i}_q}}
\newcommand{\Sip}{S_{\vec{i}_p}}
\newcommand{\Sir}{S_{\vec{i}_r}}
\newcommand{\avg}[1]{ \langle #1 \rangle}
\newcommand{\diag}[1]{\text{diag}(#1)}
\newcommand{\wtA}{\widetilde{A}}
\newcommand{\wtB}{\widetilde{B}}
\newcommand{\wtlam}{\widetilde{\Lambda}}
\newcommand{\wtP}{\widetilde{P}}
\newcommand{\wtD}{\widetilde{D}}
\newcommand{\Hdiag}{H_{\text{diag}}}
\newcommand{\Hoffdiag}{H_{\text{offdiag}}}
\newcommand*{\estimates}{\mathrel{\widehat=}}
\newcommand{\cb}[1]{\hat{b}^{\dagger}_{#1}}
\newcommand{\db}[1]{\hat{b}_{#1}}
\newcommand{\no}[1]{\hat{n}_{#1}}
\newtheorem{theorem}{Theorem}
\newtheorem{corollary}{Corollary}
\newtheorem{definition}{Definition}
\begin{document}
\title{Advanced measurement techniques in  quantum Monte Carlo: \\The permutation matrix representation approach}
\author{Nic Ezzell}
\affiliation{Department of Physics and Astronomy and Center for Quantum Information Science \& Technology, University of Southern California, Los Angeles, California 90089, USA}
\affiliation{Information Sciences Institute, University of Southern California, Marina del Rey, California 90292, USA}
\author{Itay Hen}
\email{itayhen@isi.edu}
\affiliation{Department of Physics and Astronomy and Center for Quantum Information Science \& Technology, University of Southern California, Los Angeles, California 90089, USA}
\affiliation{Information Sciences Institute, University of Southern California, Marina del Rey, California 90292, USA}
\begin{abstract}
    \noindent In a typical finite temperature quantum Monte Carlo (QMC) simulation, estimators for simple static observables such as specific heat and magnetization are known.  With a great deal of system-specific manual labor, one can sometimes also derive more complicated non-local or even dynamic observable estimators. { In contrast, we show that arbitrary static observables can be estimated within the permutation matrix representation (PMR) flavor for any Hamiltonian.} We then generalize these results to general imaginary-time correlation functions and non-trivial integrated susceptibilities thereof. We demonstrate the practical versatility of our method by estimating various non-local, random observables for the transverse-field Ising model on a square lattice and a toy random model.  
\end{abstract}
\maketitle

\section{Introduction }\label{Xsec1-1}\label{Xsec1}

Since their inception~\cite{suzuki1976relationship, hirsch1982monte}, quantum Monte Carlo (QMC) techniques have become indispensable in the study of many-body quantum systems~\cite{landau2015GuideMonteCarlo, suzuki2012quantum, de1985monte, assaad2008world}. In finite-temperature QMC---the setting of the present work---the overarching goal is to compute thermal expectation values of observables. The central idea is to express the partition function as a sum over efficiently computable weights that may subsequently be sampled using Markov chain Monte Carlo~\cite{mareschal2021early}. Modern QMC frameworks achieving this include continuous-time worldline approaches~\cite{beard1996simulations,prokof1998exact, evertz2003loop, kawashima2004recent}, diagrammatic determinantal methods~\cite{rubtsov2005continuous, werner2006continuous, gull2011continuous, rombouts1999quantum, burovski2006fermi, iazzi2015efficient}, and the stochastic series expansion (SSE)~\cite{sandvik1992GeneralizationHandscombQuantum, sandvik1999StochasticSeriesExpansion, sandvik2019stochastic, melko2013stochastic}.

{
In this work, we derive general estimators for the permutation matrix representation (PMR) flavor of QMC, or PMR-QMC. The PMR-QMC algorithm is built upon two conceptual steps. First, one expresses the quantum Hamiltonian as a sum of diagonal matrices multiplied by permutation matrices~\cite{gupta2020PermutationMatrixRepresentation}. For an appropriately chosen set of permutations--a notion that we clarify and broaden in this work--any product of non-identity permutations necessarily has no fixed points, enforcing a no-branching condition. Second, one performs an off-diagonal series expansion of the partition function~\cite{hen2018OffdiagonalSeriesExpansion,albash2017OffdiagonalExpansionQuantum}. Although this expansion begins analogously to SSE, with a Taylor series and an operator-string representation, it proceeds via a re-summation in which infinitely many diagonal contributions are consolidated into a divided difference of the exponential (DDE). This re-summation yields several important advantages.

First, it enables the quantum partition function to be viewed as an expansion about the classical contribution to the Hamiltonian, thereby allowing PMR-QMC to interpolate smoothly between classical and quantum regimes as the off-diagonal strength is tuned. Second, it dramatically accelerates the equilibration of glassy systems, which equilibrate substantially faster in PMR-QMC than in path-integral QMC or SSE~\cite{albash2017OffdiagonalExpansionQuantum,gupta2020CalculatingDividedDifferences}. This speed-up arises both from the reduction to the classical partition function in appropriate limits and from the stability and efficiency with which the DDE can be updated during simulation~\cite{gupta2020CalculatingDividedDifferences}. Third, the DDE possesses a rich collection of structural identities~\cite{zeng2025inequalities} that prove instrumental in deriving non-trivial PMR-QMC estimators~\cite{ezzell2025universal}.}
Finally, the construction of QMC algorithms within PMR-QMC has been shown to be highly automatable~\cite{barash2024QuantumMonteCarlo}.

In particular, there now exist deterministic and canonical procedures for writing arbitrary spin-1/2 Hamiltonians~\cite{barash2024QuantumMonteCarlo}, Bose--Hubbard models on general lattices~\cite{akaturk2024quantum}, arbitrary higher-spin Hamiltonians~\cite{babakhani2025quantum}, and fermionic Hamiltonians~\cite{babakhani2025quantum} in PMR form. Once expressed in this representation, one can automatically identify the so-called \emph{fundamental cycles} from which ergodic and detailed-balance-preserving QMC update rules are constructed. {Although recent developments in SSE have enabled automated searches for directed-loop updates~\cite{weber2022quantum}, PMR-QMC also automates the decomposition of an arbitrary Hamiltonian itself. Consequently, PMR-QMC can be used in a genuinely black-box manner~\cite{ezzell2025universal}.} This automated framework has been successfully applied to a broad spectrum of models~\cite{barash2024QuantumMonteCarlo, akaturk2024quantum, babakhani2025quantum}, ranging from standard XY and\vadjust{\vfill\pagebreak} Bose--Hubbard models on rectangular lattices to topological systems and even highly non-local random models. {Of particular relevance, Ref.~\cite{ezzell2025universal} employs a single spin-1/2 code~\cite{ezzell2025code} to study the critical behavior of the square-lattice TFIM, the square-lattice XXZ model, and a random geometrically non-local model, estimating Hamiltonian-based observables such as the energy and fidelity susceptibility. In contrast, SSE implementations--even those incorporating partial automation~\cite{weber2022quantum}--are inherently tailored to specific spin-1/2 models~\cite{bauer2011ALPSProjectRelease,melko2013stochastic,merali2024stochastic} due to the model- and geometry-dependent nature of the Hamiltonian decomposition.}

In this work, we ask a simple question: {Can PMR-QMC also estimate the thermal expectation value of \emph{arbitrary} operators? In principle, we find the answer is \emph{yes}. We show that one can derive unbiased PMR-QMC estimators for arbitrary static operators and subsequently generalize these estimators to dynamic quantities of physical interest. These include estimators for imaginary-time correlation functions and for their integrated susceptibilities, which play key roles in probing spectral features~\cite{hen2012excitation, blume1997excited, blume1998excited} and in diagnosing quantum criticality~\cite{albuquerque2010QuantumCriticalScaling, ezzell2025universal, schwandt2009QuantumMonteCarlo, wang2015FidelitySusceptibilityMade}.} As a demonstration, we apply our approach to estimate a wide variety of non-trivial observables in the transverse-field Ising model on the square lattice {as well as in a toy model with 100 spins}. These include sums of random, non-local Pauli strings and associated dynamic response functions. To our knowledge, no existing QMC framework is capable of evaluating such general observables.

To support broad adoption of our method, we provide open-source code~\cite{ezzell2025code}. This implementation builds on the well-tested spin-1/2 PMR-QMC code~\cite{barash2024PmrQmcCode} developed in Ref.~\cite{barash2024QuantumMonteCarlo} and therefore currently supports arbitrary spin-1/2 systems. Nevertheless, the estimators derived in this work are completely general. The logic underlying them applies without modification to any Hamiltonian written in PMR form, and our implementation can be readily ported into existing or future PMR-QMC codes for other classes of models~\cite{babakhani2025quantum, akaturk2024quantum}, including the recently released higher-spin code~\cite{barash2025highspincode}.

{

\section{Technical overview of estimators we derive}\label{Xsec2-2}\label{Xsec2}
\label{sec:overview-of-estimators}

Given any finite-dimensional Hamiltonian $H$, we derive an unbiased QMC estimator for the thermal expectation value of an arbitrary operator $A$. Our reasoning relies on the structural features of the permutation matrix representation (PMR). Although PMR was originally proposed in Ref.~\cite{gupta2020PermutationMatrixRepresentation}, we provide in this work a more rigorous and broadly applicable formulation and further develop its foundational properties in {\cref{sec:pmr-revisited}}. The essential observation is that any square matrix may be expressed as a ``linear combination'' of permutation matrices. Quantum Hamiltonians can therefore be written as $H = \sum_j D_j P_j$, where each $P_j$ is a matrix representation of an element of a permutation group possessing \emph{regular natural action} (defined formally in {\cref{sec:pmr-revisited}}), and where the associated $D_j$ are diagonal matrices.\footnote{This bears a conceptual resemblance to the Birkhoff--von Neumann decomposition~\cite{dufosse2018further}.}

This decomposition underlies the well-developed PMR-QMC framework, which applies to arbitrary Hamiltonians~\cite{gupta2020PermutationMatrixRepresentation, albash2017OffdiagonalExpansionQuantum, barash2024QuantumMonteCarlo, babakhani2025quantum, gupta2020elucidating, ezzell2025universal} and which we briefly review in {\cref{sec:pmr-qmc}}. Crucially, since any square matrix admits such a decomposition, the same logic allows us to express a general operator as $A = \sum_k \widetilde{D}_k P_k$, regardless of Hermiticity. By decomposing both $H$ and $A$ in the same PMR group basis, we obtain a unified framework for reasoning---in purely group-theoretic terms---about the thermal expectation value $\langle A \rangle = \Tr[A e^{-\beta H}] / \Tr[e^{-\beta H}]$. Combined with the off-diagonal series expansion~\cite{albash2017OffdiagonalExpansionQuantum,hen2018OffdiagonalSeriesExpansion}, this approach allows us to construct unbiased estimators for arbitrary operators $A$, as detailed in {\cref{sec:static-operator-estimators}}.

Before addressing the completely general case, we first derive estimators for several important subclasses of operators for which more efficient formulas exist. These include arbitrary diagonal operators such as $\Hdiag$ ({\cref{subsec:estimating-pure-diag-operators}}), powers of the Hamiltonian $H^k$ ({\cref{subsec:estimating-f(H)-things}}), and the off-diagonal component of the Hamiltonian, $\Hoffdiag$ ({\cref{subsec:estimating-Hoffdiag}}). Throughout, we refer to observables involving $\Hdiag$, $H$, $\Hoffdiag$, or their combinations as \emph{Hamiltonian-based observables}. Once estimators for the core primitive quantities have been obtained, one may combine them to evaluate derived observables such as the specific heat~\footnote{This familiar expression follows from $\langle H \rangle = -\partial_\beta \ln \mathcal{Z}$ together with the variable change $\beta = 1/T$ in natural units.},
\begin{equation}
 \label{eq:specific-heat}
 C_v \equiv \frac{\partial \avg{H}}{\partial T} = \beta^2 (\avg{H^2} - \avg{H}^2),
\end{equation}
using standard jackknife binning techniques~\cite{berg2004introduction, barash2024QuantumMonteCarlo}, which are implemented directly in our code~\cite{ezzell2025code}.

We then extend these ideas to progressively more general classes of operators ({\cref{subsec:estimating-dl-pl}} and \cref{subsec:estimating-al-pl}), culminating in the notion of a \emph{canonical operator}, for which an unbiased estimator exists. For some Hamiltonians, such as those with transverse fields, the canonical form may be written down straightforwardly; for others, it requires more careful handling, and failure to use the canonical form can introduce bias ({\cref{subsec:illustrative-examples}}). We show that any operator can, in principle, be transformed into canonical form ({\cref{subsec:estimating-arbitrary-operators}}). The underlying justification rests on a group-theoretic criterion determining which permutations $P_j$ contribute to $\langle A \rangle$ for a given Hamiltonian. We further demonstrate that the most general estimator may encounter a rare-event sampling issue; however, this difficulty can be mitigated by an improved estimator ({\cref{subsec:improved-estimator}}) that exploits the fact that the PMR group may always be chosen to be Abelian.

Following a short interlude on estimator complexity and total simulation cost ({\cref{subsec:complexity-interlude}}), we generalize our static estimators to dynamic ones ({\cref{sec:dynamic-operator-estimators}}).} Given any operator $A$, dynamic observables are defined in terms of the imaginary-time evolution
\begin{equation}
 \label{eq:imag-time-evolved}
 A(\tau) \equiv e^{\tau H} A e^{-\tau H}.
\end{equation}
An important example is the imaginary-time correlator
\begin{equation}
 \label{eq:AB-imag-time-correlator}
 \avg{A(\tau) B},
\end{equation}
whose estimator we derive in {\cref{subsec:estimating-atau-b}}. We also obtain exact estimators for the associated integrated quantities,
\begin{equation}
 \label{eq:AB-EintFint}
 \int_0^\beta \avg{A(\tau) B} \dtau
\qquad\text{and}\qquad
 \int_0^{\beta/2} \tau \avg{A(\tau) B} \dtau,
\end{equation}
which are discussed in {\cref{subsec:estimating-esus}} and {\cref{subsec:estimating-fsus}}. As in the static case, dynamic estimators for arbitrary operators $A$ and $B$ exist once these operators are expressed in canonical form. Importantly, our formulas do not rely on numerical quadrature: the integrals of $\avg{A(\tau) B}$ can be evaluated analytically, yielding exact closed-form expressions that themselves take the form of PMR-QMC estimators~\cite{ezzell2025universal}.

{
These results strictly generalize and formalize the estimators for the energy susceptibility (ES) and fidelity susceptibility (FS) developed in Ref.~\cite{ezzell2025universal} for studying quantum criticality. More explicitly, for $H(\lambda) = H_0 + \lambda H_1$, the ES and FS may be written~\cite{ezzell2025universal, albuquerque2010QuantumCriticalScaling,schwandt2009QuantumMonteCarlo} as
\begin{equation}
 \label{eq:Esus}
 \chi_E^{H_1} \equiv \int_0^\beta (\avg{H_1(\tau) H_1} - \avg{H_1}^2) \dtau = \int_0^\beta \avg{H_1(\tau) H_1} \dtau - \beta \avg{H_1}^2
\end{equation}
and
\begin{equation}
 \label{eq:Fsus}
 \chi_F^{H_1} \equiv \int_0^{\beta/2} \tau (\avg{H_1(\tau) H_1} - \avg{H_1}^2) \dtau = \int_0^{\beta/2} \tau \avg{H_1(\tau) H_1} \dtau - \frac{\beta^2}{8} \avg{H_1}^2,
\end{equation}
which correspond to the particular choice $A = B = H_1$ considered here. We anticipate that our more general dynamic estimators will find broad application, including in the analysis of spectral properties~\cite{hen2012excitation, blume1997excited, blume1998excited}, which we leave for future investigation.
}

\section{The permutation matrix representation revisited}\label{Xsec3-3}\label{Xsec3}
\label{sec:pmr-revisited}

The permutation matrix representation (PMR) is a special decomposition of square matrices in terms of permutations. As we discuss, the PMR formalism was originally conceived for its usefulness in QMC~\cite{gupta2020PermutationMatrixRepresentation}, but it has since found applications in mitigating the sign problem~\cite{gupta2020elucidating,hen2021determining}, in Dyson series expansions~\cite{kalev2021integral}, quantum algorithms for Hamiltonian simulation~\cite{kalev2021quantum, chen2021quantum}, and recently in generalizing the Feynman path integral to discrete systems~\cite{kalev2024feynman}. Despite this, the PMR decomposition as originally conceived lacks a precise and rigorous general definition. To develop a general theory of measurement within PMR-QMC, we first propose such a precise, rigorous, general PMR form definition.

{
To understand this definition, some understanding of elementary group theory is necessary; this material is contained in numerous introductory textbooks or lecture notes~\cite{gallian2021contemporary,wemyss2013introduction,donaldson2024introduction}. We briefly summarize key facts and introduce notation. Some notation is nonstandard in order to prevent confusion with QMC notation later. A \emph{permutation} is a re-ordering of $d$ elements. For example, the $3$-cycle $(012)$ describes the permutation $0 \rightarrow 1, 1 \rightarrow 2, 2 \rightarrow 0$. The set of all permutations over $d$ elements is the \emph{symmetric group} $S_d$ which has $d!$ elements. A \emph{subgroup} $G < S_d$ is called a \emph{permutation group}. The number of elements in $G$, denoted $|G|$ is called the order of the group. Abstract elements of $G$ are denoted with $\sigma, \pi \in G$. The identity element, denoted $e_G$ or $1$, has trivial cycle form $().$

Permutations have a \emph{natural group action} on $\Omega_d = \{0, \ldots, d-1\}$. That is, given a group element $\sigma \in G$, we can view $\sigma$ as a function such that $\sigma(x) = y$ for $x, y \in \Omega_d$. This group action is called \emph{regular} or \emph{simply transitive} provided the action is fixed-point free and transitive. A group action is fixed-point free if $\sigma\cdot x = x$ if and only if $\sigma = e_G$, i.e., no non-identity element of $G$ fixes a point of $\Omega_d$. The action is transitive provided for all $x, y \in \Omega_d$, there exists a $\sigma \in G$ such that $\sigma(x) = y$. When the action is regular, the $\sigma$ sending $x$ to $y$ is unique. Consequently, permutation groups with regular natural action have order $d$, i.e., $|G| = d$.

For any $d$, there always exists a cyclic group $G < S_d$ with regular natural action, e.g., $\avg{(0\ldots d-1)}_g$. This group is defined as follows. We denote repeated composition of a permutation by raising it to a power, i.e., $\sigma^2(x) = \sigma(\sigma(x)).$ Then the cyclic group $\avg{(0\ldots d-1)}_g = \{(0\ldots d-1), (0\ldots d-1)^2, \ldots, (0\ldots d-1)^d\}$ has $d$ elements because $(0\ldots d-1)^d = ()$. Clearly, this group is Abelian (all elements commute with each other). Later we also use $\sigma^{-1}$ to denote the inverse of a permutation, i.e., $\sigma^{-1}\sigma = \sigma \sigma^{-1} = 1.$

Finally, given a permutation $\sigma \in S_d$, there exists a mapping to a square $d \times d$ permutation matrix $P_\sigma$. This is called a permutation matrix representation (PMR) of $\sigma$, and it is not unique. We take the column convention in this work---that is, if $\sigma(x) = y$, we define $P_\sigma$ such that $P_\sigma \ket{x} = \ket{y}$. By construction, $P_{\sigma^{-1}} = P_\sigma^T.$ Of course, this itself is not uniquely determined up to how we define our basis representations of $\ket{x}, \ket{y}$, but this freedom does not matter for claims in this section.
}

\begin{definition}[Permutation matrix representation (PMR) form]
\label{def:pmr-form}
 We say a square $d \times d$ matrix $A$ is given in PMR form provided,
 {
 \begin{equation}
 A = \sum_{\sigma \in G < S_d} D_\sigma P_\sigma,
 \label{Xeqn7-7}
\end{equation}
 where $D_\sigma$ are $d \times d$ diagonal matrices and $P_\sigma$ are $d \times d$ permutation matrices. The $P_\sigma$ are \emph{permutation matrix representations} of the abstract permutations $\sigma$ that form a subgroup of the symmetric group $G < S_d$. As part of our definition, we require that $G$ has regular natural action. We call the matrix representation of $G$, $\{P_\sigma\}$, a ``PMR-basis''. If $G$ is Abelian, we say $A$ is in Abelian PMR form. If $G = \avg{(0\ldots d-1)}_g$, we say $A$ is in canonical PMR form. Finally, we use the shorthand $P_\sigma \in A$ if $D_\sigma \neq 0$.}

\label{Xenun6-1}
\end{definition}

{
To better understand this decomposition, suppose we wish to extract the $(x,y)$ element of $A$, $A_{x,y}$. Since $G$ has regular natural action, there is a unique $P_{\sigma'}$ such that $P_{\sigma'} \ket{y} = \ket{x}$. Hence, $A_{x,y} = \avg{x | D_{\sigma'} P_{\sigma'} | y} = \avg{x | D_{\sigma'} | x}$, i.e., $A_{x,y} = (D_{\sigma'})_{x,x}.$ Since this was for an arbitrary $(x, y)$, every matrix element of $A$ is given by the diagonal entry of a single $D_\sigma$ matrix.

This observation is also consistent with a simple counting argument. Namely, each $D_\sigma$ has $d$ entries and $|G| = d$ since it has regular natural action. Thus, we can uniquely specify all $d^2$ entries of a generic $A$. For $A$ with $k$ non-zero entries, this also indicates that at most $k$ unique elements of $|G|$ are needed (with all other $D_\sigma = 0$). If $A$ is sparse so that $k \ll d^2$, then $A$ can thus be expressed as a sum of only a few terms.

}

\begin{theorem}[Existence of PMR form]
 \label{thrm:pmr-form-exists}
 Given a square matrix $A$, it is always possible to find an Abelian PMR decomposition, i.e., a PMR decomposition with $G$ Abelian.
\label{Xenun1-1}
\end{theorem}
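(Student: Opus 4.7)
The plan is to give a direct, constructive proof using the canonical cyclic group $G = \avg{(0\ldots d-1)}_g$, which the preceding discussion already establishes is Abelian, has order $d$, and has regular natural action on $\Omega_d$. In fact, since canonical PMR form is a special case of Abelian PMR form, proving that every $A$ admits a canonical decomposition is strictly stronger than the stated claim and closes the theorem in one stroke.

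The core idea is to read off the $D_\sigma$ matrices pointwise from $A$. Fix the matrix representation $\{P_\sigma\}_{\sigma \in G}$ in the column convention, so that $P_\sigma\ket{x} = \ket{\sigma(x)}$. For each ordered pair $(x,y)\in\Omega_d\times\Omega_d$, regular natural action guarantees exactly one $\sigma_{x,y}\in G$ with $\sigma_{x,y}(y)=x$, and every $\sigma\in G$ appears as $\sigma_{x,y}$ for exactly $d$ such pairs (one per column). I would then define, for each $\sigma\in G$, the diagonal matrix $D_\sigma$ by
\begin{equation}
 (D_\sigma)_{x,x} \;\equiv\; A_{x,\sigma^{-1}(x)} \qquad \text{for all } x \in \Omega_d,
\end{equation}
and check that $\sum_{\sigma\in G} D_\sigma P_\sigma$ reproduces $A$ entry by entry. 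The verification is the short computation
\begin{equation}
 \bra{x}\sum_{\sigma\in G} D_\sigma P_\sigma\ket{y} \;=\; \sum_{\sigma\in G} (D_\sigma)_{x,x}\,\delta_{x,\sigma(y)} \;=\; (D_{\sigma_{x,y}})_{x,x} \;=\; A_{x,y},
\end{equation}
where the middle equality uses uniqueness of $\sigma_{x,y}$ (fixed-point-free and transitive action), and the last equality uses $\sigma_{x,y}^{-1}(x)=y$.

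I expect no genuine obstacle: the theorem is essentially a counting/indexing statement, since $|G|=d$ forces the $d$ diagonal matrices $D_\sigma$ to carry exactly the $d^2$ degrees of freedom needed to specify $A$. The only care required is keeping the convention straight—$P_\sigma$ acts as $\sigma$ on basis kets, so the ``column'' associated with $P_\sigma$ starting at $\ket{y}$ lands at $\ket{\sigma(y)}$, which is why the inverse appears in the definition of $(D_\sigma)_{x,x}$. After establishing existence of the canonical decomposition, I would close with the one-line remark that canonical PMR form $\subseteq$ Abelian PMR form, so the Abelian existence claim of \cref{thrm:pmr-form-exists} follows immediately. This constructive argument will also be reusable later: it provides the explicit formula by which the implementation populates the $D_\sigma$ blocks from a user-supplied operator $A$ in \cref{sec:static-operator-estimators}.
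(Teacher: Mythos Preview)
Your proof is correct and, like the paper, uses the cyclic group $G=\avg{(0\ldots d-1)}_g$ as the Abelian PMR basis. The emphasis differs, however: the paper's proof takes the existence of the decomposition for granted once $G$ has regular natural action and instead spends its effort verifying that regularity directly---it shows transitivity via surjectivity of $\pi^k$ and fixed-point-freeness by a commutativity argument, then deduces uniqueness of the $\sigma$ sending $x$ to $y$. You do the complementary thing: you invoke regularity from the preamble and instead construct the $D_\sigma$ explicitly and check the entrywise identity. Your construction $(D_\sigma)_{x,x}=A_{x,\sigma^{-1}(x)}$ is essentially the content of the paper's subsequent \cref{thrm:computing-Dj}, so you are front-loading that result; this makes your argument more self-contained as an existence proof, while the paper's version is a cleaner illustration of why the group-action hypothesis in \cref{def:pmr-form} is the right one.
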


\begin{proof}
{
 The proof is essentially by construction since $G = \avg{\pi}_g$ for $\pi$ any $d$-cycle has regular natural action. To help concretely illustrate the properties of such groups, we provide an explicit proof.

 Recall that $\Omega_d = \{0,\ldots, d-1\}$. Since $\pi$ is a $d-$cycle with no fixed points, $\pi^k (x)$ is surjective in $\Omega_d$ for $k$ an integer. This shows that for every $x, y$ in $\Omega_d$, there exists a $k$ such that $\pi^k(x) = y$ and hence, $G$ has \emph{transitive} natural action.

 Next, we show $G$ has free natural action. Suppose $1 \neq \sigma \in \avg{\pi}_g$ has a fixed point, $x$. By commutativity, $\pi^k \sigma (x) = \sigma \pi^k (x) = \pi^k (x)$. But since $\pi^k$ is surjective in $\Omega_d$, we find $\sigma(y) = y$ for all $y \in \Omega_d$, a contradiction.

 This proves that $G$ is simply transitive or regular, and hence, for every $x, y \in \Omega_d$, there is a unique $\sigma \in G$ such that $\sigma(x) = y$. To be explicit again, suppose both $\sigma_1 (x) = y$ and $\sigma_2 (x) = y$ for $\sigma_1 \neq \sigma_2$. Since this implies $\sigma_2^{-1} \sigma_1(x) = x$, then $\sigma_2^{-1} \sigma_1$ has a fixed point, and hence must be the identity. But then $\sigma_2 = \sigma_1$, a contradiction.
 }
\end{proof}

This proof is essentially the same as given in Ref.~\cite{gupta2020PermutationMatrixRepresentation} but in the language of group actions. Unlike Ref.~\cite{gupta2020PermutationMatrixRepresentation}, however, our definition of the PMR form requires specifying a group $G$ with the desired properties explicitly rather than implicitly. This insistence on specifying the PMR basis as part of the definition greatly clarifies the PMR formalism and prevents subtle technical pitfalls. Most importantly, our definition guarantees the following property, essential for all current applications of PMR~\cite{gupta2020PermutationMatrixRepresentation, gupta2020elucidating, hen2021determining, akaturk2024quantum, barash2024QuantumMonteCarlo, chen2021quantum, kalev2021integral, kalev2021quantum, kalev2024feynman}.

\begin{corollary}[Products of PMR permutations have no fixed points]
\label{corr:Siq-must-be-identity}
 If a product of permutations from a PMR basis satisfies $ P_{\sigma_q}P_{\sigma_{q-1}} \ldots P_{\sigma_1} \ket{x} = \ket{x}$ for even a single $\ket{x}$, then this product simplifies to the identity, $\mathds{1}$.
\label{Xenun4-1}
\end{corollary}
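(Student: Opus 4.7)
The plan is to reduce the statement about permutation matrices to a statement about abstract group elements, where the regularity hypothesis on $G$ does all the work. The key ingredient is that the map $\sigma \mapsto P_\sigma$ respects composition under the column convention: since $P_\sigma \ket{x} = \ket{\sigma(x)}$, we have $P_{\sigma_2} P_{\sigma_1} \ket{x} = P_{\sigma_2}\ket{\sigma_1(x)} = \ket{\sigma_2(\sigma_1(x))} = P_{\sigma_2 \sigma_1}\ket{x}$ for every $\ket{x}$. Iterating, I would first establish that
\begin{equation*}
 P_{\sigma_q}P_{\sigma_{q-1}} \cdots P_{\sigma_1} \;=\; P_{\sigma_q \sigma_{q-1}\cdots \sigma_1}.
\end{equation*}

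Next I would use that $G$ is a group: closure guarantees that $\sigma \equiv \sigma_q \sigma_{q-1}\cdots \sigma_1 \in G$, so the product above is itself one of the PMR-basis matrices $P_\sigma$. The hypothesis $P_{\sigma_q}\cdots P_{\sigma_1}\ket{x} = \ket{x}$ therefore translates directly into the statement that $\sigma$ fixes the single point $x \in \Omega_d$. Now invoking the fixed-point-free half of the regular natural action of $G$ (from \cref{def:pmr-form}), a single fixed point forces $\sigma = e_G$, so $P_\sigma = \mathds{1}$.

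There is really no hard step; the only thing that needs care is to make absolutely clear that closure under $G$ is what permits us to apply the fixed-point-free property. Without closure (i.e., if one had simply collected permutations without demanding they form a subgroup), the product could fail to lie in the set at all, and the regularity hypothesis would be inapplicable. I would thus emphasize, right after stating the composition identity, that closure of $G$ is what makes the conclusion immediate, so as to highlight \emph{why} the definition of PMR form insists on $G$ being a subgroup with regular action rather than merely a set of permutations whose pairwise products avoid fixed points.
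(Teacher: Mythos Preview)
Your proposal is correct and follows essentially the same approach as the paper: the paper's proof is the single line ``By construction all $1 \neq \pi \in G$ have no fixed points, so the result follows by group closure,'' and you have simply unpacked this into the composition identity $P_{\sigma_q}\cdots P_{\sigma_1}=P_{\sigma_q\cdots\sigma_1}$, closure of $G$, and the fixed-point-free hypothesis. Your version is more explicit, but the logical content is identical.
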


\begin{proof}
 By construction all $1 \neq \pi \in G$ have no fixed points, so the result follows by group closure.
\end{proof}

Within our suggested formalism, this proof is trivial, but this is a nontrivial convenience of our definition. In earlier work~\cite{gupta2020PermutationMatrixRepresentation}, a PMR form was defined without specifying a group $G$ or its properties, which is not enough to imply {\cref{corr:Siq-must-be-identity}} as previously implicitly assumed. To understand this claim, let $P_1$ be the PMR of the cycle $(01)(234)$. One can verify that $P_1^{-1}$ is equivalent to $(01)(243)$ also has no fixed points, and hence, $H = D_0 + D_1 P_1 + D_2 P_1^{-1}$ is a legitimate PMR Hamiltonian according to prior PMR definitions~\cite{gupta2020PermutationMatrixRepresentation}. On the other hand, because $(243) \in \avg{(01)(234)}_g$, products of permutations in this $H$ can have fixed points, which violates the key property of {\cref{corr:Siq-must-be-identity}}. In summary, an important difference in our new definition is that the no fixed points property is imposed on a group $G$ and not on non-trivial permutations that could comprise $H$.

Given a choice of a PMR group basis $G$, the diagonals are then uniquely determined. We show this by construction with the following novel property.

\begin{theorem}[Computing $D_\sigma's$ generally]
\label{thrm:computing-Dj}
Let $\diag{X} = (X_{11}, \ldots, X_{dd}\}$ be the vector of diagonal entries of a matrix.
 \begin{equation}
 \diag{D_\sigma} = \diag{A P_\sigma^T} = \text{diag}(P_\sigma A^T)
 \label{Xeqn8-8}
\end{equation}
\end{theorem}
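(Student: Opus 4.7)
The plan is to prove both equalities by direct expansion of the PMR decomposition of $A$, using the group structure of the PMR basis and invoking \cref{corr:Siq-must-be-identity} to kill all but one term on the diagonal. The key observation is that $P_\sigma^T = P_{\sigma^{-1}}$ and the product of two PMR permutation matrices in the same basis is again a PMR permutation matrix (by group closure), so right-multiplying $A$ by $P_\sigma^T$ shuffles the sum into another sum over $G$.

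First I would expand
\begin{equation*}
A P_\sigma^T = \sum_{\pi \in G} D_\pi P_\pi P_{\sigma^{-1}} = \sum_{\pi \in G} D_\pi P_{\pi \sigma^{-1}},
\end{equation*}
using the fact that $G$ is a group. Since each $D_\pi$ is diagonal, the $x$th diagonal entry of the term $D_\pi P_{\pi \sigma^{-1}}$ equals $(D_\pi)_{xx}$ times $\langle x | P_{\pi\sigma^{-1}}|x\rangle$, which is nonzero only if $P_{\pi \sigma^{-1}}|x\rangle = |x\rangle$. By \cref{corr:Siq-must-be-identity}, this single fixed point forces $\pi \sigma^{-1} = e_G$, i.e.\ $\pi = \sigma$. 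Therefore every term with $\pi \neq \sigma$ contributes zero to the diagonal, and only the $\pi = \sigma$ term survives; that term equals $D_\sigma P_\sigma P_{\sigma^{-1}} = D_\sigma$, yielding $\mathrm{diag}(A P_\sigma^T) = \mathrm{diag}(D_\sigma)$.

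Next I would repeat the argument from the left. Taking transposes of the PMR form gives $A^T = \sum_{\pi} P_\pi^T D_\pi = \sum_\pi P_{\pi^{-1}} D_\pi$, so
\begin{equation*}
P_\sigma A^T = \sum_{\pi \in G} P_{\sigma \pi^{-1}} D_\pi.
\end{equation*}
The same fixed-point argument (applied to $P_{\sigma \pi^{-1}}$) singles out $\pi = \sigma$, giving $\mathrm{diag}(P_\sigma A^T) = \mathrm{diag}(D_\sigma)$ as well. This completes the equality of all three diagonal vectors.

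The only real subtlety — and I would flag it in the write-up — is that the argument requires the PMR basis to come from a group $G$ with regular natural action so that (i) $P_\pi P_{\sigma^{-1}}$ is itself an element of the PMR basis (closure) and (ii) \cref{corr:Siq-must-be-identity} applies to rule out spurious diagonal contributions. Without the group-theoretic restriction built into \cref{def:pmr-form}, the inversion formula would fail, as illustrated by the earlier $(01)(234)$ counterexample. Beyond this, the proof is a short algebraic manipulation; I do not expect any hard step.
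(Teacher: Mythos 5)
Your proof is correct and follows essentially the same route as the paper's: expand $A P_\sigma^T$ term by term and use the fixed-point-free property of the PMR group (\cref{corr:Siq-must-be-identity}) to show only the $\pi=\sigma$ term contributes to the diagonal. The only cosmetic difference is the second equality, which the paper gets immediately from $\diag{X}=\diag{X^T}$ while you rerun the expansion on $P_\sigma A^T$; both are fine.
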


\begin{proof}
 We show the first equality by direct computation,
 \begin{align*}
 (A P_\sigma^T)_{kk} &= \sum_{\pi, l} ( D_\pi P_\pi)_{k, l} (P_\sigma^T)_{l, k}
 = \sum_{\pi} (D_\pi)_{k, k} \sum_l (P_\pi)_{k,l} (P_\sigma^T)_{l, k} \\
 &= \sum_\pi (D_\pi)_{k, k} (P_\pi P_\sigma^T)_{k, k} \\
 &= (D_\sigma)_{k,k}.
 \end{align*}
The final line follows for two reasons. First, $P_\sigma$ has a unique inverse $P_\sigma^T$, and the resulting identity matrix has a 1 for every $k$. Second, products of permutations have no fixed points unless they are identity by {\cref{corr:Siq-must-be-identity}}. The second equality in the theorem then follows since a square matrix and its transpose have the same diagonal elements.
\end{proof}
This shows that $\{P_\sigma \}$ is akin to a generalized basis for which the weights, $\{D_\sigma \}$ are given by the above ``inner-product'' like operation---hence why we refer to it as a PMR basis. Each diagonal can be computed in $O(d)$ time and space using this formula since we can efficiently store and manipulate permutations as permuted vectors of length $d$. For many systems and PMR bases of interest in practice, though, the $D_j$'s can often be determined immediately by inspection~\cite{gupta2020PermutationMatrixRepresentation} or computed efficiently using sparse representations of Pauli operators and modular linear algebra~\cite{barash2024QuantumMonteCarlo, babakhani2025quantum}.

So far, we have only assumed $A$ is a square matrix, but within PMR-QMC, we are interested primarily in Hermitian matrices. As suggested by Ref.~\cite{gupta2020PermutationMatrixRepresentation}, Hermiticity implies a constraint on the PMR form of $A$.
\begin{theorem}[Hermiticity in the PMR]
\label{thrm:hermiticity}
 $A$ is Hermitian $(A = A^{\dagger})$ if and only if $D_\sigma^* = P_\sigma D_{\sigma^{-1}} P_{\sigma}^T$ for every $\sigma \in G$.
\label{Xenun3-3}
\end{theorem}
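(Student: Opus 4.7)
The plan is to expand both sides in the PMR basis and then use the uniqueness of the PMR decomposition (a direct consequence of \cref{thrm:computing-Dj}) to match coefficients $D_\sigma$. The equivalence is really just a coefficient-matching statement once $A^\dagger$ is put back into canonical PMR form.

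First I would compute $A^\dagger$ directly from $A = \sum_{\sigma \in G} D_\sigma P_\sigma$. Since each $P_\sigma$ is a real permutation matrix (so $P_\sigma^\dagger = P_\sigma^T$) and each $D_\sigma$ is diagonal (so $D_\sigma^\dagger = D_\sigma^*$), taking the adjoint term by term gives
\begin{equation*}
A^\dagger = \sum_{\sigma \in G} P_\sigma^T D_\sigma^*.
\end{equation*}
Because $G$ is a group, $P_\sigma^T = P_{\sigma^{-1}}$ is also a member of the PMR basis, and the map $\sigma \mapsto \sigma^{-1}$ is a bijection on $G$. Re-indexing the sum therefore yields $A^\dagger = \sum_\sigma P_\sigma D_{\sigma^{-1}}^*$.

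Next I would put this into standard PMR form, where the diagonals appear on the left. This uses the elementary commutation identity $P_\sigma D = (P_\sigma D P_\sigma^T)\, P_\sigma$ for any diagonal matrix $D$ (verifiable by checking matrix elements: $P_\sigma$ acts on a diagonal matrix by permuting its entries). Applied term-wise, this yields
\begin{equation*}
A^\dagger = \sum_{\sigma \in G} \left( P_\sigma D_{\sigma^{-1}}^* P_\sigma^T \right) P_\sigma,
\end{equation*}
which is now genuinely in PMR form with respect to the same group basis $\{P_\sigma\}$ as $A$.

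Finally, I would invoke uniqueness of the PMR decomposition. Since every matrix element of a square matrix in PMR form is determined by a unique diagonal entry of a unique $D_\sigma$ (as established in \cref{thrm:computing-Dj} and the discussion following \cref{def:pmr-form}), the equality $A = A^\dagger$ holds if and only if the PMR coefficients coincide, i.e., $D_\sigma = P_\sigma D_{\sigma^{-1}}^* P_\sigma^T$ for every $\sigma \in G$. Taking the complex conjugate of both sides and using that $P_\sigma$ is real gives the equivalent form $D_\sigma^* = P_\sigma D_{\sigma^{-1}} P_\sigma^T$, which is the statement. The only step with any subtlety is the commutation rewrite; everything else is bookkeeping and re-indexing, so I do not expect a significant obstacle.
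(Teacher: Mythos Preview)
Your proof is correct and follows essentially the same approach as the paper. The paper's $(\Leftarrow)$ direction is exactly your chain of equalities run in reverse, and its $(\Rightarrow)$ direction carries out the uniqueness argument you invoke by directly comparing matrix elements $\langle k|A|l\rangle$ and $\langle k|A^\dagger|l\rangle$; your packaging via the uniqueness of the PMR decomposition (\cref{thrm:computing-Dj}) is a slightly cleaner way of saying the same thing.
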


\begin{proof}
 ($\Leftarrow$) Suppose $D^*_\sigma = P_\sigma D_{\sigma^{-1}} P_\sigma^T, \ \forall \sigma \in G$. Then $A^{\dagger} = \sum_\sigma P_\sigma^T D_\sigma^* = \sum_\pi D_\pi P_\pi = A$ for $\pi = \sigma^{-1}.$ ($\Rightarrow$) Suppose $A^{\dagger} = A$. By the PMR construction, $\avg{k | A | l} = \avg{k | D_\sigma P_\sigma | l} = (D_\sigma)_{k,k}$ for the unique $\sigma$ such that $P_\sigma \ket{l} = \ket{k}.$ Similarly, there is a unique $P_\pi $ such that $\avg{k | A^{\dagger} | l} = \avg{k | P_\pi^T D_\pi^* | l} = (D_\pi^*)_{l,l}$. Since $P_\pi \ket{k} = \ket{l}$, then by uniqueness, $\pi = \sigma^{-1}.$ Combined with the assumption $A = A^{\dagger}$, we find $(D_\sigma^*)_{k,k} = (D_{\sigma^{-1}})_{l,l}$ or more conveniently, $(D_\sigma^*)_{k,k} = (P_\sigma D_{\sigma^{-1}} P_\sigma^T)_{k,k}$ and hence $D_\sigma^* = P_\sigma D_{\sigma^-1} P_\sigma^T$.
\end{proof}
This claim is actually different from the one made in Ref.~\cite{gupta2020PermutationMatrixRepresentation}, which incorrectly suggested $D_\sigma^* = D_{\sigma^{-1}}$ instead of our now corrected claim $D_\sigma^* = P_\sigma D_{\sigma^{-1}} P_\sigma^T$. We refute the old claim and verify our claim in an explicit example in {\cref{subsec:pmr-examples}}.
\begin{corollary}[Alternate way to compute $D_\sigma's$]
\label{corr:computing-Dj}
 When $A$ is Hermitian so that $A^T = A^*$, $\diag{D_\sigma} = \diag{P_\sigma A^*}$.
\label{Xenun5-2}
\end{corollary}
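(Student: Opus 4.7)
The plan is to obtain the corollary as an immediate specialization of \cref{thrm:computing-Dj}. That theorem already supplies two equivalent formulas for the diagonal of $D_\sigma$, namely $\diag{D_\sigma} = \diag{A P_\sigma^T} = \diag{P_\sigma A^T}$, valid for any square matrix in PMR form. The only new ingredient needed here is to rewrite the transpose $A^T$ using the Hermiticity hypothesis.

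First I would unpack the Hermiticity condition entrywise: $A = A^\dagger$ means $A_{kl} = \overline{A_{lk}}$, which says exactly that $A^T = \overline{A} = A^*$. This equality is stated as part of the hypothesis, but it is worth noting for readers since it is the whole mechanism of the proof. Second, I would substitute $A^T = A^*$ directly into the second equality of \cref{thrm:computing-Dj}, yielding $\diag{D_\sigma} = \diag{P_\sigma A^T} = \diag{P_\sigma A^*}$, which is the asserted identity.

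There is no real obstacle: the content of \cref{thrm:computing-Dj} already does the combinatorial work of exploiting the regular natural action of $G$ and \cref{corr:Siq-must-be-identity}; the corollary merely repackages one of its two formulas using the fact that Hermitian matrices satisfy $A^T = A^*$. The proof is thus a one-line substitution, and I would present it as such rather than redoing the computation from \cref{thrm:computing-Dj}. If desired, one could remark that the point of the corollary is purely practical---it lets one compute $\diag{D_\sigma}$ from $A^*$ (often immediately available, e.g., for sparse Pauli representations) without explicitly transposing $A$.
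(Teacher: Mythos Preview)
Your proposal is correct and matches the paper's approach: the corollary is stated without proof in the paper, as it follows immediately from the second equality in \cref{thrm:computing-Dj} by substituting $A^T = A^*$ from the Hermiticity hypothesis, exactly as you describe.
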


We remark that none of the properties we proved above rely on $G$ being Abelian except for our general existence proof~\footnote{Note that $S_3$ with 6 elements is the smallest non-Abelian group, so general existence for $d < 6$ actually does require $G$ to be Abelian.}. Hence, for applications like PMR-QMC that rely on {\cref{corr:Siq-must-be-identity}}, one may choose any $G$ that has regular group action, Abelian or not. For spin models~\cite{barash2024QuantumMonteCarlo, babakhani2025quantum} and bosonic models~\cite{akaturk2024quantum}, however, practical PMR bases are also Abelian. Using a Jordan--Wigner transformation, one can also study Fermions with an Abelian PMR basis, but a non-Abelian PMR form seems more effective~\cite{babakhani2025fermionic}.

\subsection{Uniqueness of the PMR and the local, canonical PMR basis}\label{Xsec4-3.1}\label{Xsec4}
{ Given a suitable PMR basis $G$, we can write any square matrix $A$ in the form}
\begin{equation}
 \label{eq:A-pmr-spin-1/2}
 A = \sum_{P \in G} D_P P , \ \ \ \diag{D_P} = \diag{A P^T},
\end{equation}
where we have introduced several convenient abuses of notation---namely, viewing $G$ and its representation interchangeably in writing $P \in G$ and using $P$ both as a permutation and a subscript in $D_P$. We shall find this notation useful throughout, particularly in {\cref{subsec:estimating-arbitrary-operators}}. Given $G$, this decomposition is unique (i.e., the only valid assignment of $D_P$'s comes from {\cref{thrm:computing-Dj}}), but the choice of $G$ itself is not unique. { In this section, we discuss the non-uniqueness of the PMR basis in more detail.}

{ Naively, any permutation group with regular group action is a valid PMR group basis, but this greatly over counts. To make this point, let $S_4$ denote the symmetric group over $\Omega_4 = \{0,1,2,3\}$.} By {\cref{thrm:pmr-form-exists}}, { we know $\avg{(0123)}_g$ and $\avg{(0213)}_g$ are valid PMR bases, but up to the freedom to order a basis $\{\ket{0}, \ket{1}, \ket{2}, \ket{3}\}$, these are clearly the same. This concept is formalized by the so-called \emph{conjugacy classes} or \emph{cycle type}. For $S_4$, the only subgroups with regular natural action up to cycle type are the cyclic group $C_4 = \avg{(0123)}_g$ and the Klein group generated by disjoint transpositions, $V_4 = \{(02)(13), (01)(23)\}.$ The structure of the Klein group actually becomes more familiar via its matrix representation, $\{\mathds{1}, X_1, X_2, X_1 X_2\} \cong K_4$, which motivates the following definition.}

\begin{definition}[Local, canonical PMR form]
 \label{def:local-pmr-form}
 Given $n$ particles with local dimensions $d_i$, we say $G$ is the local, canonical (PMR) basis if every $P \in G$ is of the form,
 \begin{equation}
 P = \bigotimes_{i=1}^n P_{i}^{k_i},
 \label{Xeqn10-10}
\end{equation}
 for $P_i$ the permutation representation of { the $d_i$-cycle $(01\ldots d_i -1 )$ } and $k_i \in \{0, \ldots, d_i-1\}$. Any $A$ written in such a basis is said to be written in the local, canonical PMR form. { By definition, there is only one local, canonical PMR basis for any physical system of $n$ particles.}
\label{Xenun7-2}
\end{definition}
 Implicit in this definition is that such constructions are legitimate PMR forms. This follows simply by the mixed-product property of the Kronecker product, i.e., $(D_A P_A) \otimes (D_B P_B) = (D_A \otimes D_B) (P_A \otimes P_B).$ In other words, given two local PMR forms for operators $A$ and $B$, we immediately have the local PMR form for $A \otimes B$ by this property. By successive applications, we can thus build up $n$-particle PMR forms from local forms. Whenever each local form is also canonical (generated by a $d_i$-cycle), then the total is a local, canonical basis.

{ For $S_4$, we saw that the global basis $C_4$ and the local basis $K_4$ are the only possibilities. For $S_6$, however, we can view $S_3$ as a proper subgroup whose group action on itself embedded into $\Omega_6$ is regular---an example of a non-Abelian PMR group basis for $d = 6.$ This example reveals that non-Abelian PMR bases exist, and furthermore, that a complete characterization of PMR bases is akin to a certain classification of all finite groups. Yet in practice, all PMR-QMC calculations thus far have been carried out with the local, canonical basis defined above~\cite{albash2017OffdiagonalExpansionQuantum,gupta2020PermutationMatrixRepresentation,barash2024QuantumMonteCarlo, babakhani2025quantum,akaturk2024quantum}, so we consider further characterization of uniqueness a question in group theory outside the scope of this work.}

{
 \subsection{Illustrative finite dimensional examples}
    \label{subsec:pmr-examples}
Before moving on to infinite dimensional generalizations, we provide a few examples of the PMR in action for simple spin systems. We also discuss the choice of PMR bases most relevant to the study of physical models. }

\subsubsection{A single spin-1/2 particle (one qubit)}\label{Xsec5-3.2.1}\label{Xsec5}

Consider an arbitrary single qubit operator decomposed in terms of the spin-1/2 Pauli operators $\{\mathds{1}, X, Y, Z\}$,
\begin{equation}
 A = a_0 \mathds{1} + a_1 X + a_2 Y + a_3 Z = \sum_{i=0}^3 a_i \sigma^{(i)}.
\label{Xeqn11-11}
\end{equation}
The only $2$-cycle is $(01)$, so $\avg{(), (01)}$ is the unique PMR basis with corresponding matrix representation $\{P_0 = \mathds{1}, P_1 = X\}$ in the standard computational basis. By direct computation of $\diag{A P^T}$ or by inspection and the relation $Y = -i Z X$, we find
\begin{equation}
 A = D_0 P_0 + D_1 P_1, \ \ D_0 = a_0 \mathds{1} + a_3 Z, \ \ D_1 = a_1 \mathds{1} - i a_2 Z.
\label{Xeqn12-12}
\end{equation}
Since we have not required $A$ is Hermitian, then the $a_i's$ can be complex in general. Yet, if $A$ is Hermitian, then each $a_i$ must be real. In this case, we find $D_0^* = D_0$ and $D_1^* = X D_1 X$
since $ZXZ = -Z$ as expected from {\cref{thrm:hermiticity}}. In addition, we find $D_1^* \neq D_1$ which debunks the claim $D_P^* = D_{P^T}$ in Ref.~\cite{gupta2020PermutationMatrixRepresentation}.

We remark that despite the definitional issues of the PMR form as originally conceived~\cite{gupta2020PermutationMatrixRepresentation}, all present examples of PMR in numerical studies~\cite{gupta2020PermutationMatrixRepresentation, barash2024QuantumMonteCarlo, akaturk2024quantum, babakhani2025quantum} have utilized a local, canonical PMR basis for which {\cref{corr:Siq-must-be-identity}} holds by our general proof. In other words, their theoretical claims and empirical results are not challenged by our work.

\subsubsection{Two spin-1/2 particles (two qubits)}\label{Xsec6-3.2.2}\label{Xsec6}

For two qubits, the canonical local basis is $G^{(2)}_X = \{\mathds{1}, X_1, X_2, X_1X_2\}$ (with cycle representation given by $\{(), (02)(13), (01)(23), (03)(12) \}$, but this is no longer a unique choice. For example, $\avg{(1234)}_g$ is a valid, distinct alternative PMR basis. As an explicit demonstration of this, let $\widetilde{P}$ be the matrix representation of $(1234)$,
\begin{equation}
 \widetilde{P} =
 \begin{bmatrix}
 0 & 0 & 0 & 1 \\
 1 & 0 & 0 & 0 \\
 0 & 1 & 0 & 0 \\
 0 & 0 & 1 & 0 \\
 \end{bmatrix}.
\label{Xeqn13-13}
\end{equation}
By inspection, $\wtP \notin G^{(2)}_X$, and by closure of the group, $\wtP$ cannot be expressed as a product of Pauli X strings. As with any square matrix, though, we can express it in this local canonical PMR basis,
\begin{equation}
 \tilde{P} = D_2 X_2 + D_3 X_1 X_2, \ \ \diag{D_2} = (0,1,0,1), \ \ \diag{D_3} = (1,0,1,0).
\label{Xeqn14-14}
\end{equation}
Since $\wtP$ is not Hermitian, we observe an unsurprising violation of the Hermiticity condition, $D_2^* \neq X_2 D_2 X_2 = D_3$.

\subsubsection{Any number of spin-1/2 particles ($n$ qubits)}\label{Xsec7-3.2.3}\label{Xsec7}
\label{subsec:spin1/2-pmr-example}
Now consider an $n$ qubit operator which can always be decomposed into the basis of $n$-qubit Pauli strings, $\{I, X, Y, Z\}^{\otimes n}$, which form an orthogonal basis with respect to the Hilbert-Schmidt inner product. That is, an arbitrary $2^n \times 2^n$ square matrix $A$ can be written,
\begin{equation}
 \label{eq:A-spin-1/2}
 A = \sum_{Q \in \{I, X, Y, Z\}^{\otimes n}} a_Q Q, \ \ \ a_Q = \frac{1}{2^n}\Tr[ Q A],
\end{equation}
{ where we use $Q$ as a Pauli string to avoid clashing with permutation matrix notation throughout.} In this representation, Hermiticity is ensured provided each $a_Q$ is real. { Though this is not the only possible choice of a valid matrix basis, this is by far the most common since the Pauli strings have nice properties such as locality (i.e., Hamiltonians for the transverse-field Ising model, XY model, Heisenberg model, and so on are written in this basis).}

{ In analogy, the corresponding local canonical PMR basis is the Abelian group formed by all Pauli-X strings, $G^{(n)}_X = \{X^{b_1} X^{b_2}\cdot \ldots \cdot X^{b_n} : (b_1,\ldots,b_n) \in \{0,1\}^n\}$. Though this is not unique in the sense that $\avg{(01\ldots d-1)}$ is a valid, distinct basis, for example, it is the unique local basis. Furthermore, given a matrix as a sum of Pauli strings} (i.e., {\eqref{eq:A-spin-1/2}}), { there is an efficient, mod-2 arithmetic linear algebra method}~\cite{barash2024QuantumMonteCarlo} to compute the corresponding diagonal matrices $D_P$ for each $P \in G^{(n)}_X$ that is faster than using the general formula $\diag{A P^T}$.

 \subsubsection{A local, canonical PMR basis for $n$ $d$-level systems}

{ For a qubit, we have seen that $(01)$ or $X$ generates the unique PMR basis $\{\mathds{1}, X\}.$ For $n$-qubits, the unique local, canonical basis is the $n$-fold tensor product of such operators, i.e., the set of all Pauli-X strings. In analogy, for a $d$-level system with basis states $\ket{0}, \ket{1}, \ldots, \ket{d-1}$, $(01\ldots d-1)$ forms the unique canonical PMR basis $\avg{(01\ldots d-1)}_g$ up to trivial permutations of the basis elements. Yet, not all PMR bases must be formed by a $d-$cycle, i.e., for $d = 4$ above, we saw $G^{(2)}_X = \{\mathds{1}, X_1, X_2, X_1X_2 \}$ also works. As a technical remark, though $G^{(2)}_X$ is a local canonical basis for two qubits, it is neither local nor canonical for a general 4-level system.}

{ Let $X_d$ denote the matrix representation of $(01\ldots d-1)$,
which is known as the \emph{shift matrix}. The shift matrix is one of the finite generators of the Heisenberg-Weyl group as discussed in Sec.~3.7 of Ref.} \cite{wilde2013quantum} or used in \citet{tripathi2024qudit}). { With this defined, a local, canonical PMR basis for $n$ such $d-$level systems is given by permutations of the form $X_d^{k_1} \otimes \ldots \otimes X_d^{k_n}$ for $k_i \in \{0, \ldots, d-1\}$. In the physical case of spin $s$ particles of local dimension $d = 2s + 1$ in particular, the translation of a given operator in a standard spin basis into a local, canonical PMR form has been described in Ref.}~\cite{babakhani2025quantum}. As with the spin-1/2 case~\cite{barash2024QuantumMonteCarlo}, { there is an efficient mod-$d$ linear algebra algorithm to construct the $D_P$'s}~\cite{babakhani2025quantum}.

\subsection{Extensions to bosons and fermions}\label{Xsec8-3.3}\label{Xsec8}

\subsubsection{The Bose--Hubbard model: an extension to countably infinite dimensions}\label{Xsec9-3.3.1}\label{Xsec9}
\label{subsec:bose-hubbard}
Our rigorous discussion of the PMR form is based on the idea of representing a square $d \times d$ matrix. Yet, the essential details can readily be generalized to some infinite dimensional systems, as explored for the Bose-Hubbard model in Ref.~\cite{gupta2020PermutationMatrixRepresentation} and applied successfully in practical PMR-QMC simulations in Ref.~\cite{akaturk2024quantum}. In second quantization, the Bose-Hubbard Hamiltonian on $L$ lattice sites can be written,
\begin{equation}
 H = -t \sum_{\avg{i,j}} \cb{i} \db{j} + \frac{U}{2} \sum_{i=1}^L \no{i} (\no{i} - 1) - \mu \sum_{i=1}^L \no{i},
\label{Xeqn16-16}
\end{equation}
for $\avg{i,j}$ a summation over neighboring lattice sites. The basis for which one can define permutations over is most conveniently the second quantized occupation number basis, $\ket{\bf{n}} = \ket{n_1,n_2,\ldots,n_L}$ for each $n_k$ a non-negative integer denoting the number of bosons on each lattice site.

Within this basis, the corresponding operator terms in $H$ can be explained. Firstly, $\cb{i}$ and $\db{i}$ are creation and annihilation operators, respectively, and they satisfy
\begin{align}
 \cb{i} \ket{\bf{n}} &= \sqrt{(n_i + 1)} \ket{n_1, \ldots, n_{i-1}, n_{i} + 1, n_{i+1}, \ldots, n_L} \\
 \db{i} \ket{\bf{n}} &= \sqrt{n_i} \ket{n_1, \ldots, n_{i-1}, n_i - 1, n_{i+1}, \ldots, n_L}
\end{align}
and the commutation relations
\begin{equation}
 [\cb{i}, \cb{j}] = [\db{i}, \db{j}] = 0, \ \ \ [\cb{i}, \db{j}] = \delta_{i,j}.
\label{Xeqn17-19}
\end{equation}
The operator $\no{i} = \cb{i} \db{i}$ is called the number operator since it satisfies
\begin{equation}
 \no{i} \ket{\bf{n}} = n_i \ket{\bf{n}}.
\label{Xeqn18-20}
\end{equation}
Evidently, terms involving only the number operator are diagonal in this basis, and correspondingly,
\begin{equation}
 D_0 = \frac{U}{2} \sum_{i=1}^L \no{i} (\no{i} - 1) - \mu \sum_{i=1}^L \no{i}.
\label{Xeqn19-21}
\end{equation}
The remaining $-t \ \cb{i} \db{j}$ terms have action,
\begin{equation}
 -t \ \cb{i} \db{j}\ket{\bf{n}} = -t \sqrt{(n_i + 1) n_j} \ket{\mathbf{n}^{(i,j)}} \propto \ket{\mathbf{n}^{(i,j)}},
\label{Xeqn20-22}
\end{equation}
{ with $\ket{\bf{n}^{(i,j)}}$ denoting the state $\ket{\bf{n}}$ with one additional boson at site $i$ and one fewer at site $j$} and the proportionality assumes $n_j > 0$ (we will return to this subtlety soon). Defining permutations $P_{i,j}$ and associated diagonal operators,
\begin{align}
 \label{eq:bose-hubbard-pij}
 P_{i,j} \ket{\bf n} &= \ket{\mathbf{n}^{(i,j)}} \\
 \label{eq:bose-hubbard-dij}
 D_{i,j} &= -t \sum_{\bf n} \sqrt{n_i (n_j + 1)} \ketbra{\bf n},
\end{align}
we can write
\begin{equation}
 H = D_0 + \sum_{\avg{i,j}} D_{i, j}P_{i,j},
\label{Xeqn21-25}
\end{equation}
which is essentially a PMR form of $H$.

Within the standard number basis, however, these permutations do not commute, so this might appear to be a non-Abelian PMR form. To see this, we observe $P_{2,1} P_{1,2} \ket{0,1} = \ket{0,1}$ yet $P_{2,1} \ket{0,1} = 0$ since $n_0 = 0$ and one cannot remove a non-existent boson from the first lattice site. Thus, $[P_{2,1}, P_{1,2}] \neq 0$ in the standard number basis. Yet, there is a simple way to view this as an Abelian permutation group by introducing ``artificial basis elements'' in which lattice sites are allowed to have negative bosons. In doing so, $P_{2,1} \ket{0,1} = \ket{-1, 2}$, so $P_{1,2} P_{2,1} \ket{0,1} = \ket{0,1}$ and permutations commute. The set of all possible permutations, still defined via {\eqref{eq:bose-hubbard-pij}} on this extended basis, is now a finite Abelian group with the desired properties such as no fixed points.

Having extended the permutation definition to this artificial basis, we also extend the corresponding diagonals,
\begin{equation}
 D_{i,j} \longrightarrow -t \sum_{\bf n} \sqrt{n_i (n_j + 1)} \ketbra{\bf n}{\bf n} + \sum_{- \mathbf{n}} 0 \ketbra{- \mathbf{n}}{- \mathbf{n}},
\label{Xeqn22-26}
\end{equation}
where the second sum is over all artificial basis elements with negative bosons on at least one lattice site. This definition leads to consistent evaluation of matrix elements in PMR-QMC where we evaluate matrix elements of the form (see {\cref{sec:pmr-qmc}}),
\begin{equation}
 \left \langle \mathbf{n} \left| \prod_{l=1}^q D_{i_l,j_l} P_{i_l,j_l} \right| \mathbf{n} \right \rangle.
\label{Xeqn23-27}
\end{equation}
Since basis elements are always evaluated with $D_{i,j}P_{i,j}$ pairs, then the extension to the negative boson basis is consistent with the standard basis evaluation. Namely, if this product is $0$ in the standard basis without an extended definition of the permutations and diagonal, then it will also be $0$ in the extended basis with the expanded permutations and diagonals.

\subsubsection{Fermionic systems: usage of non-Abelian, extended PMR form}\label{Xsec10-3.3.2}\label{Xsec10}
\label{subsec:fermionic-example}
A fermionic Hamiltonian written in second quantization can readily be transformed into a spin-1/2 system via the Jordan--Wigner transformation~\cite{nielsen2005fermionic}. In doing so, one can re-cast a fermionic model into a spin-1/2 model, for which the Abelian PMR form is well understood~\cite{barash2024QuantumMonteCarlo}, as we explained in {\cref{subsec:spin1/2-pmr-example}}. This transformation is also discussed in more detail in Ref.~\cite{babakhani2025quantum}, but in practice, one can design a more efficient PMR-QMC scheme directly within second quantization~\cite{babakhani2025fermionic}. In this scheme, one can define the permutations in terms of creation and annihilation operators, similar to the discussion of Bose-Hubbard in {\cref{subsec:bose-hubbard}}. Similar to the Bose-Hubbard model, some permutations actually annihilate a state entirely, returning $0$. Unlike the Bose-Hubbard model, however, it does not appear possible to embed the permutations into an Abelian group on an extended basis in which states are not annihilated by permutations.

Naively, we might say that it is simply a non-Abelian PMR form, which therefore satisfies all necessary practical properties like {\cref{corr:Siq-must-be-identity}}. However, the possibility to annihilate a state means that the permutations are really more than a permutation in this special boundary case, so one might call this ``an extended, fermionic PMR form.'' For concrete details, we refer readers to Ref.~\cite{babakhani2025fermionic}. Nevertheless, this subtle technicality does not affect PMR-QMC estimation much. In practice, one simply keeps track of a variable, $s \in \{0, -1, 1\}$, that depends on the order in which permutations are applied to a given basis state. This presents no issue to our derivations, as most of our estimators are dependent on the order permutations appear as well. As such, all the estimators we derive in this work---with the exception of {\eqref{eq:improved-estimator-for-canonical-ops}} which uses commutativity---carry over to this unusual fermionic PMR form with the small addition of the $s$ variable.

\section{Review of divided differences}\label{Xsec11-4}\label{Xsec11}
\label{sec:div-diff}

We review the technical details of the divided difference, an essential object in the derivation of the PMR-QMC partition function and our estimators. Our summary is heavily inspired by a similar review in Ref.~\cite{ezzell2025universal} and earlier observations made in Refs.~\cite{albash2017OffdiagonalExpansionQuantum, hen2018OffdiagonalSeriesExpansion}. The divided difference of any holomorphic function $f(x)$ can be defined over the multiset $[x_0, \ldots, x_q]$ using a contour integral~\cite{mccurdy1984accurate, deboor2005DividedDifferences},
\begin{equation}
 \label{eq:contour-int-dd}
 f[x_0, \ldots, x_q] \equiv \frac{1}{2\pi i} \oint\limits_{\Gamma} \frac{f(x)}{\prod_{i=0}^q(x - x_i)} \mathrm{d}x,
\end{equation}
for $\Gamma$ a positively oriented contour enclosing all the $x_i$'s. Several elementary properties utilized in PMR-QMC~\cite{albash2017OffdiagonalExpansionQuantum, gupta2020PermutationMatrixRepresentation} follow directly from this integral representation, or by invoking Cauchy's residue theorem. For example, $f[x_0, \ldots, x_q]$ is invariant to permutations of arguments, the definition reduces to Taylor expansion weights when arguments are repeated,
\begin{equation}
 f[x_0, \ldots, x_q] = f^{(q)}(x) / q!, \ \ \ x_0 = x_1 \ldots = x_q = x,
\label{Xeqn25-29}
\end{equation}
and whenever each $x_i$ is distinct, we find
\begin{equation}
 f[x_0, \ldots, x_q] = \sum_{i=0}^q \frac{f(x_i)}{ \prod_{k \neq i} (x_i - x_k)},
\label{Xeqn26-30}
\end{equation}
the starting definition in Refs.~\cite{albash2017OffdiagonalExpansionQuantum, hen2018OffdiagonalSeriesExpansion, gupta2020PermutationMatrixRepresentation}. Each of these divided difference definitions can be shown to satisfy the Leibniz rule~\cite{deboor2005DividedDifferences},
\begin{align}
 \label{eq:leibniz-rule}
 (f \cdot g)[x_0, \ldots, x_q] &= \sum_{j=0}^q f[x_0, \ldots, x_j] g[x_j, \ldots, x_q] \nonumber\\  &= \sum_{j=0}^q g[x_0, \ldots, x_j] f[x_j, \ldots, x_q]
\end{align}
which is particularly important in our derivations.

Yet another useful way to view the divided difference for our work is to derive its power series expansion,
\begin{align}
 \label{eq:dd-as-series-expansion}
 f[x_0, \ldots, x_q] = \sum_{m=0}^\infty \frac{f^{(q+m)}(0)}{(q+m)!} \sum_{\sum k_j = m} \prod_{j=0}^q x_j^{k_j},
\end{align}
where the notation $\sum_{\sum k_j = m}$ is a shorthand introduced in Refs.~\cite{hen2018OffdiagonalSeriesExpansion, albash2017OffdiagonalExpansionQuantum} which represents a sum over all \emph{weak integer partitions} of $m$ into $q+1$ parts. More explicitly, it is an enumeration over all vectors $\bm{k}$ in the set $\{\bm{k} = (k_0, \ldots, k_q) : k_i \in \mathds{N}_0, \sum_{j=0}^q k_j = m\}$, where $\mathds{N}_0 = \mathds{N} \cup \{0\}$ is the natural numbers including zero. To derive the series expansion, we first Taylor expand $f(x)$ inside the contour integral,
\begin{align}
 f[x_0, \ldots, x_q] &=\sum_{n=0}^\infty \frac{f^{(n)}(0)}{n!} \left( \frac{1}{2 \pi i } \oint_{\Gamma} \frac{x^n}{\prod_{i=0}^q (x - x_i)} \right)
 \nonumber\\& = \sum_{n=0}^\infty \frac{f^{(n)}(0)}{n!} [x_0, \ldots, x_q]^n ,
\end{align}
where we have introduced the shorthand $[x_0, \ldots, x_q]^k \equiv p_k[x_0, \ldots, x_q]$ for $p_k(x) = x^k$ as introduced in Refs.~\cite{hen2018OffdiagonalSeriesExpansion, albash2017OffdiagonalExpansionQuantum}. The divided difference of a polynomial has a closed form expression most easily written with the change of variables $n \rightarrow q + m$,
\begin{equation}
 \label{eq:poly-div-diff}
 [x_0, \ldots, x_q]^{q+m} =
 \begin{cases}
 0 & m < 0 \\
 1 & m = 0 \\
 \sum_{\sum_{k_j} = m} \prod_{j=0}^q x_j^{k_j} & m > 0,
 \end{cases}
\end{equation}
as noted in the same notation in Refs.~\cite{hen2018OffdiagonalSeriesExpansion, albash2017OffdiagonalExpansionQuantum} and derived with a different notation in Ref.~\cite{deboor2005DividedDifferences}. Together, these two observations yield {\eqref{eq:dd-as-series-expansion}}.

As with previous PMR works~\cite{albash2017OffdiagonalExpansionQuantum, gupta2020PermutationMatrixRepresentation}, we are also especially interested in the divided difference of the exponential (DDE) where we utilize the shorthand notation $e^{t [x_0, \ldots, x_q]} \equiv f[x_0, \ldots, x_q]$ for $f(x) = e^{t x}$. Replacing $f^{(q+m)}(0)$ with $t^{q+m}$ gives the power series expansion of the DDE, as derived in the appendices of Refs.~\cite{albash2017OffdiagonalExpansionQuantum, hen2018OffdiagonalSeriesExpansion}. Replacing the variable $x \rightarrow \alpha x$ in {\eqref{eq:contour-int-dd}}, we find the rescaling relation,
\begin{equation}
 \label{eq:rescaling-relation}
 \alpha^q e^{t [\alpha x_0, \ldots, \alpha x_q]} = e^{\alpha t [x_0, \ldots, x_q]},
\end{equation}
which is useful in numerical schemes~\cite{gupta2020CalculatingDividedDifferences, barash2022calculating} and in computing the Laplace transform. In particular, combining {\eqref{eq:rescaling-relation}} with {(\protect\cref{Xeqn11-11})} of Ref.~\cite{kunz1965inverse} with $P_m(x) = 1$, we find
\begin{equation}
 \label{eq:laplace-of-dd}
 \lap{e^{\alpha t [x_0, \ldots, x_q]}} = \frac{\alpha^q}{ \prod_{j=0}^q (s - \alpha x_j)},
\end{equation}
where $\mathcal{L}$ denotes the Laplace transform from $t \rightarrow s$. One can alternatively derive {\eqref{eq:laplace-of-dd}} by directly performing the integration to the contour integral definition in {\eqref{eq:contour-int-dd}}, e.g. by Taylor expanding $e^{\alpha t x}$ and re-summing term-by-term, legitimate by the uniform convergence of the exponential. As shown in Refs.~\cite{zeng2025inequalities, ezzell2025universal}, the Laplace transform is a powerful tool for deriving integral relations of the DDE, and we will make a similar use of it in this work.

\section{Permutation matrix representation quantum Monte Carlo (PMR-QMC)}\label{Xsec12-5}\label{Xsec12}
\label{sec:pmr-qmc}

The permutation matrix representation quantum Monte Carlo (PMR-QMC) algorithm, recently introduced in Ref.~\cite{gupta2020PermutationMatrixRepresentation}, is a universal parameter-free Trotter error-free quantum Monte Carlo algorithm for simulating general quantum and classical many-body models within a single unifying framework. The algorithm builds on a power series expansion of the quantum partition function in its off-diagonal terms~\cite{albash2017OffdiagonalExpansionQuantum, hen2018OffdiagonalSeriesExpansion} in a way that the quantum `imaginary-time' dimension consists of products of elements of a permutation group, allowing for the study of essentially arbitrarily defined systems on the same footing~\cite{albash2017OffdiagonalExpansionQuantum,gupta2020elucidating,halverson2020efficient}. Notably, Ref.~\cite{barash2024PmrQmcCode} developed an automated, deterministic algorithm to generate PMR-QMC update rules that satisfy detailed balance and are ergodic for arbitrary spin-1/2 Hamiltonians, and Ref.~\cite{akaturk2024quantum} did the same for Bose-Hubbard models on arbitrary graphs. This has since been generalized to higher spin systems and hence also to arbitrary bosonic and fermionic systems and mixtures thereof~\cite{babakhani2025quantum}.

\subsection{The off-diagonal series expansion}\label{Xsec13-5.1}\label{Xsec13}
\label{subsec:ODE}
We derive the `off-diagonal series expansion' of $\Tr[f(H)]$, following closely a similar derivation for $\Tr[e^{-\beta H}]$ given in Refs.~\cite{hen2018OffdiagonalSeriesExpansion, albash2017OffdiagonalExpansionQuantum, gupta2020PermutationMatrixRepresentation}. In our derivation, { we assume $H$ is written in PMR form, $\sum_{\sigma \in G} D_\sigma P_\sigma$ for $G$ a valid PMR basis. However, to make a more direct connection to notation used in all prior PMR works, we specifically write $H = D_0 + \sum_j D_j P_j$, which can be viewed as making the assignment $P_k \equiv P_{\sigma_k}$ for $\{\sigma_k\}_{k=0}^{|G|-1}$ a chosen fixed ordering of the group elements with $\sigma_0 = 1$. We further denote $\{\ket{z}\}$ an orthonormal basis in which $D_0$ is diagonal (the ``computational basis'') and assume $f$ is analytic.}

In this case, a direct Taylor expansion of $f(H)$ about $0$ yields
\begin{align}
 \Tr[f(H)] &= \sum_z \sum_{n=0}^\infty \frac{f^{(n)}(0)}{n!} \avg{z | ( D_0 + \sum_j D_j P_j )^n | z} \\
 &= \sum_z \sum_{n=0}^\infty \sum_{\{C_{\bm{i}_n}\}} \frac{f^{(n)}(0)}{n!} \avg{ z | C_{\bm{i}_n} | z},
\end{align}
where in the second line we sum over all operator sequences consisting of $n$ products of $D_0$ and $D_j P_j$ terms, which we denote $\{C_{\bm{i}_n}\}$. The multi-index $\bm{i}_n \equiv (i_1, \ldots, i_n)$ denotes the ordered sequence, i.e. $\bm{i}_3 = (3, 0, 1)$ indicates the sequence $C_{\bm{i}_3} = (D_1 P_1) D_0 (D_3 P_3)$, read from right to left. More generally, each $i_k \in \{0, \ldots, M\}$ denotes a single term from the PMR form $H = \sum_{j=0}^M D_j P_j$.

For convenience, we can separate the contributions from diagonal operators $D_j$ from off-diagonal permutations, $P_j$ which yields the following complicated expression,
\begin{equation}
 \label{eq:initial-tr-expansion}
 \Tr[f(H)] = \sum_z \sum_{q=0}^{\infty} \sum_{\Siq} D_{(z, \Siq)} \avg{ z | \Siq | z} \left( \sum_{n=q}^{\infty} \frac{f^{(n)}(0)}{n!} \sum_{\sum_i k_i = n - q} E^{k_0}_{z_0} \cdot \ldots \cdot E^{k_q}_{z_q} \right),
\end{equation}
justified in prior works~\cite{albash2017OffdiagonalExpansionQuantum, hen2018OffdiagonalSeriesExpansion}. We now summarize the notation, which will be used throughout. Firstly, $\Siq = P_{\iq} \cdot \ldots \cdot P_{\bm{i}_1}$ denotes a product over $q$ permutations, each taken from $\{P_j\}_{j=1}^M$, i.e., the multiset indices are now $\bm{i}_j \in \{1, \ldots, M\}$. Next, we denote $\ket{z_0} \equiv \ket{z}$ and $\ket{z_k} \equiv P_{\bm{i}_k} \cdot \ldots \cdot P_{\bm{i}_1} \ket{z}$. This allows us to define the ``diagonal energies'' as $E_{z_k} \equiv \avg{z_k | H | z_k} = \avg{z_k | D_0 | z_k}$ (recall $\ket{z_k}$ is a basis for $D_0$ not of $H$) and the off-diagonal ``hopping strengths'' $D_{(z, \Siq)} \equiv \prod_{k=1}^q \avg{z_k | D_{\bm{i}_k} | z_k}$. The sum over $\sum_{\sum k_i}$ is again the set of weak partitions of $n - q$ into $q+1$ integers, as described in the series expansion of the DDE.

Having defined the notation, we now observe that the diagonal contribution in parentheses is exactly the series expansion of $f[E_{z_0}, \ldots, E_{z_q}]$ in {\eqref{eq:dd-as-series-expansion}} via the change of variables $m \equiv n - q$. Furthermore, by the no fixed points property of the PMR, we know $\avg{z | \Siq | z} = 1$ if and only if $\Siq = \mathds{1}$. Put together, we find the non-trivial simplification,
\begin{equation}
 \label{eq:off-diagonal-series-expansion}
 \Tr[f(H)] = \sum_z \sum_{q=0}^{\infty} \sum_{\Siq = \mathds{1}} D_{(z, \Siq)} f[E_{z_0}, \ldots, E_{z_q}],
\end{equation}
where $\sum_{\Siq = \mathds{1}}$ is the sum over all products of $q$ permutations chosen from $\{P_j\}_{j=1}^M$ that evaluate to identity. This form is what we refer to as the off-diagonal series expansion of $\Tr[f(H)]$, and as the name suggests, it is a perturbative series in $q$, the size of the off-diagonal ``quantum dimension.'' Put differently, when $q = 0$, this is the expansion of $\Tr[f(D_0)]$ (which yields the classical partition function for $e^{-\beta D_0}$), and the remaining terms correct for the off-diagonal, non-commuting (or quantum) contribution.

\subsection{The PMR-QMC algorithm}\label{Xsec14-5.2}\label{Xsec14}
\label{subsec:PMR-QMC}

In the special case where $f(H) = e^{-\beta H}$, with $\beta \equiv 1 / T$ the inverse temperature, {(\cref{eq:off-diagonal-series-expansion})}
becomes an expansion for the partition function of the Hamiltonian $H$,
\begin{equation}
 \label{eq:partition-function}
 \mathcal{Z} = \Tr[\e^{-\beta H}]=\sum_{(z,\iq)} D_{(z,S_{{{\bf i}}_q})} \e^{-\beta [E_{z_0},\ldots,E_{z_q}]}
\end{equation}
where the summation above is shorthand for ${\sum_z \sum_{q=0}^{\infty} \sum_{S_{{{\bf i}}_q}=\mathds{1}}}$, namely, a sum over all `classical' states $z$ and all products of off-diagonal permutation matrices that evaluate to the identity operator~\cite{gupta2020PermutationMatrixRepresentation}. Given the above expression for $\mathcal{Z}$, we are now in a position to associate a QMC algorithm with the above expansion~\cite{gupta2020PermutationMatrixRepresentation,barash2024QuantumMonteCarlo, akaturk2024quantum, babakhani2025quantum}.

We define a QMC configuration ${\cal C }$ as any pair \hbox{${\cal C}=\{|z\rangle, S_{{{\bf i}}_q}\}$} [or ${(z,\iq )}$ for short] of
a basis state and a product $S_{{{\bf i}}_q}$ of permutation operators that evaluate to the identity element $P_0=\mathds{1}$ with associated generalized Boltzmann weight,
\begin{equation}
 \label{eq:gbw}
 w_{\mathcal{C}} \equiv w_{(z,{{\bf i}})} \equiv D_{(z,S_{{{\bf i}}_q})} \e^{-\beta [E_{z_0},\ldots,E_{z_q}]}.
\end{equation}
The configuration and associated weight can be conveniently visualized as a closed walk on a hypercube of classical basis states~\cite{gupta2020PermutationMatrixRepresentation, hen2021determining}. In general, this weight can be complex-valued through $D_{(z, \Siq)}$ since each $D_j$ can have complex entries. However, for every configuration $(z,\Siq)$ there is a conjugate configuration $(z,\Siq^{\dagger})$, which produces the conjugate weight $w_{(z,\iq^*)} = w^*_{(z,\iq)}$. Explicitly, for every closed walk $S_{\iq} =P_{i_q} \ldots P_{i_2} P_{i_1}$ there is a conjugate walk in the reverse direction, whose operator sequence is $S^\dagger_{{\bf i}_q} = P_{i_1}^T P_{i_2}^T \ldots P_{i_q}^T$. The imaginary parts of the complex-valued summands therefore do not contribute to the partition function and may be disregarded altogether. We may therefore take
\beq\label{eq:W}
W_{(z,\iq)} = \Re[w_{(z,{{\bf i}})}] = \Re[D_{(z,\iq)}] \e^{-\beta [E_{z_0},\ldots,E_{z_q}]} \,
\eeq
as the summands, or weights, of the expansion, averaging for every $(z,\iq)$ the configuration and its conjugate. Of course, the weights may in the general case be negative (when this happens, the system is said to possess a sign problem~\cite{hen2021determining}). Other choices, such as the absolute value of $w_{(z, \bm{i}_q)}$, are possible~\cite{babakhani2025quantum}.

We now describe a QMC algorithm based on sampling partition function configurations of {\eqref{eq:partition-function}} with associated weights {\eqref{eq:W}}. The Markov process begins with initial configuration \hbox{$\mathcal{C}_0=\{|z\rangle, S_0=\mathds{1}\}$} where $|z\rangle$ is a randomly generated initial classical (equivalently, diagonal) state and $S_0$ is the empty operator sequence. The weight of this initial configuration is
\beq
W_{\mathcal{C}_0} = w_{{\cal C}_0} = \e^{-\beta [E_z]}=\e^{-\beta E_z} ,
\eeq
i.e., the classical Boltzmann weight of the initial random state $|z\rangle$.

Next, we define a set of QMC updates to sample the configuration space, $(z, \Siq).$ A set of general, local updates was first proposed in Ref.~\cite{gupta2020PermutationMatrixRepresentation} that have successfully been applied to a variety of spin systems~\cite{gupta2020PermutationMatrixRepresentation, gupta2020elucidating}, superconducting circuit Hamiltonians~\cite{halverson2020efficient}, and Bose-Hubbard models~\cite{akaturk2024quantum}. In a major advancement, Ref.~\cite{barash2024QuantumMonteCarlo} showed that QMC moves that are ergodic and satisfy detailed balance can be found deterministically and automatically for arbitrary spin-1/2 Hamiltonians. This has recently been extended to arbitrary high spin systems~\cite{babakhani2025quantum} and the Bose-Hubbard model~\cite{babakhani2025quantum}. With appropriate modifications (i.e., see {\cref{subsec:fermionic-example}}), these ideas can also be extended to fermionic systems~\cite{babakhani2025fermionic}. As outlined in Sec. IV.D of Ref.~\cite{barash2024QuantumMonteCarlo}, these moves consist of (i) simple (local) swap, (ii) pair insertion and deletion, (iii) block swap, (iv) classical updates, (v) fundamental cycle completion, (vi) composite updates, and (vii) worm updates.

We leave a detailed discussion of these updates to the relevant references~\cite{barash2024QuantumMonteCarlo, akaturk2024quantum, babakhani2025quantum, babakhani2025fermionic}, but for a basic understanding, we discuss the simple (local) swap and classical moves. To explain simple swap, let $m \in \{1, \ldots, q-1\}$. By {\cref{corr:Siq-must-be-identity}}, if $\Siq = P_{\iq} \ldots P_{\bm{i}_m} P_{\bm{i}_{m+1}} \ldots P_{\bm{i}_1} = \mathds{1}$, then $\Siq' \equiv P_{\iq} \ldots P_{\bm{i}_{m+1}} P_{\bm{i}_{m}} \ldots P_{\bm{i}_1} = \mathds{1}$ as well. Yet, these sequences generally have different PMR-QMC weights, $W_{\mathcal{C}}$ and $W_{\mathcal{C'}}$, respectively. For example, the classical basis state $\ket{z_m} \equiv P_{\bm{i}_m} \ldots P_{\bm{i}_1} \ket{z} \rightarrow \ket{z'_m} \equiv P_{\bm{i}_{m+1}}, P_{\bm{i}_{m-1}}, \ldots, P_{\bm{i}_1} \ket{z}$. The acceptance probability for this update that satisfies detailed balance is simply
\begin{equation}
 \label{eq:acceptance-prob}
 p = \min\left( 1, \frac{W_{\mathcal{C'}}}{W_{\mathcal{C}}} \right).
\end{equation}
The classical update is simply to update $\ket{z} \rightarrow \ket{z'}$, i.e., by a local spin flip for spin systems, while leaving $\Siq$ unchanged. The acceptance probability to satisfy detailed balance is also {\eqref{eq:acceptance-prob}} in this case. This is an expensive move for quantum simulations since it requires updating all classical energies ${E_{z}, \ldots, E_{z_q}}$ in the divided difference multiset, but for classical simulations, it is both simple and actually the only move with non-zero probability. This emphasizes the nice feature that PMR-QMC naturally reduces to classical QMC when $H = D_0 = \Hdiag$\footnote{This was one of the original motivations for the off-diagonal series expansion~\cite{hen2018OffdiagonalSeriesExpansion} and associated QMC~\cite{albash2017OffdiagonalExpansionQuantum} that eventually became the early version of PMR-QMC~\cite{gupta2020PermutationMatrixRepresentation}.}. Other moves are more complicated, and in the case of the (v), require defining the notion of a fundamental cycle~\cite{barash2024QuantumMonteCarlo}.

A complete description of a full PMR-QMC algorithm also includes a discussion of how to estimate observables, which we spend the rest of this paper discussing in great detail.

\section{Estimation of static observables}\label{Xsec15-6}\label{Xsec15}
\label{sec:static-operator-estimators}
Given a static observable $A$, we call any function $A_{\mathcal{C}}$ such that
\begin{equation}
 \label{eq:estimator-form}
 \avg{A} \equiv \frac{\Tr[A e^{-\beta H}]}{\Tr[e^{-\beta H}]} = \frac{\sum_{(z, \iq)} w_{(z, \iq)} A_{(z, \iq)} }{ \sum_{(z, \iq)} w_{(z, \iq)}} \equiv \frac{\sum_{\mathcal{C}} w_{\mathcal{C}} A_{\mathcal{C}} }{ \sum_{\mathcal{C}} w_{\mathcal{C}}}
\end{equation}
a PMR-QMC estimator of $A$. Generally, estimators are not unique, so we can more precisely write
\begin{equation}
 A_{\mathcal{C}} \estimates \avg{A},
\label{Xeqn37-47}
\end{equation}
to be read ``$A_{\mathcal{C}}$ is an (unbiased)~\footnote{Deriving biased estimators is outside the scope of the present work.} estimator of $\avg{A}$'' when it is important to make the non-uniqueness clear. We use both notations as convenient. As described in {\cref{sec:pmr-qmc}}, $\mathcal{C} \equiv (z, \iq)$ specifies an instantaneous PMR-QMC configuration and $w_{\mathcal{C}}$ is the generalized Boltzmann weight defined in {\eqref{eq:gbw}}. Throughout the course of the PMR-QMC simulation, one can estimate $A$ by occasionally computing $A_{\mathcal{C}}$ and averaging over such realizations.

As with weight in {\eqref{eq:W}}, one can always define a real-valued estimator via,
\begin{equation}
 \label{eq:real-estimator-rule}
 \avg{A} = \frac{\sum_{\mathcal{{C}}} W_\mathcal{C} (\Re[A_\mathcal{C} w_\mathcal{C}] / W_{\mathcal{C}})}{\sum_{\mathcal{C}} W_\mathcal{C}},
\end{equation}
where $W_\mathcal{C} \equiv \Re[w_\mathcal{C}]$ as before~\cite{barash2024QuantumMonteCarlo}. Comparing with {\eqref{eq:estimator-form}}, we see that $\Re[A_\mathcal{C} w_\mathcal{C}] / W_{\mathcal{C}}$ is the instantaneous quantity, i.e., the real-valued equivalent of $A_\mathcal{C}$. This subtlety does not play a role in our derivations; therefore, we henceforth derive expressions for potentially complex estimators $A_{\mathcal{C}}$ throughout for simplicity. However, we note that in our actual implementation~\cite{ezzell2025code}, we use {\eqref{eq:real-estimator-rule}}.

In the rest of this section, we derive explicit, closed form, and exact estimators for various static observables. A high-level summary of the different observables we consider is contained in {\cref{sec:overview-of-estimators}}. We ultimately show ({\cref{subsec:estimating-arbitrary-operators}}) that thermal expectation values of arbitrary static operators can be estimated in principle. Along the way, we derive several simpler estimators (i.e., for $\avg{H^k}$) that are more efficient when applicable. A summary of static operator estimators and complexities is provided in {\cref{tab:static-complexity}}, though complexity itself is not formally defined until {\cref{subsec:complexity-interlude}}.

\begin{table}
\caption{A summary of static observable estimators we derive in this work and their computational complexity in terms of the PMR-QMC off-diagonal expansion order, $q$. { As elsewhere throughout this work, $\Lambda$ is an arbitrary diagonal observable, $\Hdiag = D_0$ is the pure diagonal part of the Hamiltonian, $\Hoffdiag = H - \Hdiag$, and $D_l P_l$ are diagonal-permutation pairs contained in the PMR decomposition of $H$}.}{%
\begin{tabular}{lll}
\toprule
{Static observable} & {Estimator} & {Estimator complexity} \\
\colrule
$\avg{\Lambda^k}$ & {\eqref{eq:diag-to-k-estimator}} & $O(1)$ \\
$\avg{\Hdiag^k}$ & {\eqref{eq:d0-k-estimator}} & $O(1)$ \\
$\avg{H}$ & {\eqref{eq:H-estimator}} & $O(1)$ \\
$\avg{H^2}$ & {\eqref{eq:H2-estimator}} & $O(1)$ \\
$\avg{H^k}$ & {\eqref{eq:H-n-estimator}} & $O(k)$ \\
$\avg{\Hoffdiag}$ & {\eqref{eq:hoffdiag-estimator}} & $O(1)$ \\
$\avg{\Hoffdiag^2}$ & {\eqref{eq:hoffdiag2-estimator}} & $O(1)$ \\
$\avg{D_l P_l}$ & {\eqref{eq:dl-pl-estimator-delta}} & $O(1)$ \\
$\avg{\Lambda_l D_l P_l}$ & {\eqref{eq:al-dl-pl-estimator}} & $O(1)$ \\
$\avg{\sum_{l=0}^{K-1} \Lambda_l D_l P_l}$ & {\eqref{eq:al-dl-pl-estimator}} & $O(K)$ \\
$\avg{\Lambda_1 D_{\bm{l}_1} P_{\bm{l}_1} \cdot \ldots \cdot \Lambda_l D_{\bm{l}_L} P_{\bm{l}_L} \Lambda_{L+1}}$ & {\eqref{eq:A-canonical-estimator}} & $O(L)$ \\
\botrule
\end{tabular}}
\label{tab:static-complexity}
\end{table}

\subsection{Estimating purely diagonal operators}\label{Xsec16-6.1}\label{Xsec16}
\label{subsec:estimating-pure-diag-operators}
Suppose $\Lambda$ is an arbitrary diagonal operator with matrix elements $\Lambda(z) \equiv \avg{z | \lambda | z}.$ By writing out the trace and performing an off-diagonal series expansion of $e^{-\beta H}$, we find,
\begin{align}
 \Tr[\Lambda e^{-\beta H}] = \sum_{z} \Lambda(z) \avg{z | e^{-\beta H} | z} = \sum_{z} \sum_{\Siq = \mathds{1}} w_{(z, \Siq)} \Lambda(z).
\end{align}
By inspection of this expression and the form of a general estimator in {\eqref{eq:estimator-form}}, we immediately conclude
\begin{equation}
 \label{eq:diag-estimator}
 (\Lambda)_{\mathcal{C}} \equiv \Lambda(z) \estimates \avg{\Lambda}.
\end{equation}
We note that if $H = D_0$ were a classical Hamiltonian, then $\Lambda(z)$ is simply a straightforward classical MC estimator---consistent with the general logic and spirit of the off-diagonal series expansion. This becomes especially apparent when $\Lambda = D_0$ for which $E_{z} \equiv \avg{z | D_0 | z}$ and we can write
\begin{equation}
 \label{eq:d0-estimator}
 (D_0)_{\mathcal{C}} \equiv E_z \estimates \avg{D_0} \equiv \avg{\Hdiag}.
\end{equation}
Note that the second equality is by definition---$\Hdiag$ was used in {\cref{sec:numerical-demo}} for clarity since the PMR decomposition for which $D_0 \equiv \Hdiag$ had not yet been introduced.

As an additional observation, {\eqref{eq:diag-estimator}} also directly gives us the estimator
\begin{equation}
 \label{eq:diag-to-k-estimator}
 (\Lambda^k)_{\mathcal{C}} \equiv (\Lambda(z))^k \estimates \avg{\Lambda^k}
\end{equation}
since any power of $\Lambda$ is itself a diagonal operator. Hence, we can also write,
\begin{equation}
 \label{eq:d0-k-estimator}
 (D_0^k)_{\mathcal{C}} \equiv (E_z)^k \estimates \avg{(D_0)^k}.
\end{equation}

\subsection{Estimating functions of the Hamiltonian}\label{Xsec17-6.2}\label{Xsec17}
\label{subsec:estimating-f(H)-things}
Let $g(H)$ be an arbitrary analytic function of the Hamiltonian. Choosing $f(H) = g(H) e^{-\beta H}$ in the off-diagonal series expansion of $f(H)$ in {\eqref{eq:off-diagonal-series-expansion}}, we find
\begin{align*}
 \Tr[g(H) e^{-\beta H}] &= \sum_{(z, \iq)} D_{(z, \iq)} \sum_{j=0}^q g[E_{z_j}, \ldots, E_{z_q}] e^{-\beta [E_{z_0}, \ldots, E_{z_j}]}
\end{align*}
by direct usage of the Leibniz rule for divided differences (see {\eqref{eq:leibniz-rule}}). To coax this expression into a bona fide PMR estimator, we simply multiply by $e^{-\beta [E_z, \ldots, E_{z_q}]} / e^{-\beta [E_z, \ldots, E_{z_q}]}$ upon which we find,
\begin{equation}
 \avg{g(H)} = \frac{ \sum_{(z, \iq)} w_{(z, \iq)} \left( \sum_{j=0}^q g[E_{z_j}, \ldots, E_{z_q}] \frac{ e^{-\beta [E_{z_0}, \ldots, E_{z_j}]} } { e^{-\beta [E_z, \ldots, E_{z_q}]} } \right) }{ \sum_{(z, \iq)} w_{(z, \iq)} },
\label{Xeqn43-54}
\end{equation}
where we can identify
\begin{equation}
 \label{eq:g(H)-estimator}
 (g(H))_{\mathcal{C}} \equiv \sum_{j=0}^q g[E_{z_j}, \ldots, E_{z_q}] \frac{ e^{-\beta [E_{z_0}, \ldots, E_{z_j}]} } { e^{-\beta [E_z, \ldots, E_{z_q}]} } \estimates \avg{g(H)}
\end{equation}
as the quantity to compute and collect during QMC simulation in order to estimate $\avg{g(H)}$.

In the special case $g(H) = H$, we find
\begin{equation}
 (H)_\mathcal{C} \equiv E_{z_q} + \frac{e^{-\beta [E_{z_0}, \ldots, E_{z_{q-1}}}]}{e^{-\beta [E_{z_0}, \ldots, E_{z_q}]}} \estimates \avg{H}.
\label{Xeqn45-56}
\end{equation}
In the spirit of the off-diagonal series expansion, this expression has a purely diagonal/classical contribution, $E_{z_0}$, and an off-diagonal correction, the ratio of DDEs. Formally, this derivation assumes $q \geq 1$ for the ratio of DDEs to appear, and more precisely we can write,
\begin{equation}
 \label{eq:H-estimator}
 (H)_{\mathcal{C}} = E_{z_q} + \mathbf{1}_{q\geq 1} \frac{e^{-\beta [E_{z_0}, \ldots, E_{z_{q-1}}}]}{e^{-\beta [E_{z_0}, \ldots, E_{z_q}]}} = \begin{cases}
 E_{z_q} & q = 0 \\
 E_{z_q} + \frac{e^{-\beta [E_{z_0}, \ldots, E_{z_{q-1}}}]}{e^{-\beta [E_{z_0}, \ldots, E_{z_q}]}} & q > 0,
 \end{cases}
\end{equation}
where $\mathbf{1}_{q \geq 1}$ is the indicator function that is 0 when $q < 1$ and 1 when $q \geq 1$. Henceforth, we will continue this convention of assuming $q$ is large enough to support all terms in our derivations but providing concrete corrections to specific estimators.

Similarly, we can immediately write the estimator for any integer power of $H$,
\begin{equation}
 \label{eq:H-n-estimator}
 (H^k)_{\mathcal{C}} \equiv \sum_{j=0}^{\max\{k, q\}} [E_{z_j}, \ldots, E_{z_q}]^n \frac{e^{-\beta [E_{z_0}, \ldots, E_{z_j}]}}{e^{-\beta [E_{z_0}, \ldots, E_{z_q}]}} \estimates \avg{H^k},
\end{equation}
where the explicit expression $[E_{z_0}, \ldots, E_{z_j}]^n$ can be deduced from {\eqref{eq:poly-div-diff}}. For example,
\begin{align}
 \label{eq:H2-estimator}
 (H^2)_{\mathcal{C}} &\equiv E_{z_q}^2 + \frac{\mathbf{1}_{q \geq 1}(E_{z_q} + E_{z_{q-1}}) e^{-\beta [E_{z_0}, \ldots, E_{z_{q-1}}]} + \mathbf{1}_{q \geq 2}e^{-\beta [E_{z_{0}}, \ldots, E_{z_{q-2}}]} }{ e^{-\beta [E_{z_0}, \ldots, E_{z_q}]} }
\end{align}
and
\begin{multline}
 \label{eq:H3-estimator}
 (H^3)_{\mathcal{C}} \equiv E_{z_q}^3 + \mathbf{1}_{q\geq 1} \frac{(E^2_{z_q} + E_{z_q} E_{z_{q-1}} + E^2_{z_{q-1}}) e^{-\beta [E_{z_0}, \ldots, E_{z_{q-1}}]}}{e^{-\beta [E_{z_0}, \ldots, E_{z_q}]}} \\
 + \mathbf{1}_{q\geq 2} \frac{(E_{z_q} + E_{z_{q-1}} + E_{z_{q-2}}) e^{-\beta [E_{z_0}, \ldots, E_{z_{q-2}}]}}{e^{-\beta [E_{z_0}, \ldots, E_{z_q}]}}
 + \mathbf{1}_{q\geq 3} \frac{e^{-\beta [E_{z_0}, \ldots, E_{z_{q-3}}]} }{ e^{-\beta [E_{z_0}, \ldots, E_{z_q}]} },
\end{multline}
and so on. As mentioned in our numerical demonstration ({\cref{sec:numerical-demo}}), one can straightforwardly use our estimators $(H^2)_{\mathcal{C}}$ and $(H)_{\mathcal{C}}$ to estimate specific heat via {\eqref{eq:specific-heat}} with a simple jackknife binning analysis~\cite{berg2004introduction}. Finally, we remark that our estimators for $\avg{H}$ and $\avg{H^2}$ agree with those derived using a physics-inspired derivation~\cite{barash2024QuantumMonteCarlo} despite using an approach only utilizing general divided difference properties, highlighting the versatility and generality of the PMR-QMC approach.

\subsection{Estimating the pure off-diagonal portion of the Hamiltonian}\label{Xsec18-6.3}\label{Xsec18}
\label{subsec:estimating-Hoffdiag}

By definition $\Hoffdiag = H - \Hdiag$, so by linearity of expectation, we immediately find $\avg{\Hoffdiag} = \avg{H} - \avg{\Hdiag}$. Again by linearity, we can simply write $(\Hoffdiag)_{\mathcal{C}} = (H)_{\mathcal{C}} - (\Hdiag)_{\mathcal{C}}$ which gives
\begin{equation}
 \label{eq:hoffdiag-estimator}
 (\Hoffdiag)_{\mathcal{C}} \equiv \frac{e^{-\beta [E_{z_0}, \ldots, E_{z_{q-1}}}]}{e^{-\beta [E_{z_0}, \ldots, E_{z_q}]}} \estimates \avg{\Hoffdiag}
\end{equation}
by combining {\eqref{eq:d0-estimator}}, {\eqref{eq:H-estimator}}, and the $E_{z_0} = E_{z_q}$ periodicity induced by only $\Siq = \mathds{1}$ terms contributing~\footnote{This periodicity of $\ket{z}$ comes from the trace, and it is the PMR analogue of $\beta$ periodicity in world line methods.}. Similarly, since $\Hoffdiag^2 = (H - \Hdiag)^2,$ then by the cyclicity of the trace, we can write
\begin{align}
 \avg{\Hoffdiag^2}& = \avg{H^2 - H \Hdiag - \Hdiag H + \Hdiag^2} \nonumber\\&= \avg{H^2} - 2 \avg{\Hdiag H} + \avg{\Hdiag^2},
\label{Xeqn49-62}
\end{align}
by linearity and cyclicity of the trace. Expanding the numerator of the middle term,
\begin{equation}
 \Tr[\Hdiag H] = \sum_z E_z \avg{z | H e^{-\beta H} | z},
\label{Xeqn50-63}
\end{equation}
we find
\begin{equation}
 E_z (H)_{\mathcal{C}} \equiv E_{z_0} \left(E_{z_0} + \mathbf{1}_{q\geq 1} \frac{e^{-\beta [E_{z_1}, \ldots, E_{z_q}}]}{e^{-\beta [E_{z_0}, \ldots, E_{z_q}]}} \right) \estimates \avg{\Hdiag H},
\label{Xeqn51-64}
\end{equation}
by applying the Leibniz rule off-diagonal expansion of $\avg{z | H e^{-\beta H} | z}$ as in {\cref{subsec:estimating-f(H)-things}}. Together with {\eqref{eq:d0-k-estimator}}, {\eqref{eq:H2-estimator}}, and $E_{z_0} = E_{z_q}$, we find
\begin{equation}
 \label{eq:hoffdiag2-estimator}
 (\Hoffdiag^2)_{\mathcal{C}} \equiv (H)_{\mathcal{C}} + E_{z_q}(E_{z_q} - 2 (H)_{\mathcal{C}}) \estimates \avg{\Hoffdiag^2}.
\end{equation}
Higher powers can be continually derived in this way by using {\eqref{eq:d0-k-estimator}} and {\eqref{eq:H-n-estimator}}.

\subsection{Estimating terms that comprise the Hamiltonian}\label{Xsec19-6.4}\label{Xsec19}
\label{subsec:estimating-dl-pl}
In this section, we derive estimators for any term of the form $D_l P_l$, i.e., any term contained within the PMR decomposition $H = \sum_j D_j P_j$. As a shorthand, we write $D_j P_j \in H$ which also implies $P_j \in G$, the PMR group in which we have decomposed $H$. Further, we assume $\diag{D_j} \neq \mathbf{0}$ since any such trivial term where $\diag{D_j} = \mathbf{0}$ clearly satisfies $\avg{D_j P_j} = \avg{0} = 0$.

By {\cref{subsec:estimating-pure-diag-operators}}, we have already solved the $l = 0$, $P_0 = \mathds{1}$ case with $E_{z_0} \estimates \avg{D_0},$ so we assume $l \neq 0$. Firstly, a straightforward off-diagonal series expansion yields,
\begin{align}
 \label{eq:dl-pl-first-step}
 \Tr[D_l P_l e^{-\beta H}] &= \sum_z \avg{z | D_l | z} \avg{z | P_l e^{-\beta H} | z} = \sum_{z} \sum_{\Sip} D_l(z) w_{(z, \Sip)} \avg{z | P_l \Sip | z},
\end{align}
where $D_l(z) \equiv \avg{z | D_l | z}$. By the PMR properties of the permutation, $\avg{z | P_l \Sip | z} = 1$ if and only if $P_l \Sip = \mathds{1}$ and is $0$ otherwise. The key insight first observed in Refs.~\cite{albash2017OffdiagonalExpansionQuantum, gupta2020CalculatingDividedDifferences}, is that we can combine $P_l \Sip \equiv \delta^{(q)}_{P_l} \Siq$ where
\begin{equation}
 \delta^{(q)}_{P_l} \equiv
 \begin{cases}
 1 & P_{\iq} = P_l \\
 0 & P_{\iq} \neq P_l
 \end{cases}.
\label{Xeqn53-67}
\end{equation}
This enforces that $\Siq$ ends with the permutation $P_l$, and we can treat $D_l P_l$ as the $q{\text{th}}$ off-diagonal contribution in a series expansion involving $\Siq$ for $q = p + 1$ instead of $\Sip$. In the end, we can write,
\begin{align}
 \Tr[D_l P_l e^{-\beta H}] = \sum_z \sum_{\Siq = \mathds{1}} w_{(z, \Siq)} \left( \delta^{(q)}_{P_l} \frac{ e^{-\beta [E_{z_0}, \ldots, E_{z_{q-1}}]} }{ e^{-\beta [E_{z_0}, \ldots, E_{z_q}]} } \right),
\end{align}
where we have replaced the actual weight from the off-diagonal at order $p$,
\begin{equation}
 w_{(z, \Sip)} = \prod_{j=1}^p D_{\bm{i}_j}(z_j) e^{-\beta [E_{z_0}, \ldots, E_{z_p}]},
\label{Xeqn54-69}
\end{equation}
with a `fictitious' total off-diagonal weight at order $q = p + 1$,
\begin{equation}
 w_{(z, \Siq)} = \prod_{j=1}^{p+1} D_{\bm{i}_j}(z_j) e^{-\beta [E_{z_0}, \ldots, E_{z_{p+1}}]} \equiv \prod_{j=1}^{q} D_{\bm{i}_j}(z_j) e^{-\beta [E_{z_0}, \ldots, E_{z_{q}}]},
\label{Xeqn55-70}
\end{equation}
using the relation
\begin{align}
 \label{eq:siq-weight-relation}
 w_{(z, \Siq)} = w_{(z, \Sip)} D_l(z_q) \frac{ e^{-\beta [E_{z_0}, \ldots, E_{z_q}]} }{ e^{-\beta [E_{z_0}, \ldots, E_{z_{q-1}}]} }.
\end{align}
In total, this gives us an estimator,
\begin{equation}
 \label{eq:dl-pl-estimator-delta}
 (D_l P_l)_{\mathcal{C}} \equiv \delta^{(q)}_{P_l} \frac{ e^{-\beta [E_{z_0}, \ldots, E_{z_{q-1}}]} }{ e^{-\beta [E_{z_0}, \ldots, E_{z_q}]} } \estimates \avg{D_l P_l}.
\end{equation}
Comparing this expression with the estimator for $\avg{\Hoffdiag}$ in {\eqref{eq:hoffdiag-estimator}}, we see that $(D_l P_l)_{\mathcal{C}}$ is the off-diagonal contribution from the $P_l$ permutation and $\sum_{l>0}(D_l P_l)_{\mathcal{C}} = (\Hoffdiag)_{\mathcal{C}}$ as expected from linearity.

By the linearity of expectation, we have thus also derived an estimator for any sum of terms in $H$,
\begin{equation}
 \sum_{l \in S_A} (D_l P_l)_{\mathcal{C}} \estimates \avg{ \sum_{l \in S_A} D_j P_j }, \ \ \ S_A \subset \{0, \ldots, M\}.
\label{Xeqn57-73}
\end{equation}
Of course, if $S_A = \{0\}$, $S_A = \{0, 1, \ldots, M\},$ or $S_A = \{1, \ldots, M\}$, it is more efficient to simply use the direct estimators for $\avg{D_0},$ $\avg{H},$ or $\avg{\Hoffdiag}$ as in {\eqref{eq:d0-estimator}}, \eqref{eq:hoffdiag-estimator}, \eqref{eq:H-estimator}, respectively.

\subsection{A generalization of estimating terms that comprise the Hamiltonian: Canonical estimators}\label{Xsec20-6.5}\label{Xsec20}
\label{subsec:estimating-al-pl}
Having established that we can easily estimate $\avg{D_l P_l}$ for $D_l P_l \in H$, we now discuss the slight generalization $\avg{\wtlam_l P_l}$ for $\wtlam_l$ a general diagonal matrix. As before, if $l = 0$, we find by {\cref{subsec:estimating-pure-diag-operators}} that $\wtlam_0(z) \estimates \avg{\wtlam_0},$ so we assume $l \neq 0$. By the off-diagonal series expansion,
\begin{equation}
 \label{eq:al-pl-expansion}
 \Tr[\wtlam_l P_l e^{-\beta H}] = \sum_{z}\sum_{\Sip} \wtlam_l(z) w_{(z, \Sip)} \avg{z | P_l \Sip | z},
\end{equation}
and proceeding as before by introducing $P_l \Sip = \delta^{(q)}_{P_l} \Siq$ and using the conversion in \eqref{eq:siq-weight-relation}, we find
\begin{equation}
 \label{eq:al-pl-estimator}
 (\wtlam_l P_l)_{\mathcal{C}} \equiv \delta^{(q)}_{P_l} \frac{\wtlam_l(z)}{D_l(z)} \frac{ e^{-\beta [E_{z_0}, \ldots, E_{z_{q-1}}]} }{ e^{-\beta [E_{z_0}, \ldots, E_{z_q}]} },
\end{equation}
which---unlike {\eqref{eq:dl-pl-estimator-delta}}---now contains a ratio of matrix elements $\wtlam_l(z) / D_l(z)$. This small difference can lead to incorrect estimation for some $\wtlam_l P_l$, and as one might expect intuitively, issues may arise whenever $D_l(z) = 0$. Indeed, this ``division by zeros'' may (but does not always) lead to PMR-QMC estimation inaccuracy, as discussed in {\cref{subsec:illustrative-examples}}.

A sufficient condition to sidestep this problem completely is $D_l(z) = 0 \Rightarrow \wtlam_l(z) = 0$. As a special case, this includes diagonals that are never zero, $D_l(z) \neq 0$ for any $\ket{z}$. A necessary and sufficient condition to satisfy this implication is $\wtlam_l = \Lambda_l D_l$ for $\Lambda_l$ an arbitrary diagonal matrix, as zeros of $D_l$ are now passed on to $\wtlam_l$. Intuitively, the corresponding estimator,
\begin{equation}
 \label{eq:al-dl-pl-estimator}
 (\Lambda_l D_l P_l)_{\mathcal{C}} \equiv \delta^{(q)}_{P_l} \Lambda_l(z) \frac{ e^{-\beta [E_{z_0}, \ldots, E_{z_{q-1}}]} }{ e^{-\beta [E_{z_0}, \ldots, E_{z_q}]} }.
\end{equation}
does not divide by anything that can possibly be zero. By a similar derivation, we can readily estimate the product of two such operators,
\begin{equation}
 (\Lambda_k D_k P_k \Lambda_l D_l P_l)_{\mathcal{C}} \equiv \delta_{P_k}^{(q)} \delta_{P_l}^{(q-1)} \Lambda_k(z_q) \Lambda_l(z_{q-1}) \frac{e^{-\beta [E_{z_0}, \ldots, E_{z_{q-2}}]}}{e^{-\beta [E_{z_0}, \ldots, E_{z_q}]}}.
\label{Xeqn61-77}
\end{equation}
More generally, the PMR-QMC estimator for a finite product
\begin{equation}
 \label{eq:product-canonical-operator}
 A \equiv \Lambda_1 D_{\bm{l}_1} P_{\bm{l}_1} \Lambda_2 D_{\bm{l}_2} P_{\bm{l}_2} \cdot \ldots \cdot \Lambda_l D_{\bm{l}_L} P_{\bm{l}_L} \Lambda_{L+1} = \prod_{s=1}^L (\Lambda_l D_{\bm{l}_s} P_{\bm{l}_s}) \Lambda_{L+1},
\end{equation}
can be written
\begin{align}
 \label{eq:A-canonical-estimator}
 (A)_{\mathcal{C}} &\equiv \mathcal{A}(\mathcal{C}) \frac{e^{-\beta [E_{z_0}, \ldots, E_{z_{q-L}}]}}{e^{-\beta [E_{z_0}, \ldots, E_{z_q}]}} \\
 \mathcal{A}(\mathcal{C}) &\equiv \Lambda_{L+1}(z_{q-L}) \prod_{s=1}^L \Lambda_s (z_{q-s+1}) \delta^{(q-s+1)}_{P_{\bm{l}_{s}}}.
\end{align}
{
For brevity, we call operators that can be written in the form of $A$ in {\eqref{eq:product-canonical-operator}} \emph{canonical}, and their corresponding \emph{canonical estimators} can always be evaluated in practice without possible issues of ``division by zero.'' By linearity, we can also estimate linear combinations of such canonical operators, which we also refer to as canonical. Conversely, an operator not given in canonical form is said to be in \emph{uncanonical form}.
}

{
\subsection{Illustrative examples}\label{Xsec21-6.6}\label{Xsec21}
\label{subsec:illustrative-examples}

Before further generalization and abstraction, it is instructive to consider several examples, especially to explore the consequences of estimating observables in \emph{canonical} or \emph{uncanonical} forms. We focus on spin-1/2 models decomposed in the Pauli-X basis for simplicity.

\subsubsection{A two qubit spin-1/2 transverse-field Ising model}\label{Xsec22-6.6.1}\label{Xsec22}

Consider a two qubit transverse-field Ising model (TFIM) cast into PMR form,
\begin{equation}
 H = -Z_1 Z_2 + h (X_1 + X_2) \longrightarrow D_0 + D_1 P_1 + D_2 P_2
\label{Xeqn63-81}
\end{equation}
for $D_0 = - Z_1 Z_2, D_1 = D_2 = h \mathds{1}$ and $P_1 = X_1, P_2 = X_2$. From {\cref{subsec:estimating-pure-diag-operators}}, \cref{subsec:estimating-f(H)-things}, \cref{subsec:estimating-Hoffdiag}, we know it is straightforward to estimate $\avg{H}$, $\avg{D_0} = \avg{-Z_1Z_2}$, $\avg{\Hoffdiag} = \avg{h (X_1 + X_2)}$ and powers thereof. Beyond $\avg{D_0}$, we can actually estimate any diagonal matrix, i.e., any linear combination of $\mathds{1}, Z_1, Z_2,$ and $Z_1 Z_2$. From {\cref{subsec:estimating-dl-pl}}, we can also estimate $\avg{D_1 P_1} = \avg{h X_1}$ and $\avg{D_2 P_2} = \avg{h X_2}$. These results follow by translating the general framework to this specific example.

We now consider the seemingly small generalization of \emph{canonical operators} in {\cref{subsec:estimating-al-pl}}, specifically {\eqref{eq:product-canonical-operator}} and demonstrate that, for this example, we can readily estimate any observable. First, we observe that $X_k = \Lambda D_k P_k$ for $\Lambda = \mathds{1} / h$ and hence, we also directly estimate $\avg{X_1}$ and $\avg{X_2}$ without the pre-factor. More importantly, we can estimate $\avg{Y_k}$ since $Y_k = \Lambda_k D_k P_k$ for $\Lambda_k = -i Z_k / h $. Furthermore, $X_1 X_2 = (\Lambda D_1 P_1)(\Lambda D_2 P_2)$ and similarly $X_1 Y_2 = (\Lambda D_1 P_2)(\Lambda_2 D_2 P_2)$, $Y_1 X_2 = (\Lambda_1 D_1 P_1)(\Lambda D_2 P_2)$.

All together, we have shown it is possible to estimate any Pauli string, and hence, any linear combination of them. Since the Pauli strings form a basis, we have thus demonstrated that it is possible to estimate arbitrary operators. We have also seen the value in considering $\Lambda D P$ products in practice---specifically how it allows us to write the canonical form for operators that are not explicitly in $H$, i.e., $X_1 X_2 \notin H$ yet it has a canonical form and estimator.

\subsubsection{An arbitrary spin-1/2 TFIM}\label{Xsec23-6.6.2}\label{Xsec23}
The simple two qubit example result that all operators can be put into canonical form extends to an arbitrary spin-1/2 TFIM. First, we write
\begin{equation}
 H_Z + h \sum_{i=1}^n X_i \longrightarrow D_0 + \sum_{i=1}^n D_i P_i
\label{Xeqn64-82}
\end{equation}
for $H_Z = D_0$ any diagonal matrix and $D_i = h \mathds{1}, P_i = X_i$. On the one hand, this follows straightforwardly from the logic of the two qubit example. More generally, this follows from two simple observations.

First, the set of all possible products of permutations in $H$, $\avg{P_1, \ldots, P_n}_g$ is simply the PMR group of all Pauli-X strings, $G_X$. Second, each diagonal $D_i$ is full rank, and hence, $\Lambda D_i$ can be made to equal any desired diagonal. Together, we can thus write any operator of the form $\tilde{D} \tilde{P}$ for $\tilde{P} \in G_X$ and $\tilde{D}$ any diagonal into canonical form. Since $G_X$ is a PMR basis, we can thus write any operator as a sum of such $\tilde{D} \tilde{P}$ terms, and hence, as a canonical operator for the TFIM written in the Pauli-X PMR basis.

\subsubsection{An XX chain}\label{Xsec24-6.6.3}\label{Xsec24}

Next we consider estimation of general observables for the 3-spin $XX$ chain,
\begin{equation}
 H = X_1 X_2 + X_2 X_3 \longrightarrow D_1 P_1 + D_2 P_2
\label{Xeqn65-83}
\end{equation}
for $D_1 = D_2 = \mathds{1}$ and $P_1 = X_1 X_2$, $P_2 = X_2 X_3$. From our experience with the TFIM example, it is instructive to consider $\avg{P_1, P_2}_g = \{\mathds{1}, X_1 X_2, X_2 X_3, X_1 X_3\}$. Unlike the TFIM, this is a strict subgroup of the all Pauli-X strings on 3 spins, $\avg{P_1, P_2}_g < G_X$. This suggests that it is not possible to put every operator into a canonical form.

As an explicit example, consider $X_1$. Clearly, $X_1 \notin \avg{P_1, P_2}_g$, and in this case, this implies it cannot be written as a product of the form $(\Lambda_l D_l P_l)\ldots (\Lambda_k D_k P_k)$. However, as one can explicitly check for this example, $\avg{X_1} = 0$, and we will show this follows exactly because $X_1 \notin \avg{P_1, P_2}_g$. More generally, all observables that cannot be put into canonical form for this model have trivial expectation value.

\subsubsection{Biased estimation example: a zero along the diagonal}\label{Xsec25-6.6.4}\label{Xsec25}
\label{subsubsec:a-model-with-zeros}

Our prior examples all contains $D P$ terms for which each $D$ is full rank. We now consider an example where one of the PMR diagonals has zeros. Namely,
\begin{equation}
 H = Z_1 X_2 + X_2 + Y_1 X_2 \longrightarrow D_1 P_1 + D_2 P_2
\label{Xeqn66-84}
\end{equation}
for $P_1 = X_2, P_2 = X_1 X_2$ and $D_1 = \mathds{1} + Z_1, D_2 = -i Z_1.$ We now consider estimating $\avg{X_2}$. Since $D_1 = \diag{2, 2, 0, 0}$ has two zeros, $D_1 P_1 \neq X_2$, and it cannot be made canonical by some trivial rescaling, $\Lambda D_1 P_1$. Furthermore, we cannot conclude $\avg{X_2} = 0$ generally either since $X_2 \in \avg{P_1, P_2}_g$, and in fact, we find $\avg{X_2} = -\tanh{\beta}$ by direct evaluation.

For specific values of $\beta$, we can furthermore use the uncanonical estimator, {\eqref{eq:al-pl-estimator}}, to predict the value of $\avg{X_2}$ using our Monte Carlo code. At $\beta = 2$, our code predicts the value $-0.83(1)$ which clearly disagrees with the exact value $\tanh{2} \approx 0.964$. This example shows explicitly that the use of a uncanonical estimator truly can lead to biased and incorrect estimation. Specifically, the direct uncanonical estimator is incorrect because it is possible to construct configurations for which the PMR-QMC weight is zero whereas the corresponding direct off-diagonal expansion summand is non-zero. This is how formal division by zero actually corrupts or biases the estimator in practice.   

For example, this happens when $\ket{z} = \ket{10}$ and $\Siq = P_1 P_2 P_1 P_2$. More explicitly, the final $P_1$ means $w_{(z, \Siq)} \propto \avg{10 | D_1 | 10} = 0$. Yet, the corresponding direct off-diagonal expansion summand is (up to the divided difference contribution),
\begin{equation}
 \avg{10 | P_1 (D_2 P_2) (D_1 P_1) (D_2 P_2) | 10} = 2.
\label{Xeqn67-85}
\end{equation}
Fortuitously, this explicit explanation of the problem also suggests a solution: we can write $X_2$ in a canonical form,
\begin{equation}
 \label{eq:canonical-x2-example}
 X_2 = \frac12 \mathds{1} (D_1 X_2 + (D_2 X_1 X_2)(D_1 X_2)(D_2 X_1 X_2)).
\end{equation}
Unlike the TFIM or XX examples, this decomposition requires a sum of two canonical terms, making it less obvious. Nevertheless, it is straightforward to verify, and the underlying principle of this decomposition can be made more apparent. Namely, each term has a permutation portion that evaluates to $X_2$. Evaluating the thermal expectation value of each term reveals $\frac12 \avg{D_1 X_2}_{\beta = 2} \approx -0.833$ and $\frac12 \avg{(D_2 X_1 X_2)(D_1 X_2)(D_2 X_1 X_2)}_{\beta = 2} \approx -0.131$. The former value is exactly what our naive QMC predicts for $\avg{X_1}_{\beta = 2}$ using the uncanonical estimator, and we now understand why. The naive estimator is biased exactly because $\frac12 D_1 X_2$ has two zeros that $X_2$ itself does not.

Finally, we remark that zeros in $D_1$ alone are not enough to predict the failure of the uncanonical estimator. Indeed, $\avg{X_2}$ can be estimated using {\eqref{eq:al-pl-estimator}} in $H' = Z_1 X_2 + Z_2$ without bias. We can understand this luck by first simplifying a direct off-diagonal expansion, 
\begin{align}
    \Tr[X_2 e^{-\beta H'}] &= \sum_z \sum_{p = 0}^\infty (\avg{z | D_1 |z})^{2p+1} (-\beta)^{2p + 1}.
\end{align}
In our simplifications, we utilized $D_0 = 0 \implies E_{z_j} = 0$ for all $z$ and $j$, the equality $e^{-\beta [x_0, \ldots, x_q]} = (-\beta)^q e^{-\beta x}$ when $x_0 = \ldots x_q = x$ (see {\cref{sec:div-diff}}), and the fact $X_2 D_1 X_2 = D_1$ by direct computation or on general Hermiticity grounds (see {\cref{thrm:hermiticity}}). Evidently, configurations for which $\avg{z | D_1 | z} = 0$---i.e., when $\ket{z} = \ket{10}$ or $\ket{z} = \ket{11}$---do not contribute to this sum. Comparing this the associated PMR-QMC weight (see {\cref{eq:gbw}}) and $X_2$ estimator (see {\cref{eq:al-pl-estimator}}), we find
\begin{align}
    w_{(z, q)} &=  (-\beta)^q (\avg{z | D_1 | z})^q \\
    (X_2)_{(z, q)} &= - \delta^{(q)}_{X_2} \frac{1}{\avg{z | D_1 | z}} \frac{1}{\beta} \estimates \avg{X_2},
\end{align}
which also satisfies $w_{(z, q)} = 0$ exactly when $\ket{z}$ is $\ket{10}$ or $\ket{11}$. Thus, the direct off-diagonal series expansion and modified PMR-QMC weighted sum expansions agree term-by-term. As an additional clarifying note, $w_{(z, q)} = 0$ whenever $(X_2)_{(z, q)}$ diverges as expected, but this alone is not the problem---rather it simply suggests there could be an issue.

\subsection{Estimating arbitrary operators}\label{Xsec26-6.7}\label{Xsec26}
\label{subsec:estimating-arbitrary-operators}

We now build upon the intuition developed from the above examples to show that it is possible to write arbitrary operators (not just Hermitian observables) into canonical form, and hence, to estimate arbitrary operators with the unbiased estimator in {\eqref{eq:A-canonical-estimator}}. We then consider special cases we observed from our examples. After that, we provide a group-theoretic characterization of which matrix elements of a general observable contribute to its thermal expectation value. Finally, we briefly discuss the complexity of finding a canonical form in general.


\subsubsection{General argument}\label{Xsec27-6.7.1}\label{Xsec27}

We provide a general argument that any operator's thermal expectation value can be written in canonical form. The intuition for this argument is simply that canonical operators naturally appear as terms in the Taylor expansion of $e^{-\beta H}$ in PMR form, and indeed, the PMR-QMC partition function itself is a simplification of these canonical terms. As such, we can write non-trivial matrix elements $\ketbra{x}{y}$ into canonical form and hence estimate any operator.

More formally, any operator can be written $A = \sum_{x,y} A_{x,y} \ketbra{x}{y}$ for $\ket{x}, \ket{y}$ vectors in the $\{\ket{z}\}$ basis, so we can certainly estimate $\avg{A}$ provided we can estimate $\avg{ A_{x,y} \ketbra{x}{y} } = A_{x, y} \avg{y | e^{-\beta H} | x}$ for each $x, y$. By Hermiticity of $H$, $\avg{y | e^{-\beta H} | x} = \avg{ x | e^{-\beta H} | y}$, which conveniently makes questions of estimating $A_{x,y}$ related directly to $(e^{-\beta H})_{x,y}$. Since $\avg{ \ketbra{x}{y} } = 0$ if and only if $(e^{-\beta H})_{x,y} = 0$, we can restrict attention only to estimating $\avg{ \ketbra{x}{y} } \neq 0$.

By Taylor expanding, it follows that there exists some integer $n$ for which $\avg{ x | H^n | y} = \avg{ x | (\sum_j D_j P_j)^n | y} \neq 0$, which in turn means there is at least one product of $D P$'s for which $\avg{ x | \prod_{l=1}^n (D_{f(l)} P_{f(l)}) | y} \neq 0$ ($f$ maps each $D_{f(l)} P_{f(l)}$ to a non-trivial $D_j P_j$ in $H$). This is the essential observation, as we can now write the matrix element in canonical form~\footnote{We are indebted to a dedicated and astute anonymous reviewer for suggesting this argument.},
\begin{align}
 \ketbra{x}{y}& = \Lambda_y (D_{f(l)} P_{f(l)} \cdot \ldots \cdot D_{f(1)} P_{f(1)}) \Lambda_x, \nonumber\\ &\Lambda_y \equiv \ketbra{y}{y}, \ \ \ \Lambda_x \equiv \frac{1}{\prod_{l=1}^n D_{f(l)}(x_l)} \ketbra{x}{x},
\label{Xeqn69-87}
\end{align}
as one may verify by an off-diagonal expansion argument, i.e., successively evaluating $D_{f(l)} P_{f(l)} \ket{x_{l-1}}$ for $\ket{x_{l-1}} = \prod_{k=1}^{l-1} P_{f(k)} \ket{x}$ and $\ket{x_0} = \ket{x}$ as usual. As a remark, the diagonal product in the denominator is not zero or else the matrix element itself would be.

The above shows that we can correctly estimate expectation values of arbitrary operators. Specifically, we can put all matrix elements that contribute non-trivially to the thermal expectation value into canonical form. Explicitly, we have proven the existence of a mapping
\begin{equation}
 \label{eq:canonical-mapping}
 \mathcal{M}(\ketbra{x}{y}) =
 \begin{cases}
 0 & \Tr[\ketbra{x}{y} e^{-\beta H}] = 0 \\
 \Lambda_y \prod_{l=1}^{n_{x,y}} D_{f_{x,y}(l)} P_{f_{x,y}(l)} \Lambda_x & \text{else}
 \end{cases},
\end{equation}
where $n_{x,y}$ and $f_{x,y}$ reflect that the specific product of $D P$ terms depends on $x, y$. In reality, this is a family of mappings since there are infinitely many valid mappings (i.e., for any valid product, one can iterate it twice, thrice, and so on), but for simplicity, we refer to it as a singular mapping $\mathcal{M}$ for some specific choice of $D P$s. We then extend this mapping linearly,
\begin{equation}
 \mathcal{M}(A) = \sum_{x, y} A_{x, y} \mathcal{M}(\ketbra{x}{y})
\label{Xeqn71-89}
\end{equation}
for which we can now precisely write
\begin{equation}
 \avg{A} = \avg{\mathcal{M}(A)},
\label{Xeqn72-90}
\end{equation}
and every matrix element of $\mathcal{M}(A)$ has an unbiased, canonical estimator by construction.

\subsubsection{Revisiting our examples as special cases of general construction}\label{Xsec28-6.7.2}\label{Xsec28}
We reconsider some aspects of our examples that can be viewed as special cases of the general construction above. First, some operators can be written into a simple canonical form wherein a single product matches all matrix elements simultaneously, i.e., $Y_1 = (-\frac{i}{h} Z_1h \mathds{1} X_1) $ for the TFIM. Others, require a sum of terms as in {\eqref{eq:canonical-x2-example}} for which we observe
\begin{equation}
 X_2 = \begin{pmatrix}
0 & 1 & 0 & 0 \\
1 & 0 & 0 & 0 \\
0 & 0 & 0 & 1 \\
0 & 0 & 1 & 0
\end{pmatrix}, \ \ \ \frac12 D_1 P_1 = \begin{pmatrix}
0 & 1 & 0 & 0 \\
1 & 0 & 0 & 0 \\
0 & 0 & 0 & 0 \\
0 & 0 & 0 & 0
\end{pmatrix}, \ \ \  \frac12 (D_2 P_2)(D_1 P_1)(D_2 P_2) = \begin{pmatrix}
0 & 0 & 0 & 0 \\
0 & 0 & 0 & 0 \\
0 & 0 & 0 & 1 \\
0 & 0 & 1 & 0
\end{pmatrix},
\label{Xeqn73-91}
\end{equation}
is a matching of the upper left and upper right matrix elements of $X_2$, respectively. In the worst case, one might need to match every non-zero matrix element of an observable with its own canonical form.

Another aspect of our construction is that we do not need to (and actually cannot) write a canonical form for matrix elements for which $\Tr[ \ketbra{x}{y} e^{-\beta H}] = 0$. We observed this in the XX example (\cref{Xsec24-6.6.3}), for which it was not possible to write $X_1$ into canonical form, but we did not need to because $\avg{X_1} = 0$. Of note, one can directly verify that all non-zero entries of $X_1$ land on zero entries of $X_1 X_2 + X_2 X_3$ and also of $X_1 X_3$. We dedicate further general analysis of this special case in the next section.

\subsubsection{Classification of permutations in estimation}\label{Xsec29-6.7.3}\label{Xsec29}
It is instructive to ``classify permutations in estimation'', and we will see that doing so determines in a principled way which matrix elements satisfy $\Tr[\ketbra{x}{y} e^{-\beta H}] = 0$. Let $G$ be a fixed PMR basis and write
\begin{equation}
 H = \sum_{P \in S_H} D_P P, \ \ \ A = \sum_{P \in S_A} \wtD_P P
\label{Xeqn74-92}
\end{equation}
for $S_H, S_A$ the ``PMR support'' of $H$ and an arbitrary operator $A$, i.e., $S_H = \{P \in G: D_P \neq 0\}$ and $S_A = \{P \in G: \wtD_P \neq 0\}.$ Estimation of $\avg{A}$ is possible by linearity through estimation of each $\avg{\wtD_P P}$.

Consider a specific $\wtP \in S_A$. If $\wtP \in S_H$ as well, then $\wtD_{\wtP} \wtP$ generally has non-zero expectation value, and it can be estimated accurately provided it is cast into canonical form. An explicit non-trivial example of this is provided in {\eqref{eq:canonical-x2-example}}. If instead $\wtP \notin S_H$, two interesting possibilities arise. Let $\avg{S_H}_g$ denote the group generated by $S_H$ by repeated multiplications. In the first case, $\wtP \in \avg{S_H}_g$, and we can again find a canonical estimator that is generally non-zero (i.e., as in the TFIM where $X_1 X_2 \in \avg{X_1, X_2}_g$ and $X_1 X_2 = (\frac{1}{h} \mathds{1})(h \mathds{1} X_1) (\frac{1}{h} \mathds{1})(h \mathds{1} X_2)$ is canonical).

In the second case, $\wtP \notin \avg{S_H}_g$, and we find that $\avg{\wtD_{\wtP} \wtP} = 0$ (i.e., as in the XX chain where $\avg{X_1} = 0$). This can be shown readily by an off-diagonal series expansion,
\begin{equation}
 \Tr[\wtD_{\wtP} \wtP e^{-\beta H}] = \sum_{z, \Siq} \wtD_{\wtP}(z) w_{(z, \Siq)} \avg{ z | \wtP \Siq | z}.
\label{Xeqn75-93}
\end{equation}
As usual, this only has possibly non-zero contributions when $\wtP \Siq = \mathds{1}$ by the PMR group structure, but in this case, $\Siq \in \avg{S_H}_g$ can never equal to $\wtP^T.$ Hence, this matrix element will always be zero, and hence, $\avg{\wtD_{\wtP} \wtP} = 0$ for any $\wtD_{\wtP}$. Thus, it is sufficient to estimate $A$ via only those $\wtP \in S_A \cap \avg{S_H}_g$, i.e.,
\begin{equation}
 \label{eq:projected-equivalence}
 \avg{A} = \avg{A_H}, \ \ \ A_H \equiv \sum_{P \in S_A \cap \avg{S_H}_g} \wtD_P P,
\end{equation}
where $A_H$ can be viewed as a type of ``projected operator''. Specifically, it is restriction of $A$ onto the support of PMR permutations contained in $\avg{S_H}_g,$ which are the only terms which can have nonzero thermal average.

Importantly, we can also directly relate the $\ketbra{x}{y}$ with trivial thermal expectation value, $\Tr[\ketbra{x}{y} e^{-\beta H}] = 0,$ to $\avg{S_H}_g$ due the PMR assumptions on $G$. Specifically, $\Tr[\ketbra{x}{y} P] = \avg{y | P | x}$, and by the regular natural action of $G$, the $P^{(x\rightarrow y)} \in G$ such that $P^{(x\rightarrow y)} \ket{x} = \ket{y}$ is unique. Consequently,
\begin{equation}
 \label{eq:non-trivial-matrix-elements-new}
 P^{(x\rightarrow y)} \in G - \avg{S_H}_g \implies \Tr[\ketbra{x}{y} e^{-\beta H}] = 0,
\end{equation}
since any $\Siq$ that appears in the off-diagonal expansion of $e^{-\beta H}$ necessarily satisifes $\Siq \in \avg{S_H}_g$. Of course, this observation is consistent with the permutations we can exclude in thermal estimation, {\eqref{eq:projected-equivalence}}. The left-hand-side of this implication can equivalently be replaced with $\avg{y | P | x} = 0 \ \forall P \in \avg{S_H}_g$, which, albeit enumerative, provides a group-theoretic means to find all matrix elements $\ketbra{x}{y}$ with trivial thermal expectation value that avoids computing $\avg{y | e^{-\beta H} | x}$ directly.

\subsubsection{A few comments on an algorithm to find a canonical estimator for an observable}\label{Xsec30-6.7.4}\label{Xsec30}

Given an arbitrary operator $A$, we have shown that it is possible to construct an unbiased PMR-QMC estimator for Hamiltonian $H$ from the following steps. (1) Construct the group of permutations generated by the non-trivial permutations contained in $H$, $\avg{S_H}_g$. (2) Form the truncated operator $A_H$ whose PMR form contains only permutations in $\avg{S_H}_g,$ which corresponds to including only matrix elements that contribute to the thermal expectation value, i.e., $\avg{A} = \avg{A_H}$. (3) Construct a canonical representation of each non-zero matrix element of $A_H$, say $\ketbra{x}{y}$, by finding a string of $D P$'s for which $\avg{x | \prod_{l=1} D_{f(l)} P_{f(l)} |y} \neq 0$ by enumeration.

A couple of remarks on the practicality of this algorithm are in order. Firstly, even if one succeeds, the resulting estimator is not generally practical. To see this, let $d$ be the dimension of the system (so each $P$ in a $d\times d$ matrix) and suppose $A_H$ contains $k$ permutations. In this case, $A_H$ contains $k d$ total non-zero matrix elements, so a naive unbiased estimator without simplifications results in separately estimating $k d$ estimators and combining the result. For quantum systems, $d$ scales exponentially with the number of particles (i.e. $d = 2^n$ for $n$ qubits), so this is not practical. Of course, things can be much easier in specific practical examples, as for the TFIM, XX, and {\eqref{eq:canonical-x2-example}} cases where practical observables only required summing up a few canonical terms.

Secondly, this general algorithm to construct canonical estimators is also not practical. For simplicity, suppose we wish to construct a canonical estimator for the singular non-trivial matrix element $\ketbra{x}{y}$. A sense of complexity can be estimated from the smallest $q$ such that some $\Siq \ket{y} = \ket{x}$ since any $\Sip$ for $p < q$ will thus give $\avg{x | \Sip | y} = 0$. Yet, in the worst case, $q$ might be $O(d)$ for $d$ the dimension of the system, making enumeration of $\Sip$ up to $\Siq$ impractical. To see this, suppose $d$ is even and $H = P_{\sigma} + P_{\sigma^{-1}}$ for $\sigma = (0\ldots d-1)$ and $\sigma^{-1} = (0\ldots d-1)^{d - 1}.$ Since $\avg{S_H}_g = \avg{(0\ldots d-1)}_g$ forms an entire PMR group, all matrix elements can contribute to thermal averages. In the worst case, we might have an observable with a matrix element contained in $\tau = (0\ldots d)^{d/2}$, which requires at least $d / 2$ permutations to construct, i.e., $P_\sigma^{d/2}$.

Yet for practical model like the TFIM, the largest power needed for practical observables can be much smaller. For example, we can clearly write any $n$-body Pauli string using at most $n$ local $X_i$ permutations. For the TFIM specifically, in fact, constructing canonical estimators for Pauli observables requires only simple Pauli algebra of a single qubit and hence can be done by and hand without an algorithm.

}

{
\subsection{Improved estimator for rare sampling}\label{Xsec31-6.8}\label{Xsec31}
\label{subsec:improved-estimator}

We have shown that the thermal expectation value of any operator can be estimated with an unbiased estimator of the form {\eqref{eq:A-canonical-estimator}}. In general, this estimator can have a rare sampling issue that simpler estimators for diagonal operators or Hamiltonian based operators do not have. To see this, first observe that observables like $H$ have an estimator, {\eqref{eq:H-estimator}}, which is generally non-zero for any given PMR-QMC configuration. In contrast, {\eqref{eq:A-canonical-estimator}} involves a product of $L$ delta-function conditions that becomes increasingly likely to be equal to zero as $L$ increases. For this reason, {\eqref{eq:A-canonical-estimator}} may have a \emph{rare sampling problem}. Using the same logic in the derivation of the PMR-QMC partition function, however, we can derive \emph{improved estimators} that mitigate this problem.

First, we observe the following simplification of a canonical operator
\begin{equation}
 \prod_{s=1}^L D_{\bm{l}_s}P_{\bm{l}_s} = \widetilde{P} \widetilde{D}, \ \ \ \widetilde{P} = \prod_{s=1}^L P_{\bm{l}_s} , \ \ \ \widetilde{D} = \prod_{s=1}^L (P_{\bm{l}_s} \ldots P_{\bm{l}_1})^T D_{\bm{l}_s} (P_{\bm{l}_s} \ldots P_{\bm{l}_1}),
\label{Xeqn82-96}
\end{equation}
from off-diagonal manipulations. As a simple justification, observe that for $\wtP = P_2 P_1$, one can write
\begin{align}
 \wtP \wtD \ket{z} &= (P_2 P_1) (P_2 P_1)^T D_2 (P_2 P_1) (P_1)^T D_1 P_1 \ket{z} \\
 &= D_2 P_2 D_1 P_1 \ket{z},
\end{align}
which generalizes to any length product of $D_l P_l$'s. As a remark, our notation suggests that each $P_{\bm{l}_s} \neq \mathds{1}$. Yet, the above logic still holds even if some intermediate $P$'s are identity (i.e., choose $P_2 = \mathds{1}$ above). More generally, all conclusion in this section can straightforwardly be generalized to the case where intermediate permutations can be identity and each $D_{\bm{l}_s} \rightarrow \Lambda_s D_{\bm{l}_s}.$ For simplicity of presentation, we shall assume each $\Lambda_s = \mathds{1}$ and each $P_{\bm{l}_s} \neq \mathds{1}$.

With these simplifying assumptions, we next observe
\begin{equation}
 \avg{z | \wtD | z} = \avg{z |P_1^T P_2^T D_2 P_2 P_1| z} \avg{z | P_1^T D_1 P_1 | z} = \avg{z_2 | D_2 | z_2} \avg{z_1 | D_1 | z_1},
\label{Xeqn83-99}
\end{equation}
which generalizes to $\avg{z | \wtD | z} = D_{(z, \wtP)}$ for any $L$. Recall $D_{(z, \Siq)}$ denotes the hopping-term in the off-diagonal series expansion of the partition function and partly comprises the PMR-QMC weight. Using the off-diagonal series expansion logic used to derive {\eqref{eq:al-pl-estimator}} and \eqref{eq:al-dl-pl-estimator}, we can write
\begin{align}
 \Tr[\wtP \wtD e^{-\beta H}] &= \sum_{z, \Sip} w_{(z, S_{\bm{i}_{p+L}})} \frac{ e^{-\beta [E_{z_0}, \ldots, E_{z_{p}}]} }{ e^{-\beta [E_{z_0}, \ldots, E_{z_{p+L}}]} } \avg{z |\wtP \Sip | z}.
\end{align}
If we introduce $\delta$ functions to enforce $\wtP \Sip = \Siq$, we again arrive at {\eqref{eq:A-canonical-estimator}} which has a rare sampling problem.

Yet when the PMR group is Abelian---which is always possible and routinely chosen in practical simulations~\cite{barash2024QuantumMonteCarlo,akaturk2024quantum,gupta2020PermutationMatrixRepresentation,babakhani2025quantum}---we can relax the stringent ordering enforced by the $\delta$ functions. Specifically, we can employ the following general estimator,
\begin{equation}
 \label{eq:improved-estimator-for-canonical-ops}
 \left(\prod_{s=1}^L D_{\bm{l}_s}P_{\bm{l}_s} \right)_{\mathcal{C}} = \frac{1}{L!} \prod_{s=1}^L \mathbf{1}(P_{\bm{l}_s} \in \{ P_{\iq}, P_{\bm{i}_{q-1}}, \ldots, P_{\bm{i}_{q-L}} \}) \frac{ e^{-\beta [E_{z_0}, \ldots, E_{z_{q-L}}]} }{ e^{-\beta [E_{z_0}, \ldots, E_{z_{q}}]} },
\end{equation}
where the indicator function is merely formal notation to encode the milder requirement that each $P_{\bm{l}_s}$ must be contained in the final $L$ permutation of $\Siq$. The essential logic of this estimator can again be understood for $L = 2$,
\begin{equation}
 \Tr[\wtP \wtD e^{-\beta H}] = \Tr[\frac{(P_2 P_1 + P_1 P_2)}{2!} \wtD e^{-\beta H}],
\label{Xeqn85-102}
\end{equation}
and the linearity of the trace. More generally, we are taking advantage of the commutative relation,
\begin{equation}
 P_1 P_2 \ldots P_L = \frac{1}{L!} \sum_{\sigma \in S_L} P_{\sigma(1)} \cdot \ldots \cdot P_{\sigma(L)},
\label{Xeqn86-103}
\end{equation}
where the sum is over all permutations of $L$ elements. As a practical remark {\eqref{eq:improved-estimator-for-canonical-ops}} is the general estimator we use in our open source code~\cite{ezzell2025code} to estimate terms such as $X_3 X_9$ in the transverse-field Ising model numerical experiments in {\cref{sec:numerical-demo}}.

}

\section{Computational complexity of estimators and simulation time}\label{Xsec32-7}\label{Xsec32}
\label{subsec:complexity-interlude}

Before transitioning from static operator estimators to their dynamic counterparts, we define a notion of estimator computational complexity relevant to PMR-QMC simulation. We then state and justify the complexity of the various static observable estimators we have derived. { We subsequently discuss factors that contribute to total simulation time. While total simulation time partially depends on observable estimation complexity, it is typically more reliant on model and phase dependent properties, which we discuss.} This section can be skipped on first reading or for those uninterested in computational costs, but henceforth, we will simply state estimator complexities with justification following from the logic expounded here.

{
\subsection{Costs of atomic operations and meaning of estimator complexity}\label{Xsec33-7.1}\label{Xsec33}
The basic details of a PMR-QMC simulation are summarized in {\cref{subsec:PMR-QMC}} and discussed in depth in several prior works~\cite{albash2017OffdiagonalExpansionQuantum,gupta2020PermutationMatrixRepresentation,barash2024QuantumMonteCarlo}. The primary computational cost in PMR-QMC is computing the DDE, $e^{-\beta [E_{z_0}, \ldots, E_{z_q}]}$ given a multiset of diagonal energies $[E_{z_0}, \ldots, E_{z_q}]$. Though the stable computation of the DDE has been an active area of research~\cite{mccurdy1984accurate}, PMR-QMC takes advantage of recent numerical advancements~\cite{gupta2020CalculatingDividedDifferences}. Specifically, Ref.~\cite{gupta2020CalculatingDividedDifferences} provides a stable $O(q)$ algorithm to compute the DDE with the addition (or removal) of an input, i.e., $e^{-\beta [E_{z_0}, \ldots, E_{z_q}, E_{z_{q+1}}]}$ that works for a wide range of energies and inverse temperatures $\beta$.}

{ Of course, to compute the DDE at all, we must have access to $E_{z_j}$ values. In real simulations, we pre-compute all diagonal matrix elements $\avg{z | D_P | z}$ for every $P$ for which $D_P \neq 0$ and store them in a dictionary. Thus, we can query any diagonal energy, $E_{z_j} = \avg{z_j | D_{\mathds{1}} | z_j}$ in $O(1)$ time. In total, this means computing the PMR-QMC weight $W_{(z, \Siq)} = \prod_{k=1}^q \avg{z_k | D_{\bm{i}_k} | z_k} e^{-\beta [E_{z_0}, \ldots, E_{z_q}]}$ requires $O(q^2)$ effort if computed from scratch. However, many PMR-QMC moves only update the relevant multiset by the removal of one diagonal energy and the insertion of another (see, e.g., Sec. IV.D QMC Updates in Ref.}~\cite{barash2024QuantumMonteCarlo}).{ As such, the only non-trivial cost in a typical PMR-QMC update is an $O(q)$ DDE deletion/insertion cost. }

{ To evaluate observables, we compute the relevant estimator $O_{(z, \Siq)}$. This is done after a PMR-QMC update has already completed, and hence, we have already computed $W_{(z, \Siq)}$. In fact, for more efficient manipulations, we store all partial computations $D_{(z, \Sip)} = \prod_{k=1}^p \avg{z_k | D_{\bm{i}_k} | z_k}$ and $e^{-\beta [E_{z_0}, \ldots, E_{z_p}]}$ for all $p = 0, \ldots, q$ during simulation and hence have $O(1)$ access to each of these values for the purposes of estimation. Hence, we say an estimator has $O(q^k)$ complexity, for example, if computing $O_{(z, \Siq)}$ requires $O(q^k)$ \emph{additional effort} using these stored quantities. As we shall discuss, this additional effort can result from repeated $O(1)$ estimations or the cost of adding or removing an input from the PMR-QMC weight DDE.}

\vspace*{-1em}

\subsection{ Estimator complexities of static observables }\label{Xsec34-7.2}\label{Xsec34}

With the complexity of atomic PMR-QMC operations now clarified, we can discuss the complexity of various static observables. The results are summarized in {\cref{tab:static-complexity}}, and for most static observables, the complexity is $O(1)$. For clarity, we discuss them in order of appearance. For $\avg{\Lambda^k}$ and $\avg{\Hdiag^k}$, one need only use $O(1)$ time to query $E_{z}$ (or $\Lambda(z)$) and exponentiate it, yielding $O(1)$ complexity. For $\avg{H}$, one simply combines $E_{z_q}$ with a ratio DDEs for which we have $O(1)$ access by assumption, also yielding $O(1)$. For $\avg{H^2}$ estimated by {\eqref{eq:H2-estimator}}, we perform a small, constant number of arithmetic operations on $O(1)$ access values, again giving $O(1)$. For $\avg{H^k}$, however, we must combine $O(k)$ such $O(1)$ terms where $k$ is not constant and potentially large, resulting in $O(k)$ complexity. Since $\avg{\Hdiag^k}$ and $\avg{H}$ can be estimated in $O(1)$ time, $\avg{\Hoffdiag}$ and $\avg{\Hoffdiag^2}$ are also estimated in $O(1)$ time by {\eqref{eq:hoffdiag-estimator}} and \eqref{eq:hoffdiag2-estimator}. This covers all the Hamiltonian-based static PMR-QMC observables.

For general canonical observables, the arguments proceed similarly. For example, $\avg{D_l P_l}$ as estimated by {\eqref{eq:dl-pl-estimator-delta}} is $O(1)$ since it only involves a ratio of $O(1)$ DDEs. For similar reasons, $\avg{\Lambda_l D_l P_l}$ estimated by {\eqref{eq:al-dl-pl-estimator}} is also $O(1)$. Hence, a sum of $K$ such terms, $\avg{\sum_{l=0}^{K-1} \Lambda_l D_l P_l}$ is $O(K)$. Finally, $\avg{\Lambda_1 D_{\bm{l}_1} P_{\bm{l}_1} \cdot \ldots \cdot \Lambda_l D_{\bm{l}_L} P_{\bm{l}_L} \Lambda_{L+1}}$ estimated by {\eqref{eq:A-canonical-estimator}} requires $L$ permutation equality checks (each delta function). Otherwise, there are $O(L)$ multiplications of quantities accessible in $O(1)$ time, giving a total estimation time of $O(L)$.

\subsection{ Interpreting $q$ in big $O$ expressions}\label{Xsec35-7.3}\label{Xsec35}

{ In our discussion of PMR-QMC move costs and in the estimator complexities for dynamic observables (see {\cref{tab:dynamic-complexity}}), we encounter big O estimates of the form $O(q^k)$ for some integer $k$. As discussed in {\cref{sec:pmr-qmc}} and in prior works, especially Ref.~\cite{albash2017OffdiagonalExpansionQuantum}, $q$ is the imaginary-time dimension similar to the operator length $M$ encountered in SSE. Unlike SSE, however, $q$ can dynamically update during PMR-QMC simulations depending on how large and ``how quantum'' the system is. Writing an $N$ qubit Hamiltonian as $H = \Hdiag + \Gamma \Hoffdiag$, prior in-depth empirical studies~\cite{albash2017OffdiagonalExpansionQuantum,ezzell2025universal} found that the average value of the quantum dimension during simulation $\mathds{E}[q]$ scales as $O(\Gamma^2 \beta N)$, and this is corroborated in our numerical results here ({\cref{fig:timing-estimates}}). Hence, the expression $O(q^k)$ can be interpreted as meaning, on average, an estimate of $\sim (\Gamma^2 \beta N)^k$ atomic computations are required.}

\subsection{ Total simulation time (TST)}\label{Xsec36-7.4}\label{Xsec36}

{
We have discussed the complexity of PMR-QMC updates and computing observable estimators as well as how to interpret $q$ in expressions like $O(q)$. Though all these factors contribute to total simulation time (TST)---that is, the time it takes to reliably estimate thermal expectation values with reasonable error bars---they do not tell the complete story. Like all QMC methods, TST is highly model and phase dependent in PMR-QMC, and can be hindered by problems such as classical frustration~\cite{kandel1990cluster, zhang1994cluster, coddington1994generalized, houdayer2001cluster, rakala2017cluster}, the sign problem~\cite{hen2019resolution, gupta2020elucidating, hen2021determining}, and subsequent high autocorrelation times. These problems can arise even without costly observable estimation.

We expect the TST to be determined by the complexity of QMC updates rather than that of the estimators for a wide range of temperatures~\cite{barash2024QuantumMonteCarlo,ezzell2025universal}. Since it is highly unlikely that a simple, a priori condition to ensure fast convergence exists~\cite{troyer2005ComputationalComplexityFundamental}, it is generally not possible to write down a meaningful estimate of TST for all models. Nevertheless, different facets of TST have been well studied in PMR-QMC both theoretically~\cite{hen2019resolution, hen2021determining} and empirically~\cite{albash2017OffdiagonalExpansionQuantum,gupta2020PermutationMatrixRepresentation,gupta2020elucidating,barash2024QuantumMonteCarlo,ezzell2025universal} in prior work.

\subsubsection{Comparisons to path integral QMC and stochastic series expansion QMC}\label{Xsec37-7.4.1}\label{Xsec37}
Perhaps most importantly, it is known that TST can be much shorter for PMR-QMC than TST in conventional methods like path integral QMC (PIQMC) and stochastic series expansion (SSE). Specifically, Refs.~\cite{albash2017OffdiagonalExpansionQuantum,gupta2020PermutationMatrixRepresentation} studied a random ensemble of 3-regular MAX2SAT instances encoded into an Ising model with a transverse field using PMR-QMC. For low temperatures (high $\beta)$ or near first-order phase transitions in this ensemble, it is known that QMC can suffer from exponentially long autocorrelation times (in $N$), causing slow convergence and high TST. Nevertheless, PMR-QMC was found to converge relatively quickly even for moderate $N$ and large $\beta$, especially compared to PIQMC and SSE. Figure 4 of Ref.~\cite{gupta2020PermutationMatrixRepresentation}, for example, shows PMR-QMC converges to thermal expectation values over four orders of magnitude faster than SSE---a difference of taking seconds to taking over a day of compute time.

\subsubsection{The sign problem }\label{Xsec38-7.4.2}\label{Xsec38}
Next, we discuss the sign problem, a known obstacle to fast convergence of QMC simulations~\cite{gupta2020elucidating,hen2021determining}. Put simply, the sign problem arises whenever we cannot decompose the partition function as a sum of non-negative weights, i.e., when $\text{sgn}(W_{\mathcal{C}}) = -1$ for at least one configuration $\mathcal{C}$ with nonzero weight. When this happens, we can no longer interpret $W_{\mathcal{C}} / \sum_\mathcal{C} W_{\mathcal{C}}$ as a valid probability distribution with which to perform importance sampling. Nevertheless, we can define a valid Markov chain over probabilities $|W_{\mathcal{C}}| / \sum_\mathcal{C} |W_{\mathcal{C}}|$ at the cost of noisier observable estimates that converge more slowly (see Eq. (5) in Ref in Ref.~\cite{gupta2020elucidating} or Eq. (41) in Ref.~\cite{barash2024QuantumMonteCarlo}). The empirical severity of the sign problem can thus be tracked by computing $\avg{\text{sgn}} = W_\mathcal{C} / \sum_{\mathcal{C}} | W_{\mathcal{C}}|$, and this is automatically done in our code~\cite{ezzell2025code}.

Empirical facets of the sign problem in PMR-QMC simulation have been explored for minimal qutrit models~\cite{gupta2020elucidating}, XY models~\cite{barash2024QuantumMonteCarlo}, and random $k$-local Hamiltonians~\cite{barash2024QuantumMonteCarlo}. As expected, $\avg{q}$, time per MC update, and hence TST all rise as $\avg{\text{sgn}}$ approaches zero and the sign problem becomes more severe (see Tab. I in \citet{barash2024QuantumMonteCarlo}). Furthermore, the sign problem itself can become more severe at lower temperatures or with more terms in a Hamiltonian (i.e. see Figs. 2 and 5 in Ref.~\cite{barash2024QuantumMonteCarlo}). Theoretically, Ref.~\cite{hen2021determining} provides a necessary and sufficient condition for a PMR-QMC simulation to have a sign problem in a fixed basis. We leave a detailed discussion of this condition to the relevant prior work~\cite{hen2021determining}, but we remark that this is a stronger and more predictive relation than the more well-known \emph{stoquasticity}.}

\subsubsection{ Autocorrelation time}\label{Xsec39-7.4.3}\label{Xsec40-8}

{
Autocorrelation time characterizes the number of Monte Carlo steps over which measurements remain correlated~\cite{landau2015GuideMonteCarlo}. Hence, it determines the number of effective independent samples in the estimation of statistical errors. As with prior QMC approaches, PMR-QMC utilizes binning analysis (see Appendix B of Ref.~\cite{barash2024QuantumMonteCarlo}) to estimate statistical errors. In the spirit of the substantial automation in PMR-QMC, our modern codes~\cite{barash2024PmrQmcCode,ezzell2025code,barash2025highspincode} automatically perform autocorrelation diagnostics to ensure measurement blocks are effectively uncorrelated in binning analysis. Our diagnostic checks that statistical errors from independent runs (i.e., performed in parallel simulations) are comparable to those estimated in any individual run. When our tests pass, they suggest that autocorrelation time is shorter than the block length.

Correspondingly, we do not directly estimate autocorrelation times to adjust simulation parameters. Instead, we directly test whether samples are too correlated using our diagnostic, and if so, we adjust simulation parameters until they are not. As such, our simulation parameters provide an indirect, upper-bound estimate of autocorrelation time. Detailed simulation parameters and simulation times have been provided in prior works~\cite{barash2024QuantumMonteCarlo,ezzell2025universal,babakhani2025quantum}, and we provide additional details for this work in {\cref{sec:numerical-demo}}.
}

\section{Estimation of dynamic observables}\label{Xsec40}\label{Xsec42-8.2}
\label{sec:dynamic-operator-estimators}

We extend our estimator discussion to various dynamic observables defined in {\cref{sec:overview-of-estimators}}. Each quantity to be redefined in the relevant section is defined in terms of the imaginary-time evolved operator,
\begin{equation}
 \label{eq:imag-time-op-again}
 O(\tau) \equiv e^{\tau H} O e^{-\tau H}.
\end{equation}
A direct result of our derivations is that if one can estimate the static quantities $\avg{A}$ and $\avg{B}$, then one can also estimate the dynamic quantities involving $A$ and $B$ we consider below (e.g., see our numerical results in {\cref{sec:numerical-demo}}). For simplicity, then, we perform our derivations with the assumption $A$ and $B$ are in the simple canonical form,
\begin{align}
 \label{eq:simple-A}
 A &\equiv \wtA_k D_k P_k, \\
 \label{eq:simple-B}
 B &\equiv \wtB_l D_l P_l,
\end{align}
unless stated otherwise. This allows us to avoid repeating the various subtleties we described in {\cref{subsec:estimating-arbitrary-operators}}, but handling these subtleties shows that, in principle, one can estimate these dynamic quantities for arbitrary operators. A summary of the dynamic observable estimators we derive in this work alongside their PMR-QMC complexity is given in {\cref{tab:dynamic-complexity}}. As an important remark, our integrated susceptibilities do not rely on numerical integration, as we discuss in {\cref{subsec:estimating-esus}} and \cref{subsec:estimating-fsus}.

\begin{table}
\caption{A summary of dynamic observable estimators we derive in this work and their computational complexity in terms of the PMR-QMC off-diagonal expansion order, $q$. Throughout $A = \wtA_k D_k P_k$ and $B = \wtB_l D_l P_l$ are assumed to be in canonical form. Namely, $\wtA_k$ is diagonal and $D_k P_k \in H$. Furthermore, $\wtA$ and $\wtB$ are purely diagonal.}{%
\begin{tabular}{lll}
\toprule
{Dynamic observable} & {Estimator} & {Estimator complexity} \\
\colrule
$\avg{A(\tau) B}$ & {\eqref{eq:simple-AtauB-correlator-estimator}} & $O(q^2)$ \\
$\int_0^\beta \avg{A(\tau) B} \dtau $ & {\eqref{eq:esimate-es-int-ab}} & $O(q)$ \\
$\int_0^\beta \avg{A(\tau) \wtB} \dtau $ & {\eqref{eq:esimate-es-int-abdiag}} & $O(q^2)$ \\
$\int_0^\beta \avg{A(\tau) \Hdiag} \dtau $ & {\eqref{eq:esimate-es-int-ahdiag}} & $O(1)$ (see Ref.~\cite{ezzell2025universal}) \\
$\int_0^{\beta/2} \tau \avg{A(\tau) B} \dtau $ & {\eqref{eq:esimate-fs-int-ab}} & $O(q^4)$ \\
$\int_0^{\beta/2} \tau \avg{A(\tau) \Hdiag} \dtau $ & {\eqref{eq:esimate-fs-int-ahdiag}} & $O(q^3)$ (see Ref.~\cite{ezzell2025universal}) \\
\botrule
\end{tabular}}
\label{tab:dynamic-complexity}
\end{table}

\subsection{Imaginary time correlators}\label{Xsec41-8.1}\label{Xsec43-8.3}
\label{subsec:estimating-atau-b}
The imaginary time correlator is given by
\begin{equation}
 \avg{A(\tau) B} = \avg{e^{\tau H} A e^{-\tau H} B}.
\label{Xeqn88-107}
\end{equation}
To make our derivation easier to follow, we first proceed with $A$ and $B$ purely diagonal, which we denote $\wtA$ and $\wtB$. We then generalize our results for $A$ and $B$ given by {\eqref{eq:simple-A}} and \eqref{eq:simple-B}.

By the cyclicity of trace, we can write
\begin{equation}
 \Tr[\wtA(\tau) \wtB e^{-\beta H}] = \sum_{z} \avg{z | \wtA e^{-\tau H} \wtB e^{-(\beta - \tau)H} | z},
\label{Xeqn89-108}
\end{equation}
which does not neatly resemble any of our static estimator off-diagonal expansions. Nevertheless, we can make progress by realizing that the off-diagonal expansion introduced in {\cref{subsec:ODE}} is really an expansion of $f(H) \ket{z}$ for analytic functions $f$---not specifically of $\avg{z | e^{-\beta H} | z}$. That is, we can write the expansion,
\begin{equation}
 \label{eq:tau-minus-beta-expansion}
 e^{-(\beta-\tau)H}\ket{z} = \sum_{\Sip} D_{(z, \Sip)} e^{-(\beta-\tau) [E_{z_0}, \ldots, E_{z_p}]} \Sip \ket{z}.
\end{equation}
Proceeding from right to left, then
\begin{equation}
 \label{eq:B-tau-minus-beta-expansion}
 {\wtB} e^{-(\beta-\tau)H}\ket{z} = \sum_{\Sip} {\wtB(z_{p})} D_{(z, \Sip)} e^{-(\beta-\tau) [E_{z_0}, \ldots, E_{z_p}]} \Sip \ket{z},
\end{equation}
for $\ket{z_p} \equiv \Sip \ket{z}$ as usual. We can now perform another expansion of $e^{-\tau H} \ket{z_{p}}$ to find,
\begin{equation}
 {e^{-\tau H}} \wtB e^{-(\beta-\tau)H}\ket{z} = {\sum_{\Sir}}\sum_{\Sip} {D_{(z_{p}, \Sir)}} {\wtB(z_{p})} D_{(z, \Sip)} {e^{-\tau [E_{z_{p}}, \ldots, E_{z_{p+r}}]} } e^{-(\beta-\tau) [E_{z_0}, \ldots, E_{z_p}]} {\Sir} \Sip \ket{z}.
\label{Xeqn92-111}
\end{equation}
Applying the final $\wtA$ operator to $\ket{z_{r+p}} = \Sir \Sip \ket{z}$, we find
\begin{multline}
 {\wtA} {e^{-\tau H}} \wtB e^{-(\beta-\tau)H}\ket{z} = \sum_{\Sir}\sum_{\Sip} {\wtA(z_{p+r})} {\wtB(z_{p})} {D_{(z_{p}, \Sir)}} D_{(z, \Sip)} \\
 \times e^{-\tau [E_{z_{p}}, \ldots, E_{z_{p+r}}]} e^{-(\beta-\tau) [E_{z_0}, \ldots, E_{z_p}]} {\Sir} \Sip \ket{z},
\end{multline}
which can be expressed in a PMR-QMC estimator form before summing over basis states to compute the trace.

First, we can define $q \equiv p + r$ and then substitute,
\begin{equation}
 \Sir \Sip \rightarrow \Siq,
\label{Xeqn93-113}
\end{equation}
Correspondingly, we then observe
\begin{equation}
 D_{(z_{p}, \Sir)} D_{(z, \Sip)} = D_{(z, \Siq)}.
\label{Xeqn94-114}
\end{equation}
So far, this coaxing to an estimator has proceeded in the same way as in the static case (i.e., see {\cref{subsec:estimating-dl-pl}}). The main difference is that we must now handle the double sum, which after careful inspection can be written,
\begin{equation}
 \sum_{\Sir} \sum_{\Sip} \ldots = \sum_{\Siq} \sum_{p = 0}^{q} \ldots.
\label{Xeqn95-115}
\end{equation}
The logic is actually pretty simple. Given a permutation string $\Siq$ composed or $p$ permutations from $\Sip$ and $r$ from $\Sir$, then the length of $p$ can range from $0$ to $q$ and $r$ is fixed at value $r = q - p$. Since the sum notation over $\Sip$ and $\Sir$ means we are really summing over all permutations of all possible strength lengths, this gives us the claimed equality. More precisely, we can write
\begin{equation}
 \label{eq:leibniz-works-noncommuting}
 \avg{z | \wtA {e^{-\tau H}} \wtB e^{-(\beta-\tau)H} | z} = \sum_{\Siq}\sum_{p=0}^q {\wtA(z_{q})} {\wtB(z_{p})} D_{(z, \Siq)} e^{-\tau [E_{z_{p}}, \ldots, E_{z_{q}}]} e^{-(\beta-\tau) [E_{z_0}, \ldots, E_{z_p}]} \avg{z | \Siq | z},
\end{equation}
which as usual is only nonzero when $\Siq = \mathds{1}$. Finally, multiplying and diving by $e^{-\beta [E_{z_0}, \ldots, E_{z_q}]}$ gives,
\begin{equation}
 \avg{z | \wtA {e^{-\tau H}} \wtB e^{-(\beta-\tau)H} | z} = \sum_{\Siq} w_{(z, \Siq)} \left( {\wtA(z)} \sum_{p=0}^q {\wtB(z_{p})} \frac{ e^{-\tau [E_{z_{p}}, \ldots, E_{z_{q}}]} e^{-(\beta-\tau) [E_{z_0}, \ldots, E_{z_p}]}}{e^{-\beta [E_{z_0}, \ldots, E_{z_q}]}} \right),
\label{Xeqn97-117}
\end{equation}
where we employed the $\Siq = \mathds{1}$ periodicity, $\ket{z_q} = \ket{z}$, to pull the $\wtA(z)$ out of the inner sum. Simply summing over all possible basis states $\ket{z}$ to compute the trace reveals,
\begin{equation}
 \label{eq:diagonal-correlator-estimator}
 \left( {\wtA(z)} \sum_{p=0}^q {\wtB(z_{p})} \frac{ e^{-\tau [E_{z_{p}}, \ldots, E_{z_{q}}]} e^{-(\beta-\tau) [E_{z_0}, \ldots, E_{z_p}]}}{e^{-\beta [E_{z_0}, \ldots, E_{z_q}]}} \right) \estimates \avg{\wtA(\tau) \wtB}.
\end{equation}

The derivation of this estimator is completely rigorous but a bit tedious. By inspection, however, its structure resembles the Leibniz rule, {\eqref{eq:leibniz-rule}}, rather closely. Recall that we actually used the Leibniz rule to derive a formal estimator for $\avg{z | g(H) e^{-\beta H} | z}$ for any analytic function $g$ in {\cref{subsec:estimating-f(H)-things}}. For $g(H) e^{-\beta H}$, there is nothing particularly subtle about using the Leibniz rule, as one can view $f(H) = g(H) e^{-\beta H}$ as a singular function of $H$. Put differently, $[g(H), e^{-\beta H}] = 0$, so properties such as invariance to permutations and so on that are expected of the Leibniz rule carry over straightforwardly to the commuting case. Here, things do not commute, as $[e^{-\tau H}, \wtB] \neq 0$, for example. Nevertheless, {\eqref{eq:leibniz-works-noncommuting}} shows that, at least practically, carrying out the Leibniz rule logic without worrying about rigor in this non-commuting case still gives the correct answer in this application.

Proceeding with this line of reasoning, we can quickly derive an estimator for $\avg{A(\tau) B}$ when $A,B$ are given by {\eqref{eq:simple-A}} and \eqref{eq:simple-B},
\begin{equation}
 \label{eq:simple-AtauB-correlator-estimator}
 \left( {\wtA_k(z)} {\delta_{P_k}^{(q)}} \sum_{p=1}^{q-1} {\delta_{P_l}^{(p)}} {\wtB_l(z_{p})} \frac{ e^{-\tau [E_{z_{p}}, \ldots, {E_{z_{q-1}}}]} e^{-(\beta-\tau) [E_{z_0}, \ldots, {E_{z_{p-1}}}]}}{e^{-\beta [E_{z_0}, \ldots, E_{z_q}]}} \right) \estimates \avg{A(\tau) B}.
\end{equation}
The differences between this expression and and the pure diagonal case in {\eqref{eq:diagonal-correlator-estimator}} are the $\delta$ functions and the removal of one argument from both DDEs in the numerator. On the one hand, these differences arise from careful usage of the Leibniz rule heuristic, but we can also understand them by appealing to our prior derivation.

Notably, consider the jump from {\eqref{eq:tau-minus-beta-expansion}}, {\eqref{eq:B-tau-minus-beta-expansion}} but with $\wtB$ replaced with $B = \wtB_l D_l P_l$. The $P_l$ contained in $B$ now becomes the $p{\text{th}}$ permutation in $\Siq$---which explains both $\delta_{P_l}^{(p)}$ and the removal of $E_{z_p}$ from $e^{-(\beta - \tau) [\ldots]}$. At the same time, no division by $D_l(z_p)$ is necessary since $B$ is in canonical form. Similar arguments for $A = \wtA_k D_k P_k$ explain the remaining changes.

We note that both {\eqref{eq:diagonal-correlator-estimator}} and {\eqref{eq:simple-AtauB-correlator-estimator}} are $O(q^2)$ estimators. This follows since building up both $e^{-\tau [\ldots]}$ and $e^{-(\beta - \tau) [\ldots]}$ requires $O(q)$ effort in the worst case. By the sum, we must do this $q$ times, thus $O(q^2)$ represents a conservative, worst case analysis. In practice, we remark that {\eqref{eq:simple-AtauB-correlator-estimator}} involves a simple $O(1)$ containment check with $\delta_{P_k}^{(q)}$. When this fails, one need not evaluate the estimator at all, confirming that $O(q^2)$ is indeed a worst case bound.

\subsection{A generalized energy susceptibility integral}\label{Xsec42}\label{Xsec44-9}
\label{subsec:estimating-esus}
The finite temperature energy susceptibility (ES)~\cite{albuquerque2010QuantumCriticalScaling, ezzell2025universal, schwandt2009QuantumMonteCarlo, wang2015FidelitySusceptibilityMade} is defined in {\eqref{eq:Esus}}. For this reason, we denote the integrated susceptibility
\begin{equation}
 \int_0^\beta \avg{A(\tau) B} \dtau,
\label{Xeqn100-120}
\end{equation}
the ``generalized ES integral,'' though it also useful in estimating various spectral properties~\cite{hen2012excitation, blume1997excited, blume1998excited}. One approach to estimate this quantity is to simply estimate $\avg{A(\tau) B}$ for a grid of $\tau$ using {\eqref{eq:simple-AtauB-correlator-estimator}} and perform numerical integration. { Specifically, this would require evaluating the quantity $f(\tau) = e^{-\tau [E_{z_{p}}, \ldots, {E_{z_{q-1}}}]} e^{-(\beta-\tau) [E_{z_0}, \ldots, {E_{z_{p-1}}}]}$ on $N$ grid points $\tau_1, \ldots, \tau_N$ for a total cost of $O(N q^2)$ operations. The required $N$ depends on the desired accuracy, and will grow with $\beta$. For example, composite Simposon has an error scaling as $O(\beta^2 / N)$, so this approach becomes quite expensive in the low temperature $\beta \gg 1$ regime. 

Alternatively, we know from linearity that
\begin{equation}
 \int_0^\beta \left( {\wtA_k(z)} {\delta_{P_k}^{(q)}} \sum_{p=1}^{q-1} {\delta_{P_l}^{(p)}} {\wtB_l(z_{p})} \frac{ e^{-\tau [E_{z_{p}}, \ldots, {E_{z_{q-1}}}]} e^{-(\beta-\tau) [E_{z_0}, \ldots, {E_{z_{p-1}}}]}}{e^{-\beta [E_{z_0}, \ldots, E_{z_q}]}} \right) \dtau  \estimates \int_0^\beta \avg{A(\tau) B} \dtau.
\label{Xeqn101-121}
\end{equation}
In Ref.~\cite{zeng2025inequalities} the present authors showed,
\begin{equation}
 \int_0^\beta e^{-\tau [x_{j+1}, \ldots, x_q]} e^{-(\beta-\tau) [x_0, \ldots, x_j]} \dtau = - e^{-\beta [x_0, \ldots, x_q]},
\label{Xeqn102-122}
\end{equation}
with a proof also given in {\cref{app:DDE-integral-proofs}} for completeness. Hence, a direct estimator without the need for costly and noisy numerical integration is given by,
\begin{equation}
 \label{eq:esimate-es-int-ab}
 - {\delta_{P_k}^{(q)}} {\wtA_k(z)} \frac{ e^{-\beta [E_{z_0}, \ldots, {E_{z_{q-1}}}]} }{e^{-\beta [E_{z_0}, \ldots, E_{z_q}]}}\sum_{p=1}^{q-1} {\delta_{P_l}^{(p)}} {\wtB_l(z_{p})} \estimates \int_0^\beta \avg{A(\tau) B} \dtau,
\end{equation}
where we have pulled the DDE (recall that DDE is a short-hand for ``divided difference of the exponential'') ratio outside the sum since the numerator no longer depends on $p$. This significantly generalizes the ES-related estimator derived previously by the present authors~\cite{ezzell2025universal} in which $A = B = H_1$ for $H_1$ a portion of the Hamiltonian. 

All quantities, including the DDE ratio, are accessible in $O(1)$ time, but in the worst case, we must perform $q - 2$ equality checks via $\delta_{P_k}^{(p)}$. Hence, this is an $O(q)$ estimator, { a substantial improvement over numerical integration. In fact, it is even faster than evaluating $\avg{A(\tau) B}$ for a fixed $\tau$, which requires $O(q^2)$ effort. This is because the analytical result involves a DDE that is easy to evaluate given that we have already computed and stored the same quantity to evaluate the PMR-QMC weight as in} {\eqref{eq:W}}{ ---hence why it is an $O(1)$ quantity with respect to estimation as discussed in} {\cref{subsec:complexity-interlude}}.

Unlike other estimators we have considered thus far, we note that $\int_0^\beta \avg{A(\tau) B} \dtau$ estimators can have different complexities depending on $B$. For example, if we replace $B$ with a diagonal matrix $\wtB$, we find
\begin{equation}
 \label{eq:esimate-es-int-abdiag}
 \frac{- \delta_{P_k}^{(q)} \wtA_k(z)}{e^{-\beta [E_{z_0}, \ldots, E_{z_q}]}} \sum_{p=0}^q \wtB(z_p) e^{-\beta [E_{z_0}, \ldots, E_{z_q}, E_{z_p}]} \estimates \int_0^\beta \avg{A(\tau) \wtB} \dtau,
\end{equation}
which repeats the $E_{z_p}$ argument in the $e^{-\beta [\ldots]}$ summand. Computing the summand DDE now requires $O(q)$ effort with the addition of $E_{z_p}$, so the complexity is now $O(q^2)$. If we replaced $B = \wtB_l D_l P_l$ with $\wtB_{l_2} D_{l_2} P_{l_2} \wtB_{l_1} D_{l_1} P_{l_1},$ we would also have an $O(q^2)$ estimator from needing to remove an argument. On the other hand, if we replace the general $\wtB$ diagonal with $\Hdiag$, then
\begin{equation}
 \label{eq:esimate-es-int-ahdiag}
 \frac{-\delta_{P_k}^{(q)} \wtA_k(z)}{e^{-\beta [E_{z_0}, \ldots, E_{z_q}]}} \sum_{p=0}^q {E_{z_p}} e^{-\beta [E_{z_0}, \ldots, E_{z_q}, E_{z_p}]} \estimates \int_0^\beta \avg{\wtA(\tau) \Hdiag} \dtau,
\end{equation}
can be simplified into an $O(1)$ estimator using novel divided difference relations~\cite{ezzell2025universal}! The basic intuition is that the leading $E_{z_p}$---rather than $\wtB(z_p)$---relates to a derivative of the DDE with respect to $\beta$ which leads to nontrivial simplifications. 

\subsection{A generalized fidelity susceptibility integral}\label{Xsec43}\label{Xsec45-9.1}
\label{subsec:estimating-fsus}
The finite temperature fidelity susceptibility (FS)~\cite{albuquerque2010QuantumCriticalScaling, ezzell2025universal, schwandt2009QuantumMonteCarlo, wang2015FidelitySusceptibilityMade} is defined in {\eqref{eq:Fsus}}. For this reason, we denote the integrated susceptibility
\begin{equation}
 \int_0^{\beta/2} \tau \avg{A(\tau) B} \dtau,
\label{Xeqn106-126}
\end{equation}
the ``generalized FS integral.'' As with the generalized ES integral in {\cref{subsec:estimating-esus}}, we shall show that one does not need to perform a numerical integration. Instead, the present authors proved~\cite{ezzell2025universal},
\begin{equation}
 \int_0^{\beta/2} \tau e^{-\tau [x_{p+1}, \ldots, x_q]} e^{-(\beta-\tau) [x_0, \ldots, x_p]} \dtau = \sum_{r=0}^p e^{-\frac{\beta}{2}[x_0, \ldots, x_r]} \sum_{m = p + 1}^q e^{-\frac{\beta}{2} [x_r, \ldots, x_q, x_m]},
\label{Xeqn107-127}
\end{equation}
which we also show for completeness in {\cref{app:DDE-integral-proofs}}. By linearity, we obtain the corresponding estimator,
\begin{equation}
 \frac{ \delta_{P_k}^{(q)} \wtA_k(z) }{e^{-\beta [E_{z_0}, \ldots, E_{z_q}]}} \sum_{p=0}^q \delta_{P_l}^{(p)} \wtB_l(z_p) \sum_{r=0}^p e^{-\frac{\beta}{2}[E_{z_0}, \ldots, E_{z_r}]} \sum_{m = p + 1}^q e^{-\frac{\beta}{2} [E_{z_r}, \ldots, E_{z_q}, E_{z_m}]} \estimates \int_0^{\beta/2} \tau \avg{A(\tau) B} \dtau,
\label{Xeqn108-128}
\end{equation}
for $A = \wtA_k D_k P_k$ and $B = \wtB_l D_l P_l$. By direct algebra, we can reorder the sums (see again Ref.~\cite{ezzell2025universal}),
\begin{equation}
 \label{eq:esimate-fs-int-ab}
 \frac{ \delta_{P_k}^{(q)} \wtA_k(z) }{e^{-\beta [E_{z_0}, \ldots, E_{z_q}]}} \sum_{r=0}^q e^{-\frac{\beta}{2}[E_{z_0}, \ldots, E_{z_r}]} \sum_{\substack{p,m=r \\ p \leq m - 1}} \delta_{P_l}^{(p)} \wtB_l(z_p) e^{-\frac{\beta}{2} [E_{z_r}, \ldots, E_{z_q}, E_{z_m}]} \estimates \int_0^{\beta/2} \tau \avg{A(\tau) B} \dtau,
\end{equation}
whereupon it is apparent this is an $O(q^4)$ estimator, as the innermost DDE requires $O(q)$ effort.

As with the general ES integral, the general FS estimator and its corresponding complexity can differ depending on the form of $A$ and $B$. For example, if we consider $\wtA$ and $\wtB$ both diagonal, we find
\begin{equation}
 \label{eq:esimate-fs-int-abdiag}
 \frac{\wtA(z) }{e^{-\beta [E_{z_0}, \ldots, E_{z_q}]}} \sum_{r=0}^q e^{-\frac{\beta}{2}[E_{z_0}, \ldots, E_{z_r}]} \sum_{\substack{p,m=r \\ p \leq {m}}} \wtB(z_p) e^{-\frac{\beta}{2} [E_{z_r}, \ldots, E_{z_q}, {E_{z_p}}, E_{z_m}]} \estimates \int_0^{\beta/2} \tau \avg{\wtA (\tau) \wtB} \dtau,
\end{equation}
which is also an $O(q^4)$ estimator. Yet, if the diagonal $\wtB$ is $\Hdiag$ in particular, we find,
\begin{equation}
 \label{eq:esimate-fs-int-ahdiag}
 \frac{\wtA(z) }{e^{-\beta [E_{z_0}, \ldots, E_{z_q}]}} \sum_{r=0}^q e^{-\frac{\beta}{2}[E_{z_0}, \ldots, E_{z_r}]} \sum_{\substack{p,m=r \\ p \leq {m}}} {E_{z_p}} e^{-\frac{\beta}{2} [E_{z_r}, \ldots, E_{z_q}, {E_{z_p}}, E_{z_m}]}\estimates \int_0^{\beta/2} \tau \avg{\wtA (\tau) \wtB} \dtau,
\end{equation}
which can actually be reduced to a $O(q^3)$ estimator using novel divided difference relations derived in Ref.~\cite{zeng2025inequalities,ezzell2025universal}. Again, the change from $\wtB(z_p)$ to $E_{z_p}$ is crucial to relate this expression to derivatives of the DDE with respect to $\beta$. Although challenging, it is possible that this could be further improved, perhaps in other special cases.

{
\section{Numerical demonstration of our estimators}\label{Xsec44}\label{Xsec46-9.2}
\label{sec:numerical-demo}

Having thus derived estimators for arbitrary static operators and non-trivial dynamic generalizations therein, we now provide numerical evidence that they function correctly in practice. In support of these results, our code is open source~\cite{ezzell2025code} and user-friendly, as discussed in {\cref{app:code-details}}. Of note, it is natively compatible with arbitrary spin-1/2 Hamiltonians, and one can readily adapt our estimator code to PMR-QMC codes for general high spin models~\cite{babakhani2025quantum} and Bose-Hubbard models~\cite{akaturk2024quantum} as desired.

Evidence for the efficacy of PMR-QMC on Hamiltonian based observables has been well established~\cite{albash2017OffdiagonalExpansionQuantum,gupta2020PermutationMatrixRepresentation,barash2024QuantumMonteCarlo,akaturk2024quantum,ezzell2025universal,babakhani2025quantum}. For example, Refs.~\cite{albash2017OffdiagonalExpansionQuantum,gupta2020PermutationMatrixRepresentation,ezzell2025universal} directly show PMR-QMC agrees with results obtained from stochastic series expansion (SSE), with Refs.~\cite{albash2017OffdiagonalExpansionQuantum,gupta2020PermutationMatrixRepresentation} emphasizing a simulation time advantage for PMR-QMC. On the other hand, Refs.~\cite{ezzell2025universal,barash2024QuantumMonteCarlo} focus on the study of physical models of interest, with Ref.~\cite{ezzell2025universal} successfully studying critical phenomena.

Accordingly, we focus on proof-of-principle demonstrations that our method continues to work on non-trivial random and dynamic observables, which is the main appeal of our prior derivations. At the same time, our demonstrations are fully black-box in the sense that no model or observable specific changes to the source code are needed throughout~\cite{ezzell2025universal}. By simply modifying the Hamiltonian input file, one can also readily study any other spin-1/2 model (see {\cref{app:code-details}}).

\subsection{TFIM verification results}\label{Xsec45}\label{Xsec47-9.3}
\label{subsec:verify-results}

We begin with the spin-1/2 transverse-field Ising model (TFIM) for which we know all Pauli strings are natively canonical and hence have unbiased estimators. Specifically, we study the TFIM on a square lattice with open boundary conditions,
\begin{equation}
 \label{eq:tfim}
 H = -\sum_{\langle i, j \rangle }Z_i Z_j - \lambda \sum_{i=1}^{n} X_i,
\end{equation}
for $X_i,Z_i$ are the standard $X$ and $Z$ Pauli spin-1/2 matrices acting on the $i{\text{th}}$ site and $\avg{i, j}$ denotes only all the nearest neighbor connections on the square $n \times n$ lattice.

\begin{figure*}[htp!]
    \begin{subfigure}[t]{0.48\textwidth}
        \includegraphics[width=1\linewidth]{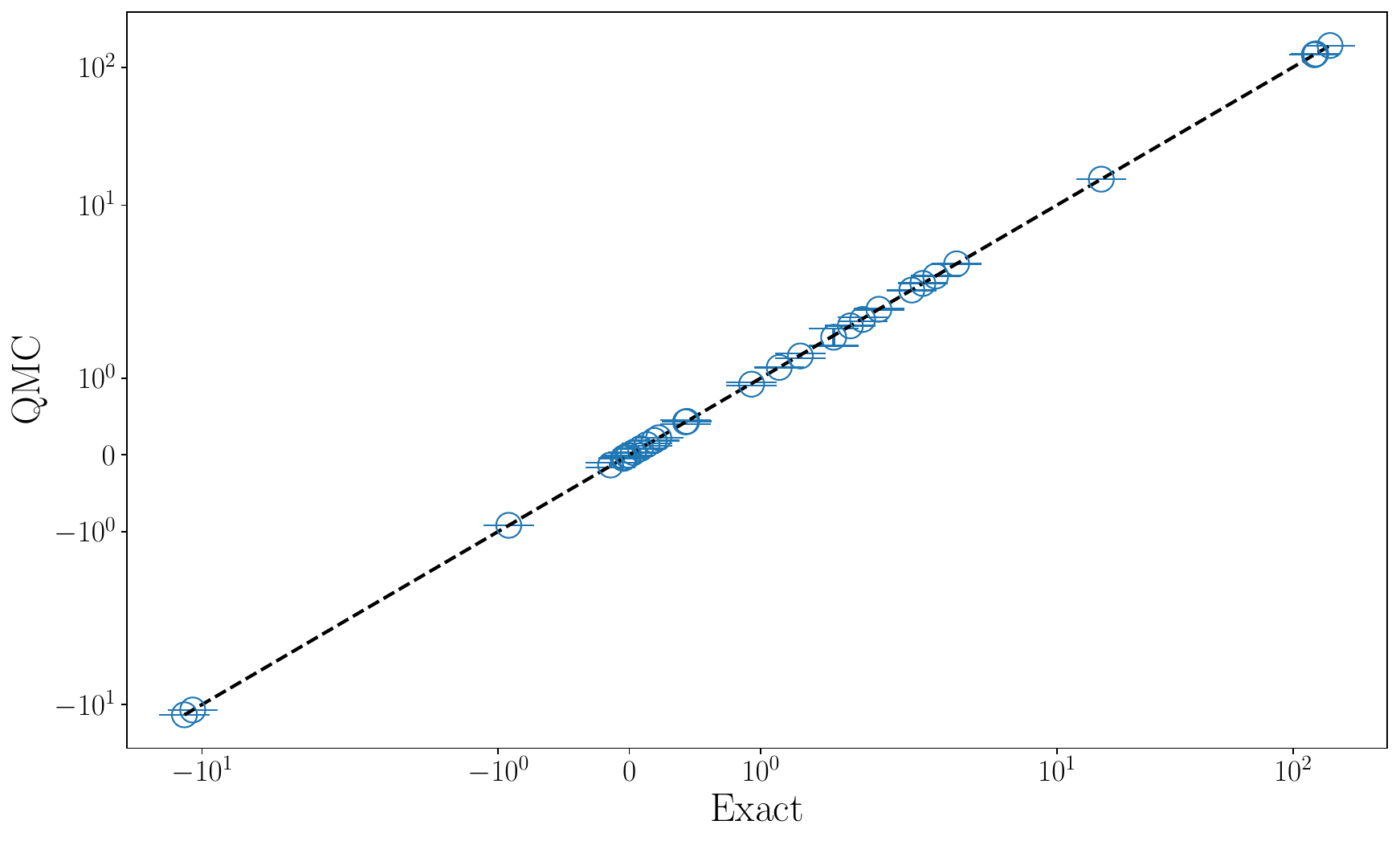}
        \caption{{  Exact and QMC values agree on a line ($3 \sigma$ error bars shown, but all QMC estimates agree within 2$\sigma$).}}
        \label{subfig:verify-standard}
    \end{subfigure}%
    \hfill 
    \begin{subfigure}[t]{0.48\textwidth}
        \includegraphics[width=1\linewidth]{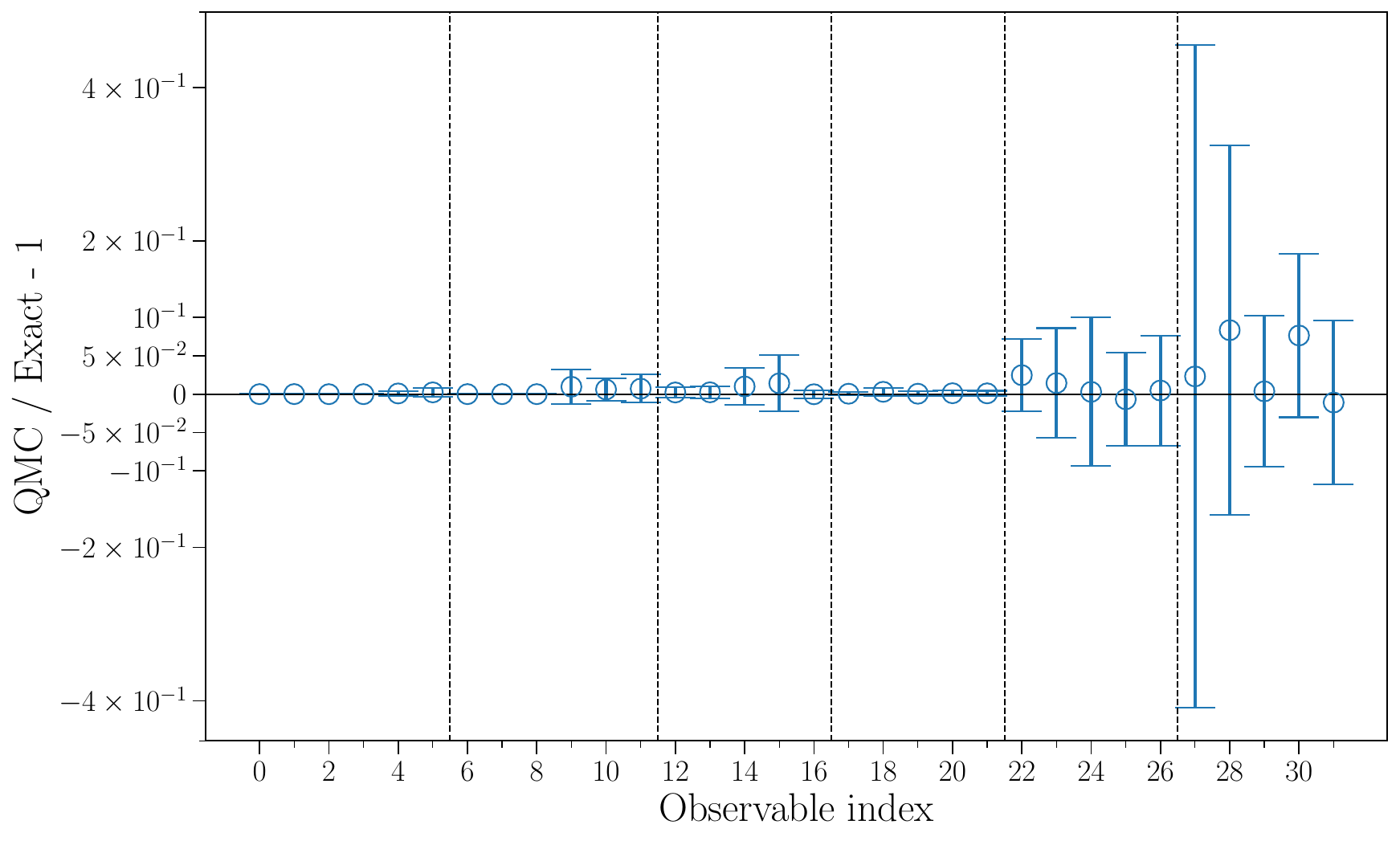}
        \caption{{  Ratio of QMC estimates to exact values are clearly all close to one up to $3\sigma$ error bars. }}
         \label{subfig:verify-custom}
    \end{subfigure}%
    \\
    \begin{subfigure}[t]{1\textwidth}
        \begin{tabular}{|c|c|}
            \hline  
            $0$ & $-\avg{H}$ \\
            \hline
            $1$ & $\avg{H^2}$ \\
            \hline 
            $2$ & $-\avg{D_0}$ \\
            \hline 
            $3$ & $\avg{D_0^2}$ \\
            \hline 
            $4$ & $- \avg{\Gamma}$ \\
            \hline 
            $5$ & $ \avg{\Gamma^2} $ \\
            \hline  
        \end{tabular}
        ~
        \begin{tabular}{|c|c|}
            \hline  
            $6$ & $\avg{D_0(\beta/2) D_0}$ \\
            \hline
            $7$ & $\int_0^\beta \avg{D_0(\tau) D_0} \dtau $ \\
            \hline 
            $8$ & $\int_0^{\beta/2} \tau \avg{D_0(\tau) D_0} \dtau $ \\
            \hline 
            $9$ & $\avg{\Gamma(\beta/2) \Gamma}$ \\
            \hline 
            ${10}$ & $\int_0^\beta \avg{\Gamma(\tau) \Gamma} \dtau $ \\
            \hline 
            ${11}$ & $\int_0^{\beta/2} \tau \avg{\Gamma(\tau) \Gamma} \dtau $ \\
            \hline  
        \end{tabular}
        ~
        \begin{tabular}{|c|c|}
            \hline  
            ${12}$ & $\chi_E^{D_0}$ \\
            \hline
            ${13}$ & $\chi_F^{D_0}$ \\
            \hline 
            ${14}$ & $\chi_E^{\Gamma}$ \\
            \hline 
            ${15}$ & $\chi_F^{\Gamma}$ \\
            \hline 
            ${16}$ & $C_v$ \\
            \hline  
        \end{tabular}
        ~
        \begin{tabular}{|c|c|}
            \hline  
            ${17}$ & $\avg{A}$ \\
            \hline
            ${18}$ & $\avg{A^2}$ \\
            \hline 
            ${19}$ & $  \avg{A(\beta/2) A}  $ \\
            \hline 
            ${20}$ & $ \int_0^\beta \avg{A(\tau) A} \dtau  $ \\
            \hline 
            ${21}$ & $ \int_0^{\beta/2} \tau \avg{A(\tau) A} \dtau  $ \\
            \hline  
        \end{tabular}
        ~
        \begin{tabular}{|c|c|}
            \hline  
            ${22}$ & $\avg{B}$ \\
            \hline
            ${23}$ & $\avg{B^2}$ \\
            \hline 
            ${24}$ & $  \avg{B(\beta/2) B}  $ \\
            \hline 
            ${25}$ & $ \int_0^\beta \avg{B(\tau) B} \dtau  $ \\
            \hline 
            ${26}$ & $ \int_0^{\beta/2} \tau \avg{B(\tau) B} \dtau  $ \\
            \hline  
        \end{tabular}
        ~
        \begin{tabular}{|c|c|}
            \hline  
            ${27}$ & $\text{Re}(\avg{AB})$ \\
            \hline
            ${28}$ & $\text{Im}(\avg{AB})$ \\
            \hline 
            ${29}$ & $  \avg{A(\beta/2) B}  $ \\
            \hline 
            ${30}$ & $ \int_0^\beta \avg{A(\tau) B} \dtau  $ \\
            \hline 
            ${31}$ & $ \int_0^{\beta/2} \tau \avg{A(\tau) B} \dtau  $ \\
            \hline  
        \end{tabular}
        ~
        \caption{A legend of $x$-axis labels. Observables $1$ to ${16}$ are { Hamiltonian-based observables defined in terms of $H,$ $D_0 \equiv \Hdiag,$ and $\Gamma \equiv \Hoffdiag = H - D_0$. Remaining ${17}$ to ${31}$ are custom observables defined in terms of $A = X_1 + Z_2 Z_3$ and $B \approx -0.77 X_3 X_9 + 0.16 Z_3 Z_6 Z_9$ $- 0.97 Y_1 X_6 Z_7$ (see
        \cref{eq:verify-A,eq:verify-B}).
        }
        }
    \end{subfigure}
    \caption{We demonstrate clear agreement between PMR-QMC estimates and { exact calculation (just direct numerical linear algebra as described in} \cref{app:code-details}) for a wide variety of observables. Calculations are performed for the $3\times 3$ square TFIM { with open boundary conditions} in Eq.~(\ref{eq:tfim}) for $\beta = 1.0, \lambda = 0.5$. QMC points and error bars represent the average and thrice the standard deviation, $3\sigma$, over 100 independent runs with different random seeds.}
    \label{fig:verify-qmc}
\end{figure*}

In support of the veracity of our method, we first estimate a variety of observables for a $3 \times 3$ instance of the square TFIM for $\beta = 1.0$ and $\lambda = 0.5$ with our QMC method and show they agree with ``exact (numerical) calculations'' as shown in {\cref{fig:verify-qmc}} for 32 different static and dynamic observables. { By ``exact calculations,'' we simply mean performing direct numerical linear algebra up to numerical precision with the Python packages Numpy}~\cite{harris2020array} { and SciPy}~\cite{virtanen2020scipy}. { For more details, we refer interested readers to} {\cref{app:code-details}}. The observables estimated in {\cref{fig:verify-qmc}} are defined in terms of,
\begin{align}
	\Hdiag &\equiv \diag{H} = - \sum_{\avg{i, j}} Z_i Z_j, \\
	\Hoffdiag &\equiv H - \diag{H} = -0.5 \sum_{i=1}^3 X_i, \\
	A &\equiv X_1 + Z_2 Z_3, \label{eq:verify-A} \\
	B &\equiv -0.773712 X_3 X_9 + 0.155294 Z_3 Z_6 Z_9 - 0.966529 Y_1 X_6 Z_7. \label{eq:verify-B}
\end{align}
The choice of these particular operators is not totally arbitrary and serves to illustrate that our different types of estimators work. 

First, the  \emph{Hamiltonian-based observables} $H$, $\Hdiag,$ and $\Hoffdiag$ have tailored estimators (see {\cref{tab:static-complexity}} for a reminder). Second, $X_1$ and $Z_2 Z_3$ are not Hamiltonian based but are obviously canonical in that they can be written clearly as a single term of the form $\Lambda_l D_l P_l$. Finally, any Pauli string can be written in canonical form for the TFIM (see {\cref{subsec:illustrative-examples}}), so $B$ is chosen to be a sum of random Pauli strings with random coefficients uniformly sampled from $[-1, 1]$. To avoid rare sampling issues (see {\cref{subsec:improved-estimator}}), we have restricted them to low-weight Paulis. As another technical remark, observables 27 and 28 show that; as expected, our method is capable of estimating non-Hermitian expectation values since it successfully estimates the real and imaginary parts of $\avg{AB}$. Of course, our derivations are for general operators, but for now, our code only supports inserting Pauli observables with real weights (see {\cref{app:code-details}}), so this is why we probe a non-Hermitian operator in such an indirect way. 

\subsection{TFIM proof-of-principle results}\label{Xsec46}\label{Xsec48-9.4}
\label{subsec:larger-results}

\begin{figure*}[htp!]
    \begin{subfigure}[t]{0.32\textwidth}
        \includegraphics[width=1\linewidth]{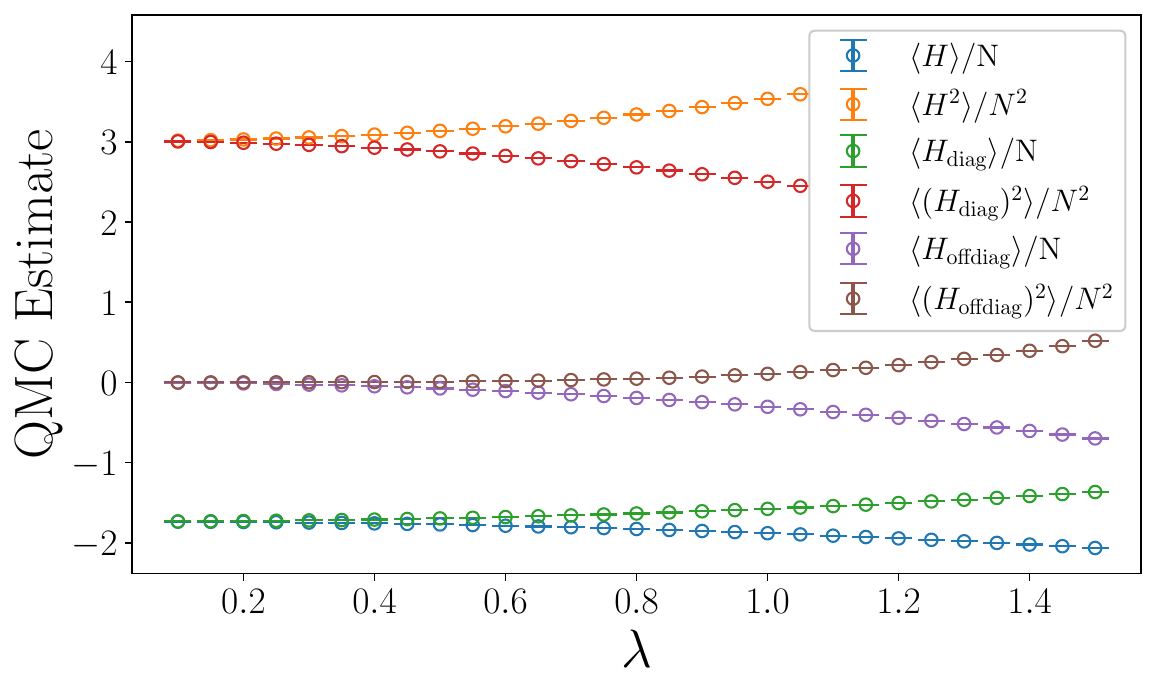}
        \caption{Static { Hamiltonian-based} observables.}
        \label{subfig:standard-static-obs}
    \end{subfigure}%
    ~
    \begin{subfigure}[t]{0.32\textwidth}
        \includegraphics[width=1\linewidth]{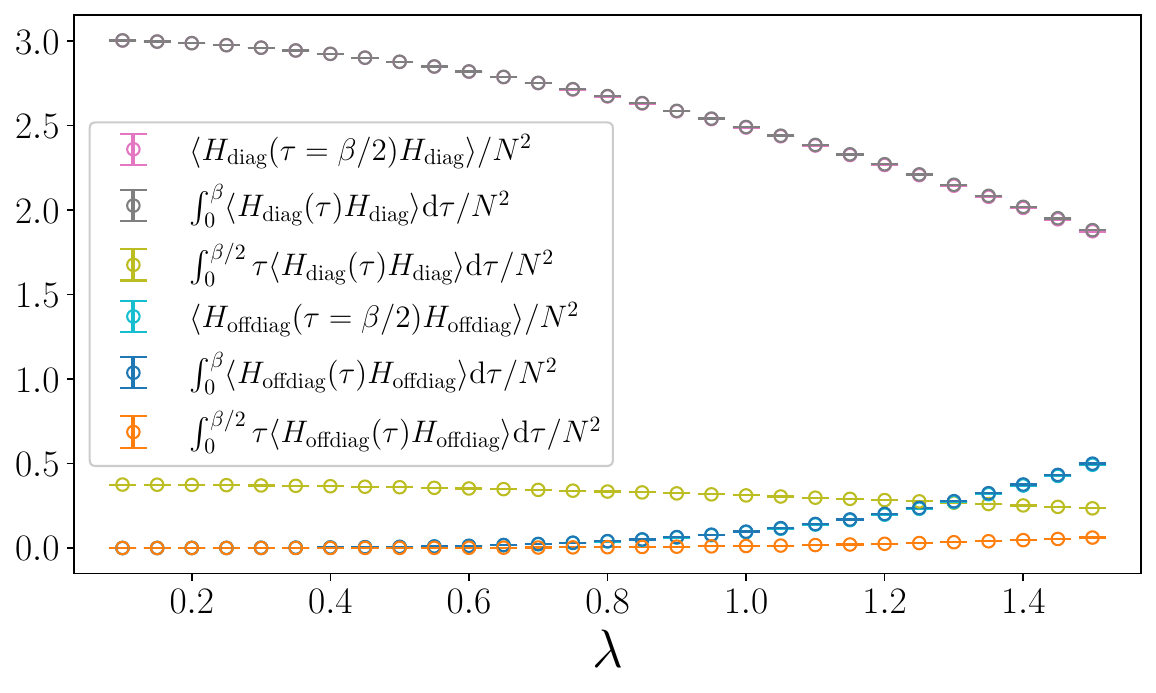}
        \caption{Dynamic { Hamiltonian-based} observables.}
        \label{subfig:standard-dynamic-obs}
    \end{subfigure}%
    ~
    \begin{subfigure}[t]{0.32\textwidth}
        \includegraphics[width=1\linewidth]{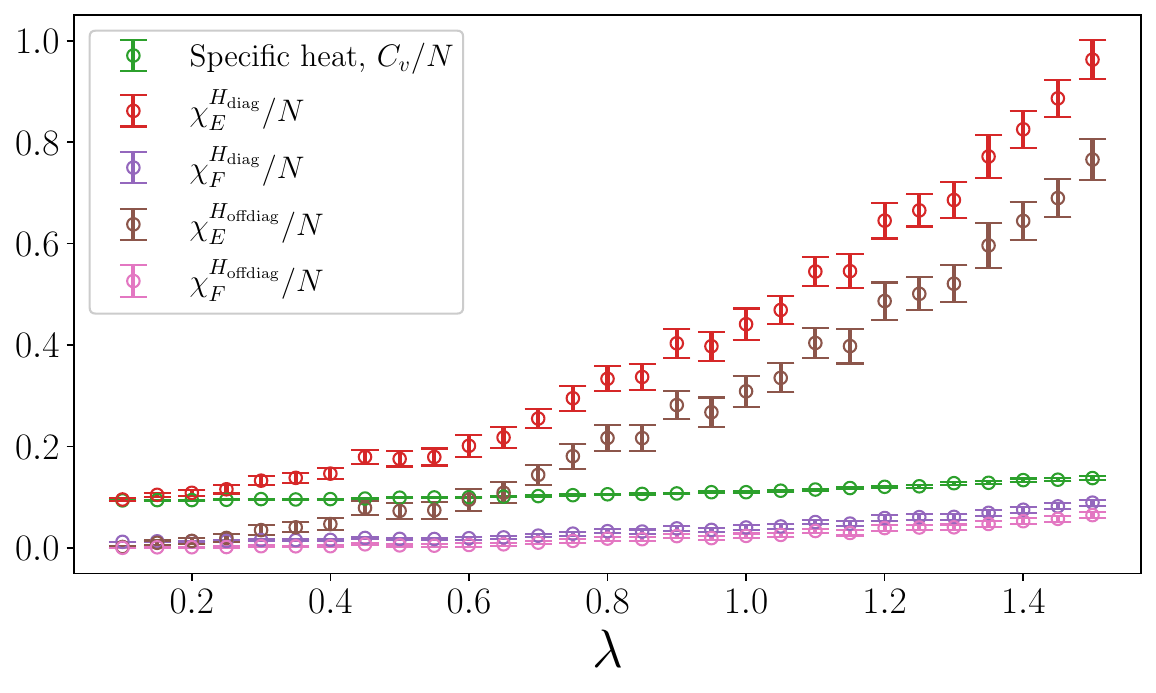}
        \caption{Derived { Hamiltonian-based} observables computed via jackknife binning analysis~\cite{berg2004introduction, barash2024QuantumMonteCarlo}.}
    \end{subfigure}%
    \caption{We estimate static, dynamic, and derived { Hamiltonian-based} observables for the $8\times 8$ square TFIM in \cref{eq:tfim} { with open boundary conditions} as a function of transverse field strength. Points and error bars represent the average and twice the standard deviation, $2\sigma$, over 100 independent runs with different random seeds.}
    \label{fig:standard-obs}
    \label{subfig:standard-derived-obs}
\end{figure*} 

Having demonstrated that our method correctly reproduces exactly computable values for a $3 \times 3$ TFIM instance in {\cref{subsec:verify-results}}, we now use PMR-QMC to estimate a similar set of 31 observables for an $8 \times 8$ instance of the TFIM for $\beta = 1.0$ and $\lambda \in [0.1, 1.5]$. The results are shown in {\cref{fig:standard-obs}} and \cref{fig:custom-obs}, respectively. As with the $3 \times 3$ example, the $8 \times 8$ plots explore observables defined in term of (the now $2^8 \times 2^8$ matrices) $H$, $\Hdiag$, $\Hoffdiag$,
\begin{align}
 A &\equiv X_1 + Z_2 Z_3, \label{eq:larger-A}
\end{align}
and a random observable
\begin{align}
 B &\equiv -0.241484 Z_{22} X_{31} Z_{49} + 0.784290 Y_{17} Z_{53} + 0.929765 Y_{62}, \label{eq:larger-B}.
\end{align}
The motivation for choosing these four observables is the same as discussed in {\cref{subsec:verify-results}}.

\begin{figure*}[hpt]
    \begin{subfigure}[t]{0.32\textwidth}
        \includegraphics[width=1\linewidth]{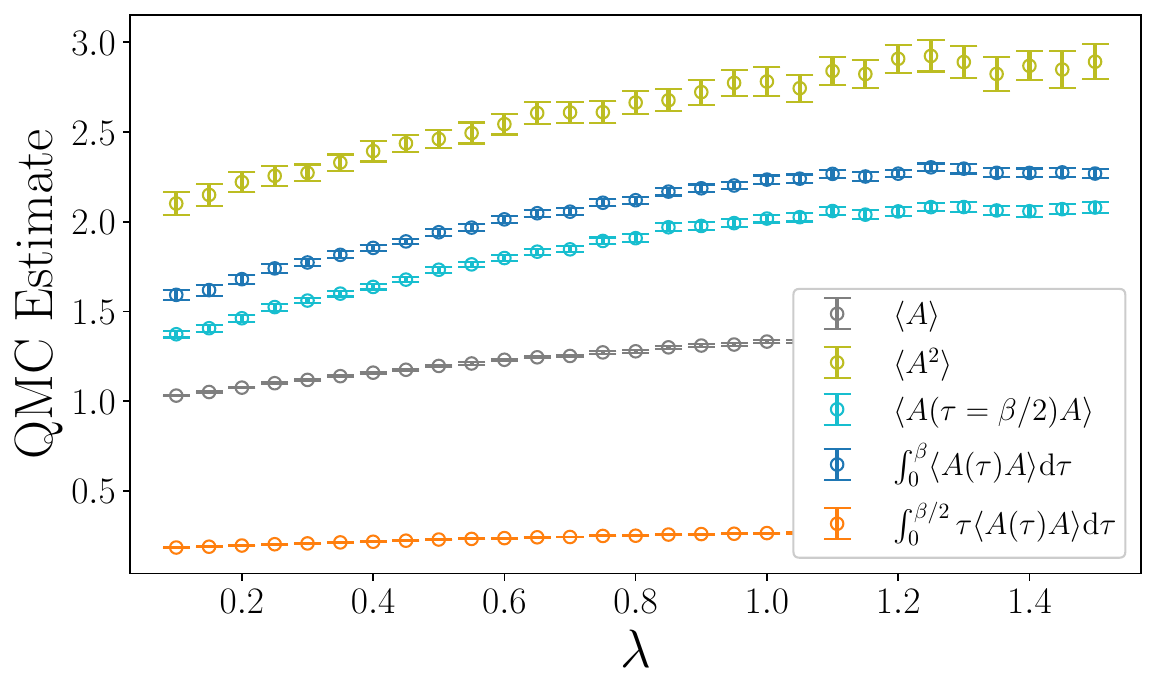}
        \caption{Observables using $A = X_1 + Z_2Z_3$.}
        \label{subfig:custom-A}
    \end{subfigure}%
    ~
    \begin{subfigure}[t]{0.32\textwidth}
        \includegraphics[width=1\linewidth]{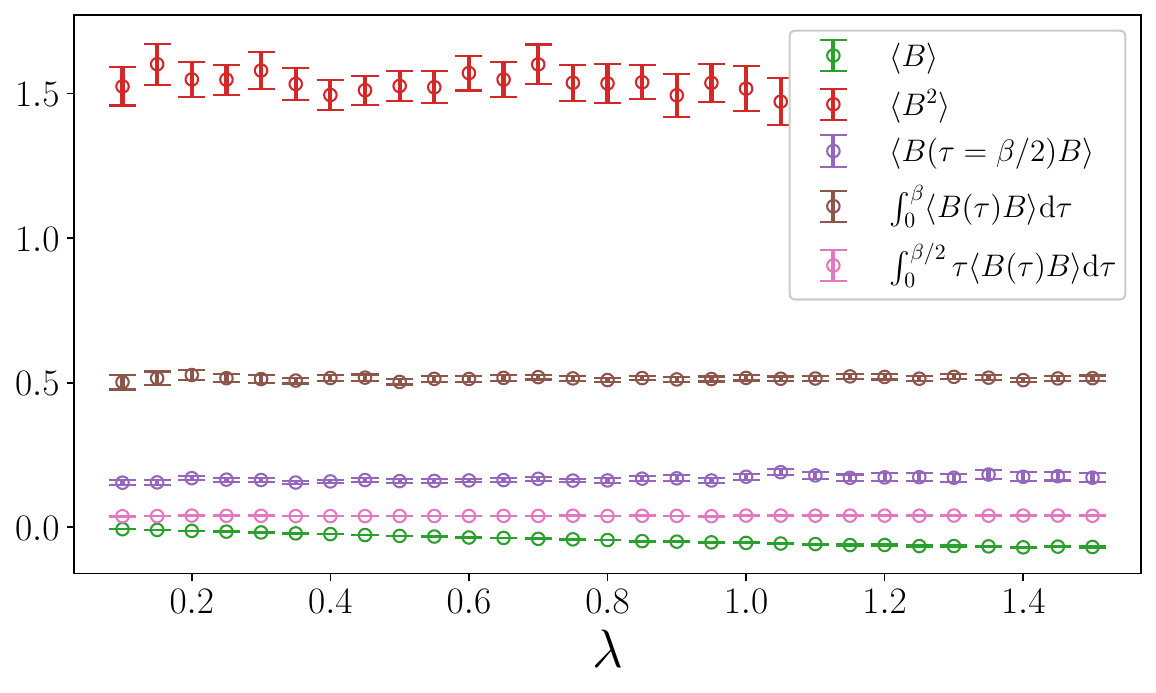}
        \caption{Observables using $B \approx 0.93 Y_{62}$ $-0.24 Z_{22}X_{31}Z_{49} + 0.78 Y_{17} Z_{53}$.}
        \label{subfig:custom-B}
    \end{subfigure}%
    ~
    \begin{subfigure}[t]{0.32\textwidth}
        \includegraphics[width=1\linewidth]{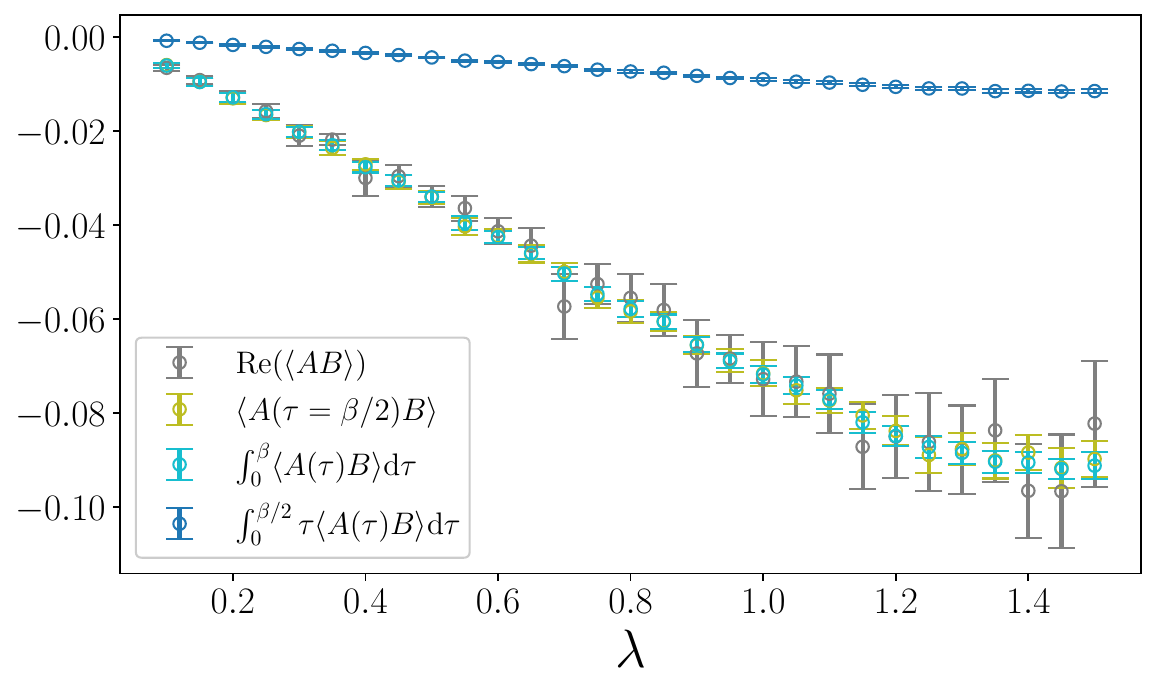}
        \caption{Quantities involving both $A$ in \cref{eq:larger-A} and $B$ in \cref{eq:larger-B}.}
        \label{subfig:custom-AB}
    \end{subfigure}%
    \caption{We estimate custom static and dynamic observables defined in terms of $A$ and $B$ given in \cref{eq:larger-A,eq:larger-B}, respectively for the $8\times 8$ square TFIM in \cref{eq:tfim} { with open boundary conditions} as a function of transverse field strength. Points and error bars represent the average and twice the standard deviation, $2\sigma$, over 100 independent runs with different random seeds.}
    \label{fig:custom-obs}
\end{figure*}

For a model of this size, direct numerical linear algebra verification is no longer tractable. However, many trends are consistent with general expectations. For example, $H = \Hdiag + \Hoffdiag$ with $\Hoffdiag \propto \lambda$ by definition. As anticipated, $\avg{\Hoffdiag}$ and $\avg{\Hoffdiag^2}$ converge to $0$ as $\lambda \rightarrow 0$. Consistent with this, $\avg{H} \approx \avg{\Hdiag}$ and $\avg{H^2} \approx \avg{\Hdiag^2}$ as $\lambda \rightarrow 0$ as well. {
For the random observables, to our knowledge, there is no appropriate modification to existing QMC codes that makes them capable of measuring such non-trivial observables. This is indeed part of the motivation for this work. } Nevertheless, some simple sanity checks can be made. For example, since $B$ is random, we do not expect a pure $B$ observables to depend on $\lambda$, which agrees with the flat trends in {\cref{fig:custom-obs}}b.

\subsection{Random model proof of principle results}\label{Xsec47}\label{Xsec49-10}

Having demonstrated the validity of our method, we now demonstrate its capability to study non-trivial random observables on a randomly rotated toy model. First, we define the simple two qubit model
\begin{equation}
 \label{eq:prl-model}
 H(\lambda) = Z_1 Z_2 + \lambda(Z_1 + Z_2) + 0.1 (X_1 + X_2),
\end{equation}
first introduced and studied in a quantum sensing experiment on an NMR system via the fidelity approach~\cite{zhang2008detection}. By design, this model exhibits an avoided level crossing near the critical points $\lambda \approx \pm 1$, and we fix $\lambda = 1.0$ in our results here. Now, we define the random observable,
\begin{equation}
 \label{eq:random-o-for-prl-model}
 O = 0.683403\, Z_{1}
- 0.643777\, Y_{1} Z_{2}
- 0.662378\, Z_{2}
+ 0.738353\, Y_{1} Y_{2}
\notag - 0.920660\, X_{1} X_{2}.
\end{equation}
As with the TFIM example above, we can readily estimate $\avg{O}, \avg{O^2}, \avg{O(\tau) O}$, $\int_0^\beta \avg{O(\tau) O} \dtau,$ and $\int_0^{\beta/2} \tau \avg{O(\tau) O} \dtau$, and verify they are all correct.

To further test the algorithm, we additionally rotate the Hamiltonian and the observable, i.e., we define
\begin{equation}
 \label{eq:random-H-and-O}
 H_U = U H(\lambda = 1) U^{\dagger}, \ \ \ O_U = U O U^{\dagger},
\end{equation}
for $U$ a linear combination of random anti-commuting Paulis with support over 100 qubits with normalized coefficients. The $U$ we use contains 59 unique Pauli terms which include $0.067559 \sum_{i=1}^{100} X_i$ and $-0.110868 X_1 X_2 X_3 Z_4$ which anti-commute, for example. Conjugation by this $U$ causes $H_U$ and $O_U$ to contain 179 and 295 terms, respectively. The complete specifications are provided in our open source code~\cite{ezzell2025code} for the exact $U$, $H_U$ and $O_U$ used in this example. Despite the large number of terms, we have chosen this $U$ such that $H_U$ does not have a sign problem, as discussed in Appendix C of Ref.~\cite{ezzell2025universal}.

As such, we expect our approach to be able to estimate $\avg{O_U}_{H_U} = \Tr[O_U e^{-\beta H_U}] / \Tr[e^{-\beta H_U}]$ and variations thereof and return the same answer as $\avg{O}_H = \Tr[O e^{-\beta H}] / \Tr[e^{-\beta H}]$. In {\cref{fig:comparison-of-O-and-OU}}, we verify that this holds true by comparing both $\avg{O}_H$ to $\avg{O_U}_{H_U}$ and the estimated values of
\begin{equation}
 \label{eq:f-and-h}
 f(\beta) = \int_0^{\beta/2} \tau \avg{O(\tau) O}_H \dtau \ \ \ \text{and} \ \ \ h(\beta) = \int_0^{\beta/2} \tau \avg{O_U (\tau) O_U }_{H_U} \dtau
\end{equation}
as a function of $\beta$. Up to error bars, all estimates agree with each other for both observables. Although not shown, similar agreement holds for other observables $\avg{O^2}$, $\avg{O(\tau) O}$, and $\int_0^\beta \avg{O(\tau) O} \dtau$.

\begin{figure}[htp]
    \centering
    \includegraphics[width=0.45\linewidth]{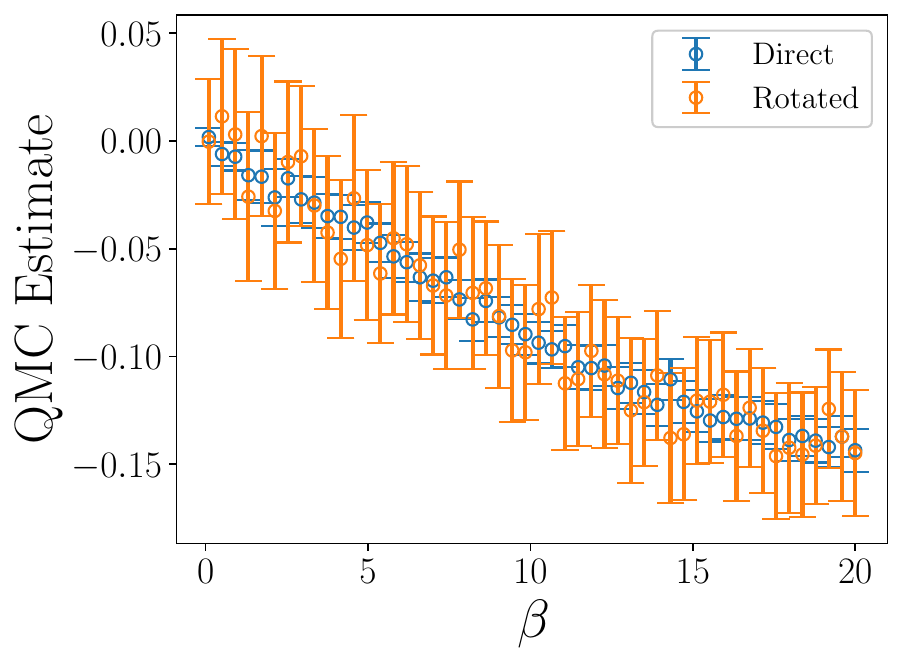}
    \includegraphics[width=0.45\linewidth]{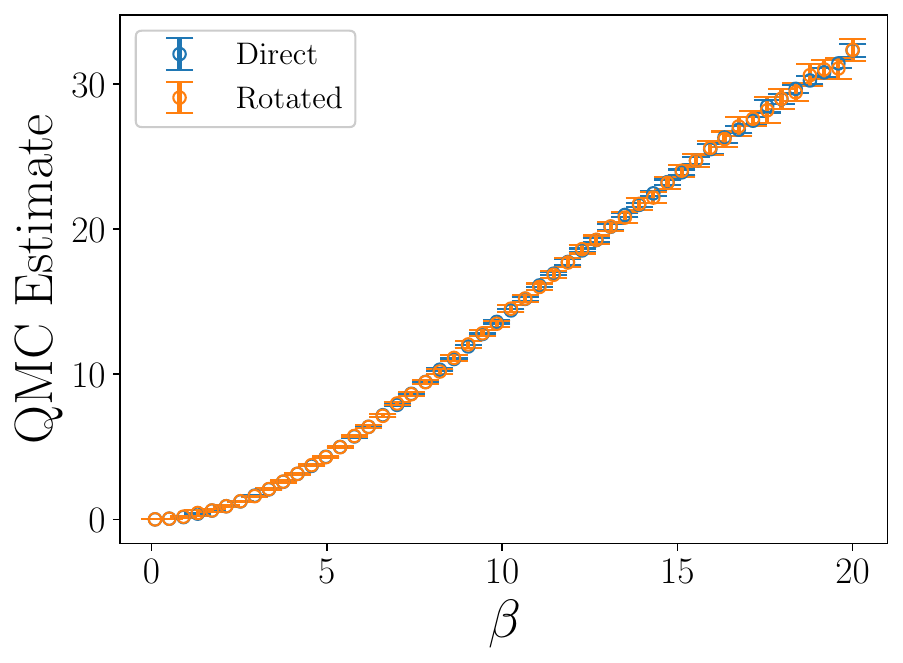}
    \caption{We verify that our method outputs the same result up to $3 \sigma$ error bars in the estimation of direct or rotated observables on $H$ and $H_U$ defined in \cref{eq:prl-model,eq:random-H-and-O}, respectively. (a) A comparison of $\avg{O}_H$ (direct) and $\avg{O_U}_{H_U}$ (rotated) as a function of $\beta$. (b) A comparison of $f(\beta)$ and $h(\beta)$ defined in \cref{eq:f-and-h} as a function of $\beta$.}
    \label{fig:comparison-of-O-and-OU}
\end{figure}

\subsection{Resource estimates of numerical experiments}\label{Xsec48}\label{Xsec50-A.1}

The numerical experiments for the two qubit and 100 qubit rotated model in {\cref{fig:comparison-of-O-and-OU}} were performed on a laptop with a 2.3GHz 8-core Intel Core i9 CPU and 64GB of RAM. Data collected for the TFIM used to make {\cref{fig:verify-qmc}}--\cref{fig:custom-obs} were performed on a University of Southern California high-performance computing cluster (HPC). As a QMC method, our approach is naturally suited for lazy parallel execution and exhibits near-perfect strong scaling as the number of cores increases (see, e.g., Fig. 8 of Ref.~\cite{barash2024QuantumMonteCarlo}), and all simulations in this work take advantage of parallel computations. For TFIM data taken on the HPC, 100 parallel computations were performed, but this is likely excessive, as we only used 6 threads for data taken on our laptop. Our algorithm is highly memory efficient~\cite{barash2024QuantumMonteCarlo, gupta2020CalculatingDividedDifferences}, so 200\,MB of memory per run was more than sufficient for all calculations done in this work.

\begin{figure}[htp]
    \centering
    \includegraphics[width=0.45\linewidth]{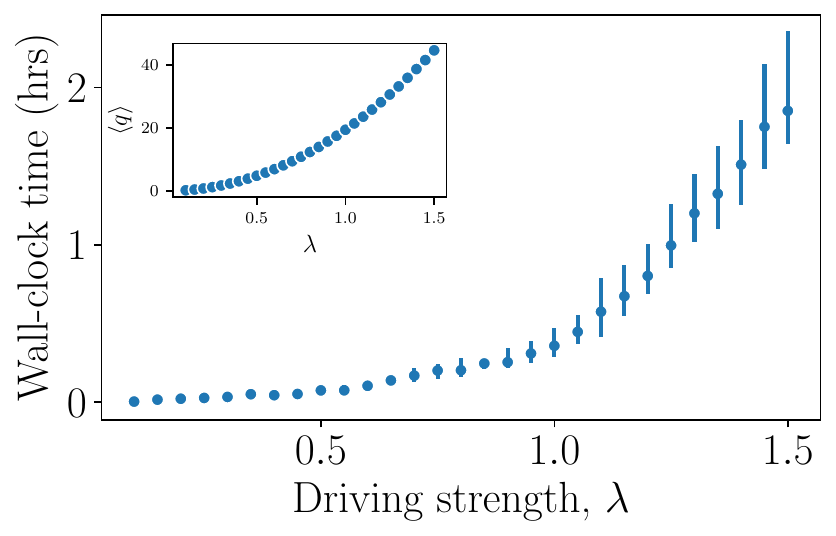}
    \includegraphics[width=0.45\linewidth]{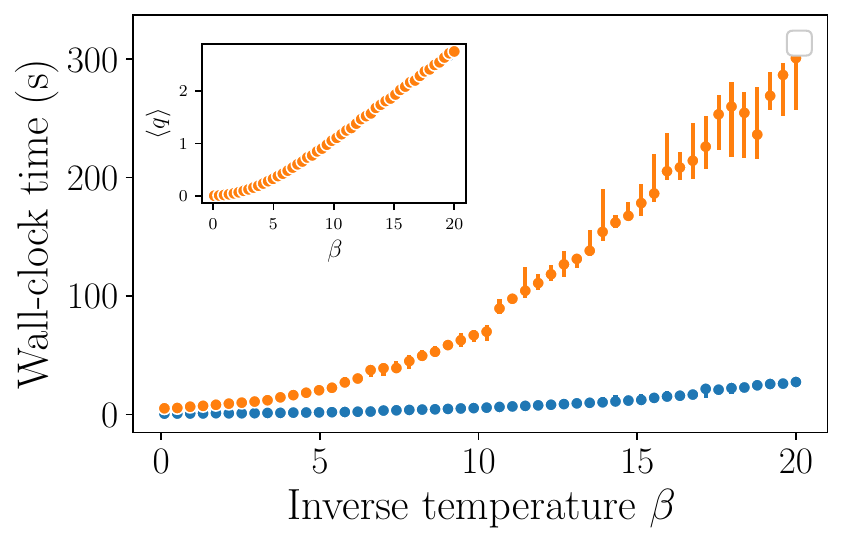}
    \caption{A brief summary of wall-clock times used to generate data in this work. (Left) Wall-clock time and $\avg{q}$ as a function of $\lambda$ for the $8 \times 8$ TFIM data to collect all measurement estimates plotted in \cref{fig:standard-obs,fig:custom-obs}. (Right)  Wall-clock time and $\avg{q}$ as a function of $\beta$ in the study of the direct and rotated models in \cref{fig:comparison-of-O-and-OU}. Error bars are 95\% intervals, i.e., the band in which 95\% of all empirical wall-clock or $\avg{q}$ values lie.}
    \label{fig:timing-estimates}
\end{figure} 

 As such, we focus on reporting wall-clock time estimates. In {\cref{fig:timing-estimates}}, we provide wall-clock timing estimates in the collection of the $8 \times 8$ TFIM data plotted in {\cref{fig:standard-obs}}, \cref{fig:custom-obs} and the direct versus rotated model comparison in {\cref{fig:comparison-of-O-and-OU}}. For the TFIM, wall-clock time increases with increasing $\lambda$ (scaling roughly as $\lambda^2$) as expected from prior studies~\cite{albash2017OffdiagonalExpansionQuantum, gupta2020PermutationMatrixRepresentation, ezzell2025universal}. Similarly, simulation time also scales as a modest power law with $\beta$ ($\sim 0.72 \beta^{1.00}$ for the direct model and $\sim 6.83 \beta^{1.11}$ for the rotated model) as expected generally~\cite{barash2024QuantumMonteCarlo,ezzell2025universal}. For a more detailed discussion of wall-clock times and scalings in general, we refer readers to numerous prior studies~\cite{albash2017OffdiagonalExpansionQuantum,gupta2020PermutationMatrixRepresentation,barash2024QuantumMonteCarlo,ezzell2025universal}---Refs.~\cite{albash2017OffdiagonalExpansionQuantum,gupta2020PermutationMatrixRepresentation} in particular did direct comparisons to both path integral QMC and SSE, finding that PMR-QMC is much faster in the study of disordered or glassy systems.

 In this work, however, we observe one novel feature in the resource estimates shown in the right plot of {\cref{fig:timing-estimates}}. Specifically, the rotated model takes more effort to simulate than the direct model, even though the average quantum dimension $\avg{q}$ remains the same. One direct reason for this is that $O_U$ contains substantially more terms ($K = 295$) compared to the direct $O$ (which has $K = 5$), and estimation complexity depends on $K$ as summarized in {\cref{tab:static-complexity}}.

 }

\section{Summary and conclusions}\label{Xsec49}\label{Xsec51-A.2}
\label{sec:conclusion}

In this work, we addressed the question: \emph{Can quantum Monte Carlo estimate the thermal expectation value of an arbitrary operator?} Within the permutation matrix representation (PMR) formulation of QMC, we showed that the answer is yes. To show this, we first developed a rigorous formulation of PMR based on a carefully constructed group basis with a regular natural group action. By expressing both the Hamiltonian and observables in this common basis and performing an off-diagonal series expansion, we derived a formal estimator for arbitrary operators within PMR-QMC. However, we showed that this formal estimator generically involves division by zero.

Using an illustrative example, we demonstrated that such formal divisions by zero can lead to biased estimates in practice. We then identified and defined a class of operators in \emph{canonical form}, for which unbiased estimators arise naturally that avoid any pathologies by construction. Combining group-theoretic arguments with this observation, we showed constructively that any operator admits an unbiased estimator. As a result, arbitrary static operators can be estimated within PMR-QMC.

We next extended our framework to important classes of dynamical operators, including imaginary-time correlation functions and integrated susceptibilities. Interestingly, the technical underpinning of this extension mostly follows from recently derived divided difference properties~\cite{ezzell2025universal,zeng2025inequalities}, which may be of independent interest. These dynamic estimators generalize previously known PMR-QMC estimators for energy and fidelity susceptibilities~\cite{ezzell2025universal} and enable the study of a broad class of dynamical and spectral quantities that are often inaccessible to conventional numerical approaches.

Finally, we demonstrated the practical power of our approach through numerical experiments on the transverse-field Ising model and on a highly non-local random toy model with up to one hundred spins. In both cases, we accurately estimated random non-local observables, highlighting the robustness and generality of the method. Our implementation is straightforward to use: a user specifies only the Hamiltonian and observable as text files of Pauli strings, from which ergodic, detailed-balance-preserving QMC updates are automatically constructed. Further, the estimators are written to be compatible with arbitrary observables. To facilitate reproducibility and interesting applications of our method, all data and code developed in this work is open source~\cite{ezzell2025code}.

At present, our implementation targets spin-1/2
Hamiltonians and observables. Since the underlying derivations are fully general, a natural extension is to implement these estimators in PMR-QMC frameworks compatible with higher-spin systems~\cite{babakhani2025quantum} or bosonic models~\cite{akaturk2024quantum}. Moreover, our numerical examples primarily serve as proofs of principle, so an important direction for future work is to exploit this generality to explore previously inaccessible physical phenomena. Finally, automating the construction of canonical operators would further streamline the use of our approach and broaden its applicability.

\section*{Data and code availability statement}
 Source code to reproduce all numerical experiments in this work is available in an open source GitHub library as a citable Zenodo repository~\cite{ezzell2025code}, \href{https://zenodo.org/records/17786518}{https://zenodo.org/records/17786518} as well as a GitHub repository, \href{https://github.com/naezzell/advmeaPMRQMC}{https://github.com/naezzell/advmeaPMRQMC}. Our library includes detailed user instructions as well as all data and scripts necessary to reproduce all plots in this work.

\section*{Declaration of competing interest}
The authors declare that they have no known competing financial interests or personal relationships that could have appeared to influence the work reported in this paper.

\section*{Acknowledgments}
{  We acknowledge the contributions of an anonymous reviewer whose ideas helped us show conclusively that all operators have an unbiased estimator in PMR-QMC. We also thank Lev Barash for many helpful discussions, assistance with our code, and comments on our manuscript. 
} IH acknowledges support by the Office of Advanced Scientific Computing Research of the U.S. Department of Energy under Contract No DE-SC0024389. 
N.E. acknowledges partial support from the Gold Family Fellowship, the Graduate Student Fellowship awarded by the USC Department of Physics and Astronomy in the Dornsife College of Letters, Arts, and Sciences, the U.S. Department of Energy (DOE)
Computational Science Graduate Fellowship under
Award No. DE-SC0020347 and the ARO MURI grant
W911NF-22-S-000 during parts of this work.
This project was also supported in part by NSF award \#2210374.
The authors acknowledge the Center for Advanced Research Computing (CARC) at the University of Southern California for providing the computing resources used in this work.

\appendix

\section{Code implementation details}\label{Xsec52-B.1}
\label{app:code-details}
Our code is open source~\cite{ezzell2025code} and builds upon open source PMR-QMC code to simulate arbitrary spin-1/2 Hamiltonians~\cite{barash2024QuantumMonteCarlo, barash2024PmrQmcCode}. As such, it is straightforward to use. We describe the procedure used to generate the results in {\cref{sec:numerical-demo}} below, including both the QMC and exact calculation details.

\subsection{Building and running a QMC simulation}\label{Xsec50}\label{Xsec53-B.2}
Initially, we prepared a file, \verb^H.txt^, which contains a human-readable Pauli description of the TFIM on a square lattice. For example,
\begin{verbatim}
-1.0 1 Z 2 Z
0.5 1 X
0.5 2 X
\end{verbatim}
encodes the 2 qubit TFIM, $H = - Z_1 Z_2 + 0.5 X_1 + 0.5 X_2$. We then edit a simple \verb^parameters.hpp^ file. This contains simulation parameters such as the number of Monte--Carlo updates, inverse temperature, and standard observables we wish to estimate. As an example, \verb^#define MEASURE_HDIAG_CORR^ is flag that instructs our program to estimate $\avg{\Hdiag(\tau) \Hdiag}$. To encode non-standard, custom observables, such as $A,B$ in {\eqref{eq:verify-A}} and \eqref{eq:verify-B}, one simply writes an \verb^A.txt^ and \verb^B.txt^ in the same format as \verb^H.txt^ described above.

One can then compile and execute a fixed \verb^C++^ program that reads in the above input files. When the simulation finishes, a simulation summary containing observable summary lines,
\begin{verbatim}
Total of observable #1: A
Total mean(O) = -0.833214286
Total std.dev.(O) = 0.00599454762
\end{verbatim}
alongside various meta-data such as allocated CPU time is printed to the console. Among this meta-data includes,
\begin{verbatim}
Total mean(sgn(W)) = 1
Total std.dev.(sgn(W)) = 0

Testing thermalization
Observable #1: A, mean of std.dev.(O) = 0.016684391,
 std.dev. of mean(O) = 0.0127192673: test passed
\end{verbatim}
which are the average sign of PMR-QMC weights and results of simple thermalization testing if at least 5 MPI cores are used, respectively. Furthermore, our code automatically computes derived quantities via standard jackknife binning analysis when relevant~\cite{berg2004introduction}. For example, if the standard observables $\avg{H^2}$ and $\avg{H}$ are both estimated, then our code automatically estimates the specific heat, $C_v = \beta^2(\avg{H^2} - \avg{H}^2)$, by default.

More details on how to use our code and reproduce the experiments and plots performed in this manuscript are contained in the README file of our open-source code~\cite{ezzell2025code}.

\subsection{Details of our exact calculations}\label{Xsec51}\label{Xsec54-A.2}

{ For the $3 \times 3$ TFIM results, we compare our QMC simulation results to ``exact (numerical) calculations'' for verification. By this, we refer to direct numerical linear algebra computations up to machine precision using the Python libraries Numpy}~\cite{harris2020array} { and SciPy}~\cite{virtanen2020scipy}. Specifically, we compute $e^{-\beta H}$ using the SciPy function \verb`scipy.linalg.expm` and then $\Tr[e^{-\beta H}]$ with the NumPy function \verb`numpy.linalg.trace`. We can thus compute and store the thermal state $\rho_\beta = e^{-\beta H} / \Tr[e^{-\beta H}]$ which can use to compute $\Tr[O \rho_\beta]$ for any static operator $O$ via matrix multiplication and the trace. For the dynamic observables, we performed numerical integration using SciPy's \verb`scipy.integrate.quad` and verified that the estimated integration error was small. Additional implementation details are provided in our open source code~\cite{ezzell2025code}, specifically in the \verb`utils/exact_calculations.py` file.

\section{Divided difference integral relation proofs}\label{Xsec55-B}
\label{app:DDE-integral-proofs}
We provide proofs of the claimed integral DDE relations from the main text.

\subsection{The convolution theorem or energy susceptibility integral}\label{Xsec52}\label{Xsec56-B.1}
We show the claim,
\begin{equation}
 \label{app-eq:convolution-theorem}
 \int_0^\beta e^{-\tau [x_{j+1}, \ldots, x_q]} e^{-(\beta-\tau) [x_0, \ldots, x_j]} \dtau = - e^{-\beta [x_0, \ldots, x_q]}.
\end{equation}
A concise proof given by the present authors in Ref.~\cite{zeng2025inequalities} shows this via the convolution property of the Laplace transform. We provide a slightly expanded version here for clarity. For convenience, we define the functions
\begin{align}
 f(t) &= e^{-t [x_{j+1}, \ldots, x_q]} \\
 g(t) &= e^{-t [x_0, \ldots, x_j]}
\end{align}
The convolution of these functions,
\begin{align}
 (f * g)(t) &= \int_0^t f(\tau) g(t - \tau) \dtau
\end{align}
is by construction the integral we want to evaluate for $t = \beta$. Let $\lap{f(t)}$ denote the Laplace transform of $f(t)$ from $t \rightarrow s$. By the convolution property of the Laplace transform and {\eqref{eq:laplace-of-dd}}, we find
\begin{align}
 \lap{(f * g)(t)} &= \lap{f(t)} \lap{g(t)} \\
 &= \left( \frac{ (-1)^{q-j-1} }{ \prod_{l=j+1}^q (s + x_l) } \right) \left( \frac{ (-1)^{j} }{ \prod_{m=0}^j (s + x_m) }\right) \\
 &= \frac{ (-1)^{q-1} }{ \prod_{l=0}^q (s + x_l) } \\
 &= \lap{ - e^{-t [x_0, \ldots, x_q]} }.
\end{align}
Taking the inverse Laplace transform of the first and final expression proves the claimed integral relation.

We note that a direct proof by series expanding both DDE via {\eqref{eq:dd-as-series-expansion}}, integrating term-by-term, regrouping, and simplifying is also possible.

\subsection{The fidelity susceptibility integral}\label{Xsec53}\label{Xsec57-B.2}
We show the claim,
\begin{align}
 \label{app-eq:fs-integral}
 \int_0^{\beta/2} \tau e^{-\tau [x_{j+1}, \ldots, x_q]} e^{-(\beta-\tau) [x_0, \ldots, x_j]} \dtau = \sum_{r=0}^j e^{-\frac{\beta}{2}[x_0, \ldots, x_r]} \sum_{m = j + 1}^q e^{-\frac{\beta}{2} [x_r, \ldots, x_q, x_m]},
\end{align}
by providing an expounded version of the proof first shown in Ref.~\cite{ezzell2025universal}. In order to show this, we first prove,
\begin{equation}
 \label{app-eq:repeated-sum-simplification}
 t e^{-t [x_0, \ldots, x_j]} = - \sum_{m=0}^j e^{-t [x_0, \ldots, x_j, x_m]},
\end{equation}
via the Laplace transform. Let $\lap{ f(t)}$ denote the Laplace transform from $t \rightarrow s$. From the frequency-domain property of the Laplace transform, {\eqref{eq:laplace-of-dd}}, and algebra,
\begin{align}
 \lap{t e^{-t [x_0, \ldots, x_j]} } &= -\partial_{s} \lap{ e^{-t [x_0, \ldots, x_j]} } \\
 &= -\partial_s \left( \frac{ (-1)^j }{ \prod_{k=0}^j (s + x_k) } \right) \\
 &= -\sum_{m=0}^j \frac{(-1)^{j+1}}{ (s + x_m) \prod_{k=0}^j (s + x_k)} \\
 &= \mathcal{L}\left\{ -\sum_{m=0}^j e^{-t [x_0, \ldots, x_j, x_m]} \right\}.
\end{align}
Taking the inverse Laplace transform of the first and last expression, we have thus shown {\eqref{app-eq:repeated-sum-simplification}}. We can now show {\eqref{app-eq:fs-integral}} in three steps,
\begin{align}
 &\int_0^{\beta/2} \tau e^{-\tau [x_{j+1}, \ldots, x_q]} e^{-(\beta-\tau) [x_0, \ldots, x_j]} \dtau = \sum_{r=0}^j e^{-(\beta/2 - \tau) [x_0, \ldots, x_r]}\notag\\&\quad \int_0^{\beta/2} \tau e^{-\tau [x_{j+1}, \ldots, x_q]} e^{-(\beta/2 - \tau) [x_r, \ldots, x_j]}\dtau \\
 &= \sum_{r=0}^j \sum_{m=j+1}^q e^{-(\beta/2 - \tau) [x_0, \ldots, x_r]} \int_0^{\beta/2} e^{-\tau [x_{j+1}, \ldots, x_q, x_m]} e^{-(\beta/2 - \tau) [x_r, \ldots, x_j]} \dtau \\
 &= \sum_{r=0}^j e^{-(\beta/2 - \tau) [x_0, \ldots, x_r]} \sum_{m=j+1}^q e^{-\frac{\beta}{2} [x_r, \ldots, x_q, x_m] },
\end{align}
where the first line follows from the Leibniz rule ({\eqref{eq:leibniz-rule}}), the second by {\eqref{app-eq:repeated-sum-simplification}}, and final line by {\eqref{app-eq:convolution-theorem}} for $t = \beta/2$.

We note that a more elaborate direct evaluation of the left-hand-side of {\eqref{app-eq:fs-integral}} is also possible. This direct approach begins by evaluating
\begin{equation}
 \int_0^{\beta/2} \tau e^{-\tau [x_{j+1}, \ldots, x_q]} e^{-(\beta/2-\tau) [x_0, \ldots, x_j]} \dtau
\label{Xeqn120-B.18}
\end{equation}
by expanding the DDEs via {\eqref{eq:dd-as-series-expansion}}, integrating term-by-term, applying divided difference tricks, regrouping, and simplifying. Given such an explicit integration, one can then apply the Leibniz rule to the $e^{-(\beta-\tau) [\ldots]}$ result as we did above to derive an alternative explicit, closed form solution to {\eqref{app-eq:fs-integral}}. However, this result is more complicated and more difficult to derive. We note that when attempting to simplify our result using divided difference relations, we obtained an elaborate proof of the relation,
\begin{equation}
 \int_0^\beta \tau \avg{O(\tau) O} \dtau = \frac{\beta}{2} \int_0^\beta \avg{O(\tau) O} \dtau.
\label{Xeqn121-B.19}
\end{equation}
This integral relation can alternatively be derived more straightforwardly using integration by parts or direct integration of matrix elements without invoking PMR-QMC or divided differences. This observation illustrates that the direct approach, while valid, is unnecessarily circuitous.

\bibliography{bibliography}

\end{document}